\documentclass[onecolumn,a4paper,unpublished,11pt]{quantumarticle}
\pdfoutput=1

\usepackage[utf8]{inputenc}
\usepackage{graphicx}
\usepackage{cellspace}
\usepackage{hyperref}
\usepackage{amsmath}
\usepackage{amssymb}
\usepackage{xcolor}
\usepackage{amsthm}
\usepackage{subcaption}
\usepackage{cleveref}
\usepackage{tikz}
\usepackage{quantikz}
\usepackage[numbers,sort&compress]{natbib}
\usepackage{makecell}

\newcommand{\CXSWAP}{\mathrm{CXSWAP}}

\newcommand{\CZSWAP}{\mathrm{CZSWAP}}
\newcommand{\CPSWAP}{\mathrm{CPSWAP}}
\newcommand{\iSWAP}{\mathrm{iSWAP}}
\newcommand{\CX}{\mathrm{CX}}
\newcommand{\SWAP}{\mathrm{SWAP}}   
\newcommand{\Z}{\mathbb{Z}}
\newcommand{\SPAN}{\mathrm{SPAN}}
\newcommand{\modulo}{\mathrm{mod\;}}

\bibliographystyle{quantum}

\newtheorem{theorem}{Theorem}
\newtheorem{proposition}{Proposition}
\newtheorem{corollary}{Corollary}

\title{Directional Codes: a new family of quantum LDPC codes on hexagonal- and square-grid connectivity hardware}

\author{Gy\"{o}rgy P. Geh\'{e}r}
\email{george.geher@riverlane.com, gehergyuri@gmail.com}
\affiliation{Riverlane, St.~Andrew's House, 59 St.~Andrew's Street, Cambridge CB2 3BZ, United Kingdom}
\affiliation{All three authors contributed equally.}
\author{David Byfield}
\email{david.byfield@riverlane.com, david.s.byfield@gmail.com}
\affiliation{Riverlane, St.~Andrew's House, 59 St.~Andrew's Street, Cambridge CB2 3BZ, United Kingdom}
\affiliation{All three authors contributed equally.}
\author{Archibald Ruban}
\email{archibald.ruban@riverlane.com, archibald.ruban@gmail.com}
\affiliation{Riverlane, St.~Andrew's House, 59 St.~Andrew's Street, Cambridge CB2 3BZ, United Kingdom}
\affiliation{All three authors contributed equally.}

\begin{document}

\maketitle

\begin{abstract}
Utility-scale quantum computing requires quantum error correction (QEC) to protect quantum information against noise. Currently, superconducting hardware is a promising candidate for achieving fault tolerance due to its fast gate times and feasible scalability. However, it is often restricted to two-dimensional nearest-neighbour connectivity, and therefore the variety of quantum low-density parity-check (qLDPC) codes that can be implemented on it without sacrificing QEC performance is believed to be greatly restricted. In this paper we construct a new family of qLDPC codes, which we call ``directional codes'', that outperforms the rotated toric code (RTC) while satisfying the connectivity requirements of the widely adopted square-grid, and some even the sparser hexagonal-grid, on a torus. The key idea is to utilise the iSWAP gate -- a native gate demonstrated on superconducting qubits -- to construct circuits that measure the stabilisers of these qLDPC codes without the need for additional connections. We numerically evaluate the performance of directional codes, encoding four, six, twelve and eighteen logical qubits, using a common superconducting-inspired circuit-level Pauli noise model. We also compare them to the RTC and to the bivariate bicycle (BB) codes, currently the two most popular quantum LDPC code families. As a concrete example, when evaluated with the Tesseract decoder with short beam setting, the best directional code family investigated achieves the same logical error rate as the RTC at physical error rate $p=10^{-3}$ but requires only a quarter to a third of the number of physical qubits. Our discovery opens a novel direction in QEC code design, suggesting that complex high-connectivity hardware may not be necessary for low-overhead fault-tolerant quantum computation.
\end{abstract}

\section{Introduction}

Quantum computers have the potential to solve problems which would be intractable on a classical computer \cite{Feynman1986,Nielsen_Chuang_2010,Preskill2018quantumcomputingin,shors}. However, qubits are inherently very noisy (much noisier than classical bits), and therefore it is largely believed that quantum error correction (QEC) is an essential part of achieving useful quantum computation \cite{earl-review,terhal-review,Roffe03072019}. In QEC we use a number of physical qubits to encode a smaller number of logical qubits -- referred to as a QEC code -- by repeatedly measuring a set of Pauli-product operators (or stabilisers) on the physical qubits. If we gain enough information about the errors on the physical qubits, we can then protect the logical qubits and ultimately implement algorithms with them \cite{ft1,ft2,ft3}.

Superconducting qubits are a strong contender for useful quantum computing due to their typically very fast gates \cite{google2023,qec_below_threshold_google,delft-softinfo,rigetti-paper,alec-dynamic-demonstration,fujitsu-paper} and potential scalability \cite{google-roadmap,IBM_roadmap}. However, they usually come with the draw-back of restricted connectivity, meaning a two-qubit gate can be executed only between a limited number of pairs of qubits. Typically, each qubit is connected to at most four or three others, as is the case for e.g. the ``square-grid'' or ``hexagonal-grid'' connectivities, which have so far been the most commonly realised connectivities for quantum hardware demonstrations \cite{google2023,qec_below_threshold_google,delft-softinfo,rigetti-paper,alec-dynamic-demonstration,ibm-experiment}. This constraint poses a significant challenge for QEC code design, since on most hardware a syndrome extraction circuit \cite{fowler-surface-codes,fowler-surface-code-computation} is needed to measure a Pauli-product operator on multiple qubits. The standard, sometimes called ``bare-ancilla'', syndrome extraction circuit does this via the measurement of an ancilla qubit that is connected to the qubits that support the operator. Therefore the weight of stabilisers of QEC codes implementable on such hardware with this method is limited by the degree of connectivity of the hardware \cite{ldpc2}.

Quantum low-density parity-check (qLDPC) codes \cite{gottesman-ldpc,ldpc2,Breuckmann_2021} are an exciting class of QEC codes that promise to greatly outperform the toric and planar surface codes \cite{pantaleev-good-ldpc,tanner-ldpc}. However, implementing them on a superconducting device is usually challenging as they often require connections beyond nearest-neighbour (i.e. not part of the square-grid) and an increased degree of connectivity (i.e. each qubit needs to be connected to more than four qubits), see e.g. \cite{IBM_roadmap,qldpc-demonstration}. The rotated planar surface code (RPSC), defined on a plane of qubits, and the rotated toric code (RTC), defined on a torus of qubits, are two of the most popular and best performing qLDPC codes that can be implemented on a square-grid connectivity device. Naturally, adding further connections gives the flexibility of implementing qLDPC codes with higher encoding rates. A recent example of this are bivariate bicycle (BB) codes that require two additional and non-local connections at each qubit location, increasing the degree of connectivity from four to six, see \cite{Bravyi2024} and also \cite{mac-terhal-bb-codes}. However, it is worth noting that introducing extra connections has the trade-off of making the hardware noisier, see e.g. \cite{alec-dynamic-demonstration, tesseractdecoder}, which may degrade the QEC performance of the codes that require these additional connections. Therefore, novel ideas for implementing better performing qLDPC codes without the need for extra connections are desirable.

In this paper, we construct ``directional codes'', a new family of qLDPC codes that leverages the iSWAP gate to be implemented on a toric square-grid connectivity device while requiring significantly less physical qubits than the RTC to achieve the same logical error rate. Additionally, the syndrome extraction circuit of directional codes uses as many layers of two-qubit gates per QEC round as the weight of the stabilisers. The key idea in the construction of these codes is that iSWAP exchanges the state of the two qubits it acts on. This is generally considered a hindrance, as it makes circuit construction more complex \cite{MBG,alec-dynamic-demonstration}. However, as we show here, we may exploit this trait of the iSWAP gate to our advantage.

The controlled-phase gate CZ, and its local-Clifford equivalent variants, are currently the more common native entangling gates in superconducting hardware. This is partly because it is well-known how to build syndrome extraction circuits with them, and partly because CZ gates have benefited from more than a decade of sustained hardware optimisation \cite{cz_2014,cz_2019,cz_2019_2,cz_2020,experimental-iswap,cz_2021_2,alec-dynamic-demonstration}. However, the iSWAP gate is a competitive alternative, which has already been demonstrated natively on superconducting hardware \cite{alec-dynamic-demonstration,rigetti-iswap,experimental-iswap,experimental-iswap2}. In particular, recent tunable-coupler experiments have shown that parasitic $ZZ$ interactions and leakage can be strongly suppressed, yielding a $ZZ$-free iSWAP gate with reported two-qubit interaction fidelity of $99.87 \pm 0.23\%$ \cite{experimental-iswap}. Additionally, the iSWAP and CZ gates were compared directly in \cite{alec-dynamic-demonstration} as native entangling gates used in the syndrome extraction circuit of dynamic variants of the RPSC. Even though the device used was optimised for the CZ gate, the iSWAP-variant showed overall competitive error suppression and budget. In particular, on the one hand, the iSWAP gate indicated superior leakage error suppression compared to the baseline CZ gate implementation, without active leakage removal (DQLR). On the other hand, it is also worth noting that the iSWAP gate showed a significantly larger c-phase error, which would need further optimisation. All these recent developments make a strong case for studying iSWAP-based circuits. Furthermore, in \cite[Appendix G]{alec-dynamic-demonstration}, simulations of quantum processors are performed that indicate that reducing the connectivity of qubits on a superconducting chip may suppress one- and two-qubit gate errors. This makes a strong case for further exploration of the key idea of this paper, namely utilising iSWAP gates to relax hardware connectivity requirements, as an alternative to the costly and complex development of high-connectivity hardware, like that proposed in \cite{IBM_roadmap}.

The structure of our paper is as follows. In the next section, we explain our method to construct circuits that measure the stabilisers of directional codes, and provide a more rigorous explanation in \Cref{sec:app-method,sec:app-logicals}. We first explain the circuit construction on an infinite plane, and then wrap around a (possibly twisted) torus to obtain a Calderbank--Shor--Steane (CSS) code. In \Cref{sec:simulations}, we present simulation results for directional codes that have stabilisers of weight-$7$ and encode twelve logical qubits.  In \Cref{sec:app-sim_details_additional_results}, we give more details on the simulation method and present additional numerical results, including for other directional codes with weight-$5$, -$6$ and -$8$ stabilisers that encode four, six and eighteen logical qubits, respectively. Our numerical results indicate that directional codes outperform the RTC under circuit-level Pauli noise. For instance, at physical error rate $p=10^{-3}$, decoded with the Tesseract decoder with short beam setting, the best directional code family achieves the same logical error rate as the RTC while using a quarter to a third of the number of physical qubits. In \Cref{sec:discussion}, we conclude the paper with some discussion of our results and potential future work. This includes adding boundaries to enable implementation on planar rather than toric hardware.


\section{Code construction}
\label{sec:code-construct}
\begin{figure}[!htb]
    \centering
    \hfill
    \begin{subfigure}{0.15\textwidth}
        \centering
        \includegraphics[width=0.4\linewidth]{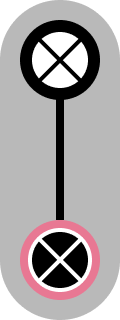}
        \caption{}
        \label{fig:cxswap_qubit_pair}
    \end{subfigure}
    \hfill
    \begin{subfigure}{0.70\textwidth}
        \centering
        \includegraphics[width=1.0\linewidth]{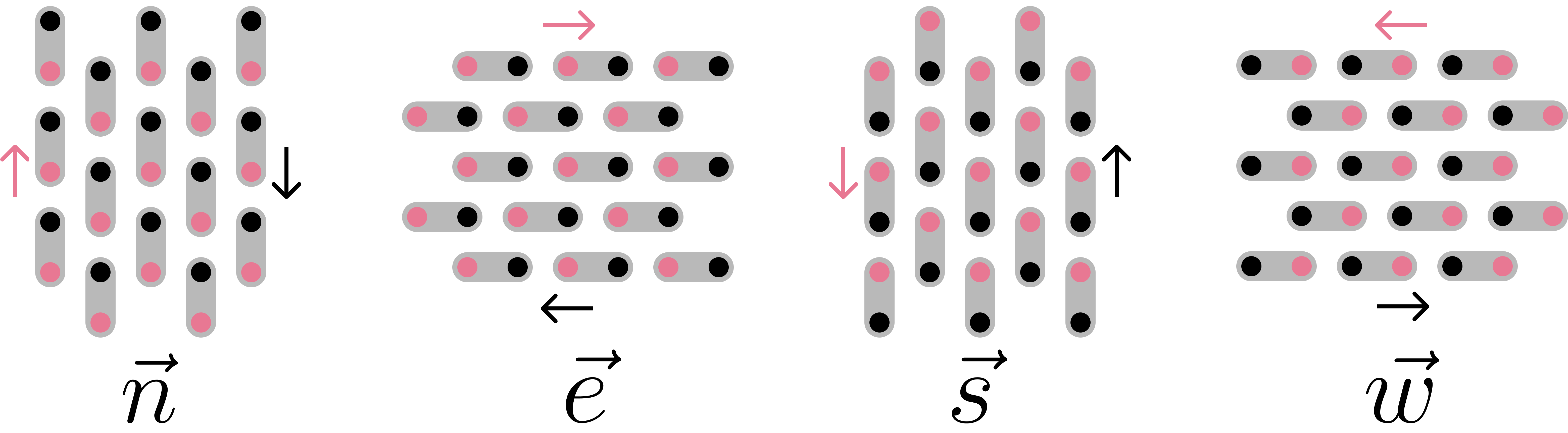}
        \caption{}
        \label{fig:directional_layers_with_arrows}
    \end{subfigure}
    \hfill
    \caption{The four types of CPSWAP layers used in the syndrome extraction of directional codes, each corresponding to a direction. (a) A zoomed-in view of the data-ancilla qubit pairs in (b), with overlaid CXSWAP gate. The bottom qubit is the ancilla qubit where the control of the gate is, and the top qubit is the data qubit where the target is. (b) The four types of CPSWAP layers forming the building blocks of the syndrome extraction circuit construction corresponding to $\vec{n}$ (north), $\vec{e}$ (east), $\vec{s}$ (south), and $\vec{w}$ (west) directions, respectively. Data and ancilla qubits are shown as black and red nodes, respectively. Red arrows indicate the direction in which the ancilla qubits uniformly move by one unit due to the action of the layer. The data qubits uniformly move in the opposite direction also by one unit, indicated by black arrows. Each grey-highlighted pair of qubits are involved in a CPSWAP gate in that layer. The CPSWAP's control is always the ancilla and the target is always the data qubit, as depicted in (a) for the CXSWAP case.}
    \label{fig:directional_layers}
\end{figure}

We construct directional codes by defining their associated syndrome extraction circuits, which by design satisfies the square-grid connectivity constraint, and some of them even the hexagonal-grid. Recall that the CXSWAP gate on control qubit $Q_0$ and target qubit $Q_1$ is defined by $\CXSWAP(Q_0, Q_1)|\psi\rangle = \SWAP(Q_0, Q_1)\CX(Q_0, Q_1)|\psi\rangle$, meaning it is equivalent to applying a CX gate followed by a SWAP gate. The CZSWAP is defined similarly, and for brevity we will refer to these two gates as CPSWAP gates. It is well-known that CPSWAP gates are local-Clifford equivalent to the iSWAP gate \cite[Section A.5]{MBG}; more precisely, we can find one-qubit unitary Clifford gates $A_0$, $A_1$, $B_0$, $B_1$ such that $\CPSWAP(Q_0, Q_1) = A_0\otimes A_1 \cdot \iSWAP(Q_0, Q_1) \cdot B_0 \otimes B_1$. Therefore, if we construct a syndrome extraction circuit in terms of CPSWAP gates, obtaining one that uses iSWAPs instead is a straightforward compilation that does not change the qubit connections required. For benchmarking directional codes we use circuits compiled to iSWAP, see \Cref{sec:app-simulation_details}. However, for describing our code construction, it is more natural to use CPSWAP gates because we can think of a layer of CPSWAPs as being a layer of controlled-Pauli gates up to permutation of the quantum states of the physical qubits. As such, each CPSWAP layer is effectively dynamically altering the connectivity -- a feature central to our construction.

Consider the infinite square-grid lattice $\Z^2$ on which we lay out data and ancilla qubits in a chequerboard pattern. More precisely, define $\Z^2_{\mathrm{data}} := \{(x,y)\in\Z^2 \colon x\equiv y \; (\mathrm{mod} \; 2)\}$ and $\Z^2_{\mathrm{anc}} := \Z^2\setminus\Z^2_{\mathrm{data}}$. Note that any two qubits that are part of the same sublattice are not connected. Each ancilla qubit is connected to four data qubits positioned in the north: $\vec{n}:=(0,1)$, east: $\vec{e}:=(1,0)$, south: $\vec{s}:=(0,-1)$ and west: $\vec{w}:=(-1,0)$ directions. Let us pick one of these directions, say north, and consider the following circuit layer of CPSWAP gates: for each ancilla qubit, apply a CPSWAP between it as control and the data qubit north of it as target, see \Cref{fig:cxswap_qubit_pair}. Due to the SWAP part of the gates, we may think of each qubit as being interchanged with its pair as an effect of this layer. In this way, the sublattice of ancilla qubits is moved north uniformly by one unit, while the sublattice of data qubits is moved uniformly in the opposite direction, south, by one unit -- see \Cref{fig:directional_layers_with_arrows}. Consequently, each ancilla qubit is now connected to three data qubits it was not connected to before, while it remains connected to the data qubit that has just been involved in the same CPSWAP gate. Therefore, even though the relative location of ancilla/data qubits did not change, the connectivity structure between the two sublattices changed considerably. This is a crucial observation that is the core idea for our code construction, and which also holds for the other three directions. We associate to each of these four CPSWAP layers an element of $\{ \vec{n}, \vec{e}, \vec{s}, \vec{w} \}$ (see \Cref{fig:directional_layers_with_arrows}).

\begin{figure*}[!htb]
    \centering
    \includegraphics[width=1.0\linewidth]{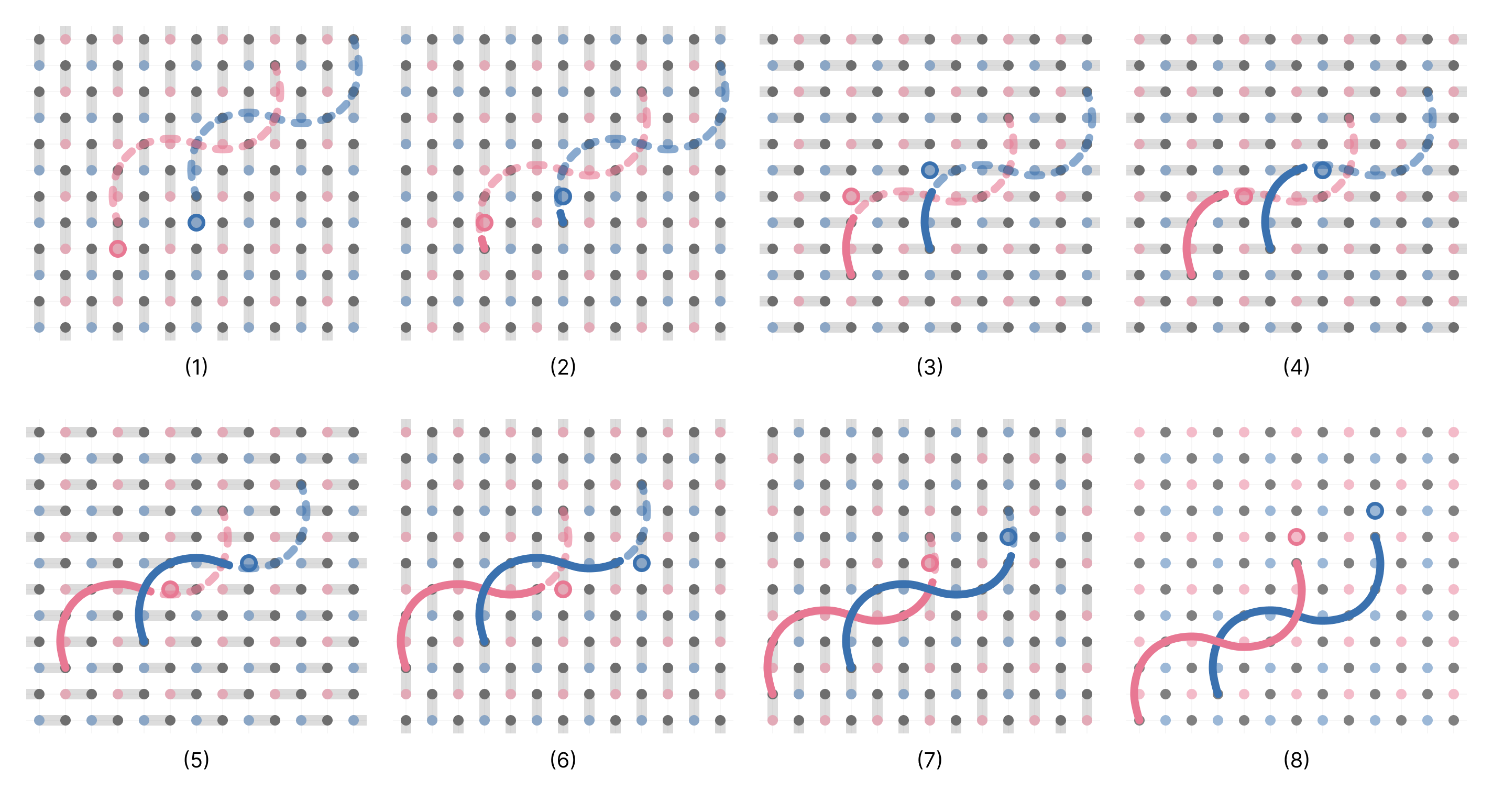}
    \caption{The entangling layers of the syndrome extraction circuit that measures the stabilisers of an $N^2E^3N^2$-code using CPSWAP gates (see also \Cref{thm:Thm1}). The ancilla qubit resets and measurements take place before and after the layers shown, respectively. Data qubits are shown as black dots and ancilla qubits corresponding to $X$-/$Z$-stabilisers are depicted as red/blue dots. Each grey line connecting a data-ancilla qubit pair represents a CPSWAP between the two qubits, with the ancilla/data qubit always being the control/target qubit. A single $X$- and a single $Z$-stabiliser are shown in red and blue, respectively, as snake-like shapes supported on all the data qubits they touch. Their respective ancilla qubit is shown circled in red and blue respectively. The solid part of each snake-like shape represents the part of the stabiliser that has already been entangled with the ancilla qubit, and the dashed part the rest of the stabiliser. All panels are shown in the hardware's frame of reference. The data qubit sub-lattice (and stabilisers) move uniformly in one direction, and the ancilla qubit sub-lattice in the opposite direction, due to the action of the SWAP-part of each CPSWAP gate. Panels (1)--(2) show the first two north layers, panels (3)--(5) the subsequent three east layers, panels (6)--(7) the last two north layers, and panel (8) the final configuration of the qubits and stabilisers.}
    \label{fig:N2E3N2_syndrome_extraction}
\end{figure*}

By applying CPSWAP layers according to a sequence of directions, we are able to measure stabilisers of arbitrary weight that all have the same shape. We do this by adapting the ``bare-ancilla'' syndrome extraction circuit method to CPSWAP gates. As an example, allocate the ancilla qubits initially on $\Z^2_{\mathrm{anc}}$ such that we have alternating rows (and columns) of $X$- and $Z$-type, just as for the surface code (see \Cref{fig:N2E3N2_syndrome_extraction} and \Cref{fig:N2E3N2_on_parallelogram}). For each CPSWAP gate, we always apply CXSWAP if the gate interacts with an $X$-ancilla qubit, and CZSWAP otherwise. Consider the circuit where we first reset all ancilla qubits in the $|+\rangle$ state, then apply a sequence of $\ell$ CPSWAP layers (for $\ell$ a positive integer) each corresponding to an element of $\{\vec{n}, \vec{e}, \vec{s}, \vec{w}\}$ (see \Cref{fig:directional_layers_with_arrows}), and finally measure the ancilla qubits in the $X$ basis. This circuit satisfies the square-grid connectivity and, provided that the sequence of CPSWAP layers fulfils the conditions of \Cref{thm:Thm1} laid out in \Cref{sec:app-method}, measures independently and simultaneously one stabiliser per ancilla qubit. All such stabilisers are of weight $\ell$, have the same shape and scheduling, and are composed either of $X$ or $Z$ Pauli terms only, as determined by the $X$- or $Z$-type of the associated ancilla qubit. We give more details on this in \Cref{sec:app-method} -- in particular, \Cref{thm:Thm1} characterises which allocations of ancilla qubits give rise to a valid circuit, given CPSWAP layers corresponding to a sequence of directions.

Without loss of generality, we may only consider sequences that begin with some number of $\vec{n}$ layers followed by some number of $\vec{e}$ layers. In this paper, all directional codes explicitly constructed and simulated have the following structure. Let $\alpha\geq 1,\beta\geq 2$ be integers, allocate the data and ancilla qubits as described above, and consider the stabilisers obtained with the following choice of \mbox{CPSWAP} layers: $\alpha$ many $\vec{n}$ layers, followed by $\beta$ many $\vec{e}$ layers and finally $\alpha$ many $\vec{n}$ layers. As shown in \Cref{cor:NaEbNa} of \Cref{sec:app-method}, these sequences of directions satisfy the conditions of \Cref{thm:Thm1}. The resulting stabilisers have weight $2\alpha+\beta$, and we call them $N^\alpha E^\beta N^\alpha$-stabilisers. \Cref{fig:N2E3N2_syndrome_extraction} is an illustration of the $(\alpha,\beta)=(2,3)$ case, which has weight-$7$ stabilisers.

\begin{figure*}[!htb]
    \centering
    \begin{subfigure}{0.45\textwidth}
        \centering
        \includegraphics[width=1.0\linewidth]{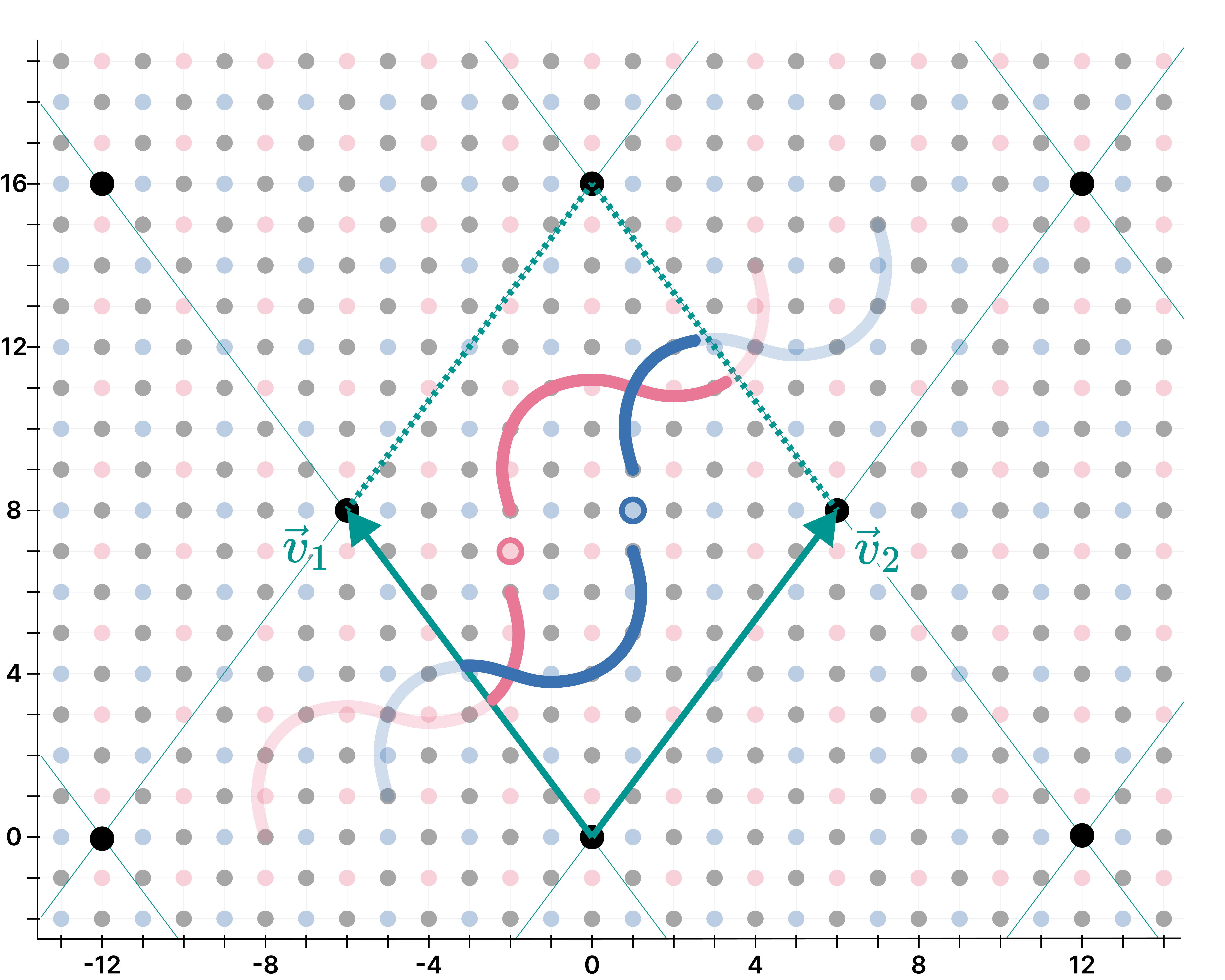}
        \caption{}
        \label{fig:N2E3N2_rotated}
    \end{subfigure}
    \begin{subfigure}{0.45\textwidth}
        \centering
        \includegraphics[width=1.0\linewidth]{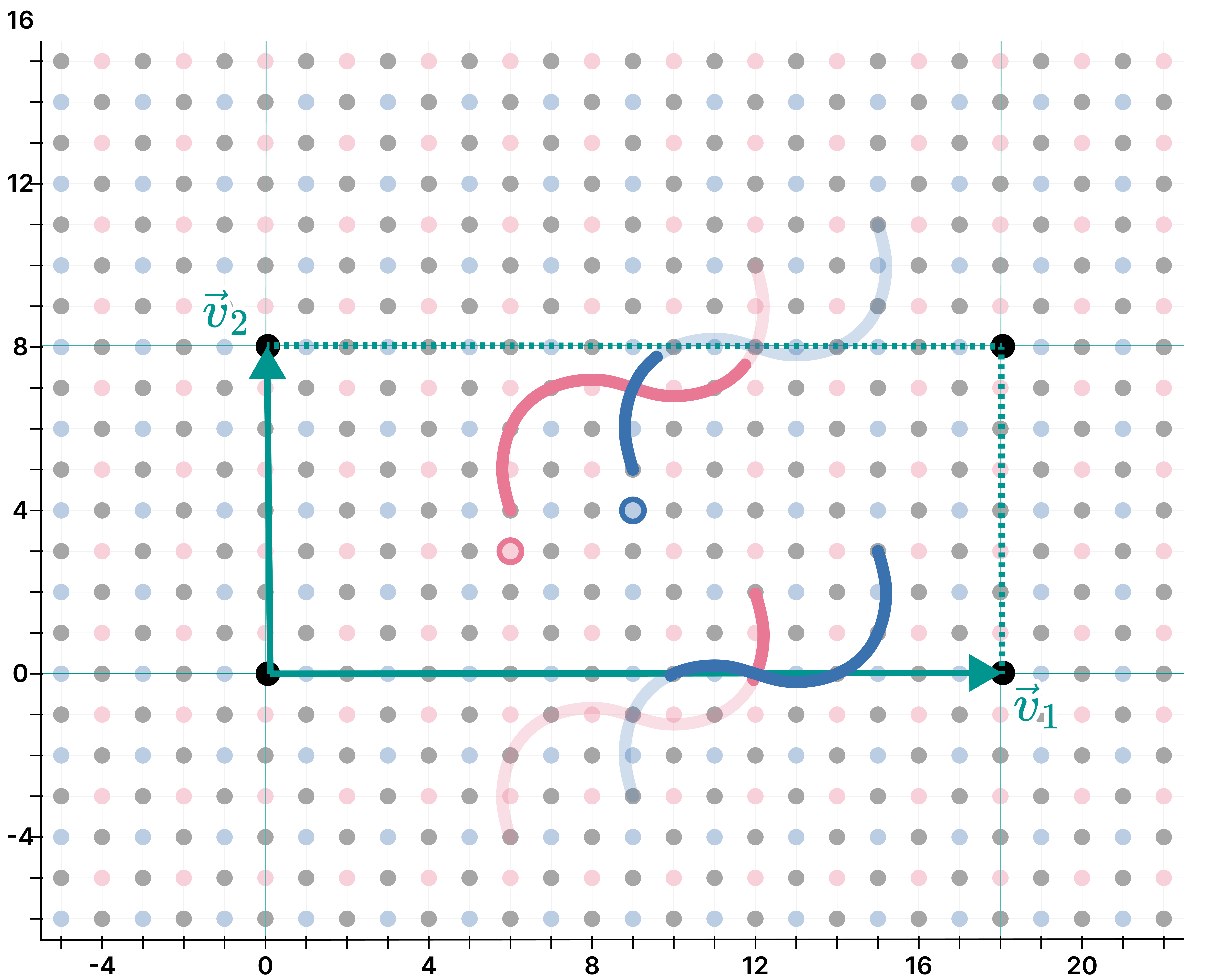}
        \caption{}
        \label{fig:N2E3N2_rectangular}
    \end{subfigure}
    \caption{Illustration of (a) the rotated and (b) the rectangular $N^2E^3N^2$-codes of distance $4$. Two of the stabilisers, one $X$-type in red and one $Z$-type in blue, are shown in both cases, each with two equivalent representations. The transparent part of each snake-like shape indicates where the stabiliser leaves the parallelogram and wraps around $\mathcal{P}$. The associated sublattice $\mathcal{K}$ is also shown, as large black points. (a) The rotated code is defined on the twisted torus corresponding to the diamond-shaped parallelogram $\mathcal{P}(\vec{v}_1=(-6,8),\vec{v}_2=(6,8))$, giving the $[\![48, 12, 4]\!]$ code from \Cref{tab:code_list}. (b) The rectangular code is defined on the regular torus corresponding to $\mathcal{P}(\vec{v}_1=(18,0),\vec{v}_2=(0,8))$, giving the $[\![72, 12, 4]\!]$ code from \Cref{tab:code_list}.} 
    \label{fig:N2E3N2_on_parallelogram}
\end{figure*}

We may define an $N^\alpha E^\beta N^\alpha$-code by wrapping the infinite plane, with the qubits and the $N^\alpha E^\beta N^\alpha$-stabilisers, around a parallelogram, resulting in a CSS code defined on a regular or twisted torus. We note that the same works for general direction sequences, see \Cref{sec:app-parallelograms}. For two linearly independent vectors $\vec{v}_1, \vec{v}_2\in\Z^2$ we define the parallelogram associated with them as 
\begin{equation}\label{eq:parallelogram_definition}
    \mathcal{P} = \mathcal{P}(\vec{v}_1, \vec{v}_2) := \left\{ P\in\Z^2 \colon \; \overrightarrow{OP} = a \vec{v}_1 + b \vec{v}_2 \text{ with some } 0 \leq a < 1,\ 0 \leq b < 1 \right\},
\end{equation}
where $O=(0,0)$. The torus that we obtain by wrapping around $\mathcal{P}$ is defined to be the quotient $\Z^2/\mathcal{K}$, where $\mathcal{K}:=\{\ell_1\vec{v}_1+\ell_2\vec{v}_2\colon \ell_1, \ell_2\in\Z\}$. In the special case where $\vec{v}_1$ and $\vec{v}_2$ are parallel to the $x$- and $y$-axes we call it a regular torus, otherwise a twisted torus. We note that, in the case where $(\alpha,\beta)=(1,2)$ and $\vec{v}_1 = (2d,0)$, $\vec{v}_2 = (0,2d)$, we obtain the standard unrotated (a.k.a. Kitaev) toric code (TC) with uniform scheduling that has code distance $d$ and full circuit-level distance $d_{circ}=d$ (see \Cref{sec:app-motivating_example} for the $d=3$ case). Additionally, for even $d$ and $\vec{v}_1 = (-d,d)$, $\vec{v}_2 = (d,d)$, we obtain the rotated toric code with a uniform schedule that has code distance $d$, but its circuit-level distance is reduced: $d_{circ}= \lceil\frac{3}{4}d\rceil$. However, it is possible to recover the full distance for the RTC by using the N--Z scheduling that is the standard for the RPSC with controlled-Pauli gates (see e.g. \cite[Figure 1]{MBG}). From now on, whenever we use the abbreviation RTC, we implicitly mean that it is implemented with N--Z scheduling using CX and CZ gates, giving full circuit-level distance $d_{circ}=d$. Importantly, this is what we benchmark directional codes against in \Cref{sec:simulations}, see also \Cref{sec:app-additional_results}.

Our numerical exploration found that the $N^\alpha E^\beta N^\alpha$-codes offer two regularly structured code families. We benchmarked the $(\alpha,\beta) \in\{(1,3),(2,2),(2,3),(2,4)\}$ cases for which we also observed strong error-correction capabilities. More precisely, for positive integer $d$ divisible by four, we have:
\begin{itemize}
    \item the ``rotated'' $N^\alpha E^\beta N^\alpha$-codes defined on a diamond-shaped twisted torus specified by $\vec{v}_1=(-\frac{1}{2}\beta d,\alpha d)$, $\vec{v}_2=(\frac{1}{2}\beta d,\alpha d)$. These codes have $n=\frac{1}{2}\alpha\beta d^2$ data qubits and $k\geq 2(2\alpha-1)(\beta-1)$ logical qubits, which we show in \Cref{sec:app-logicals}. In the $N^2E^4N^2$ case, this torus coincides with the torus on which the RTC with distance $2d$ is defined. We conjecture the code distance to be exactly $d$, which we verified for $d=4,8,12$ using integer programming for $(\alpha,\beta) \in\{(1,3),(2,2),(2,3),(2,4)\}$. Furthermore, except for the $NE^3N$ case, we conjecture that the circuit-level distance is $d_{circ} = d$. We verified this using integer programming for $d=4,8$, except for the $d=8$ $N^2E^4N^2$-code, which was too large for an exact distance calculation.
    \item the ``rectangular'' $N^\alpha E^\beta N^\alpha$-codes defined on a regular torus specified by $\vec{v}_1=(\frac{3}{2}\beta d,0)$, $\vec{v}_2=(0,\alpha d)$, with $n=\frac{3}{4}\alpha\beta d^2$ and $k\geq 2(2\alpha-1)(\beta-1)$. We conjecture the code distance to be exactly $d$, which we verified for $d=4,8,12$ using integer programming for $(\alpha,\beta) \in\{(1,3),(2,2),(2,3),(2,4)\}$. Furthermore, we also conjecture the circuit-level distance to be exactly $d$, which we verified using integer programming for $d=4,8$, except for the $d=8$ $N^2E^4N^2$-code, which was too large for an exact distance calculation.
\end{itemize}
The above defined families of $NE^3N$-, $N^2E^2N^2$-, $N^2E^3N^2$- and $N^2E^4N^2$-codes offer exactly four, six, twelve and eighteen logical qubits, respectively. We emphasise that when $d$ is not divisible by four, the above definitions either do not give a valid CSS code, or they provide fewer logical qubits. We also note that if one applies the above definition in the $NE^2N$ case, then they are well-defined for even $d$, but in the rectangular case one does not obtain the (unrotated) TC, but instead a toric code that is defined on a twisted torus when implemented with controlled-Pauli gates. For $NE^3N$-codes we only benchmarked the rectangular family, since as noted above the rotated $NE^3N$-codes do not achieve full circuit-level distance. For $N^2E^2N^2$- and $N^2E^3N^2$-codes we benchmarked both rotated and rectangular families (see \Cref{fig:N2E3N2_on_parallelogram} for an illustration of the $N^2E^3N^2$ families). As for the $N^2E^4N^2$-codes, we do not include the rectangular family because initial simulations suggested that the decoder setting used across all simulations led here to a large overestimation of the obtained logical error probabilities. This overestimation emerges as the circuits simulated get particularly large (see \Cref{sec:app-short_beam_big_circuits} for evidence of this effect). 

Due to the sizes of the circuits being too large to decode in reasonable time with the Tesseract decoder \cite{tesseractdecoder} for $d=12$ and beyond, we define a ``filler'' code as the $d=6$ alternative for both the rotated and rectangular families. This is obtained in a similar way to how the $n=144$ BB (gross) code is obtained from the $n=72$ BB code in \cite{Bravyi2024}. More precisely,
\begin{itemize}
    \item the ``filler'' $N^\alpha E^\beta N^\alpha$-code is defined on a regular torus obtained from the $d=4$ rectangular case by doubling its vertical side, i.e. $\vec{v}_1=(6\beta,0)$, $\vec{v}_2=(0,8\alpha)$. It has $n = 24\alpha\beta = \frac{2}{3} \alpha\beta d^2$ data qubits, and the number of logical qubits is the same as in the rotated and rectangular cases. The code distance is $d=6$ in all four cases, and the circuit-level distance is $d_{circ}=6$ in all but the $NE^3N$ case when it is $5$.
\end{itemize}
Note that the quantity $\frac{n}{d^2}$ is $\frac{1}{2}\alpha\beta$, $\frac{2}{3}\alpha\beta$ and $\frac{3}{4}\alpha\beta$ for the rotated, filler and rectangular codes, respectively. In the Appendix we provide a list of all directional codes that we benchmarked in \Cref{tab:code_list}, and in \Cref{sec:app-logicals} we explain the structure of the logical operators of the rotated family in detail. 


\section{Simulation results}
\label{sec:simulations}
In this section, we focus on comparing the quantum memory performance of the $N^2E^3N^2$-codes, with weight-$7$ stabilisers providing twelve logical qubits, against the RTC, TC, RPSC, and the BB codes. For all codes, we prepared quantum memory circuits with as many QEC rounds as the distance of the code. We added noise according to the SI-$1000$ circuit-level Pauli noise model \cite{si1000-1,si1000-2}, including at the logical state preparation and measurement. We sampled each circuit with stim \cite{stim} and decoded with Tesseract using the short beam setting \cite{tesseractdecoder}. From each sampled logical error probability we then calculated the logical error rate (per round) denoted by $p_L$, which is plotted in all plots. The error bars shown for the data points correspond to the Clopper-Pearson interval \cite{clopper-pearson} with $95\%$ confidence. We give the details in \Cref{sec:app-simulation_details}, and provide the plots for the other directional codes in \Cref{sec:app-additional_results}.

\begin{figure}[!htb]
     \centering
     \begin{subfigure}{0.45\textwidth}
         \centering
         \includegraphics[width=1.0\linewidth]{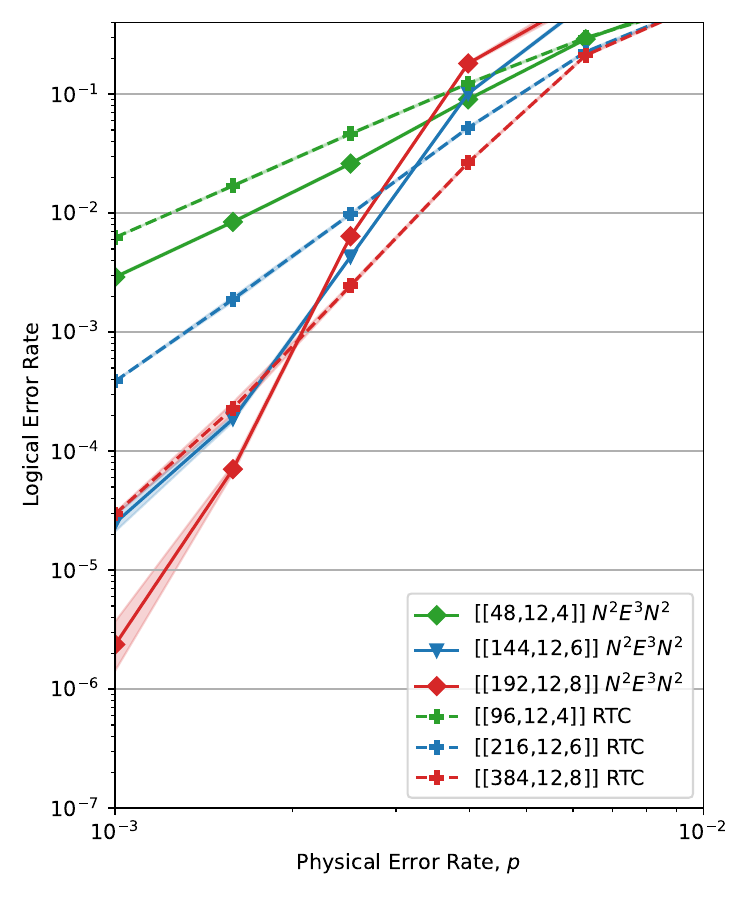}
         \caption{rotated $N^2E^3N^2$ vs RTC}
         \label{fig:N2E3N2_rotated_vs_RTC}
     \end{subfigure}
     \begin{subfigure}{0.45\textwidth}
         \centering
         \includegraphics[width=1.0\linewidth]{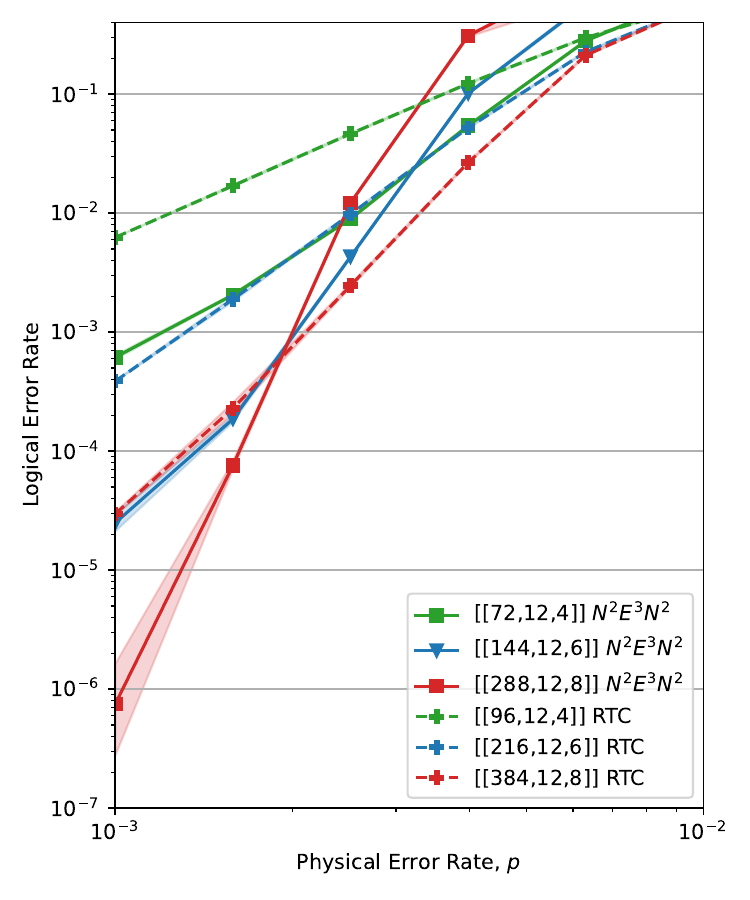}
         \caption{rectangular $N^2E^3N^2$ vs RTC}
         \label{fig:N2E3N2_rectangular_vs_RTC}
     \end{subfigure}
     \caption{Comparison of $N^2E^3N^2$-codes against six copies of the RTC, giving twelve logical qubits for both. The $N^2E^3N^2$-codes correspond to solid lines with diamond (rotated), square (rectangular) and triangle (filler) markers, and the RTCs to dashed lines with cross markers. Each distance is assigned a unique colour, and codes of equal distance are plotted with the same colour. (a) The rotated ($d=4,8$) and the filler ($d=6$) $N^2E^3N^2$-codes compared against the RTC. (b) The rectangular ($d=4,8$) and the filler ($d=6$) $N^2E^3N^2$-codes compared against the RTC.}
     \label{fig:N2E3N2_vs_RTC}
\end{figure}

In \Cref{fig:N2E3N2_rotated_vs_RTC}, we compare the rotated $N^2E^3N^2$-codes, including the filler code, with $d=4,6,8$ against six independent copies of the RTC of the same distances. The number of encoded logical qubits is therefore twelve for both. Recall that the syndrome extraction circuit of both codes satisfies the square-grid connectivity on a torus. The physical error rate varies between $p=10^{-3}$ and $10^{-2}$ on the $x$-axis. The rotated $N^2E^3N^2$-codes use half the number of physical qubits to achieve the same distance as the RTC. However, the logical error rates for directional codes drop more steeply as $p$ decreases (at least for the simulated $10^{-3} \leq p \leq 10^{-2}$) giving more than an order of magnitude lower logical error rate for $d=6,8$ by $p=10^{-3}$. \Cref{fig:N2E3N2_rectangular_vs_RTC} shows a similar tendency for the rectangular case. 

The reader may notice that the gaps between the data of the $d=4,6$ $N^2E^3N^2$-codes in \Cref{fig:N2E3N2_vs_RTC} are different from the gap between the $d=6,8$. We believe there are two reasons for this. Firstly, as we explained earlier, the $d=6$ filler code used for both plots is strictly speaking not a member of the rotated or rectangular family. A similar difference in the gaps between the data of the $n=72,144$ and $n=144,288$ BB codes is also present in \cite[Figure 2a]{Bravyi2024}. Secondly, as was mentioned in \cite{tesseractdecoder}, and is visible in Figure 2 and 6 there, the short beam setting in Tesseract is enough to obtain close to optimal decoding results for the RPSC up to $d\leq 9$, or equivalently $n\leq 81$. In \Cref{sec:app-short_beam_big_circuits} we present some plots related to the TC that demonstrate that as the number of data qubits $n$ gets larger, the short beam setting gives noticeably higher logical error rates than the long beam setting. As such, this may cause the logical error rates of the larger directional codes to be overestimated. Nonetheless, it is clear from \Cref{fig:N2E3N2_vs_RTC} that the $N^2E^3N^2$-codes outperform the RTC, even with the potentially suboptimal short beam setting of the Tesseract decoder.

\begin{figure}[!htb]
    \centering
     \begin{subfigure}{0.6\textwidth}
         \centering
         \includegraphics[width=1.0\linewidth]{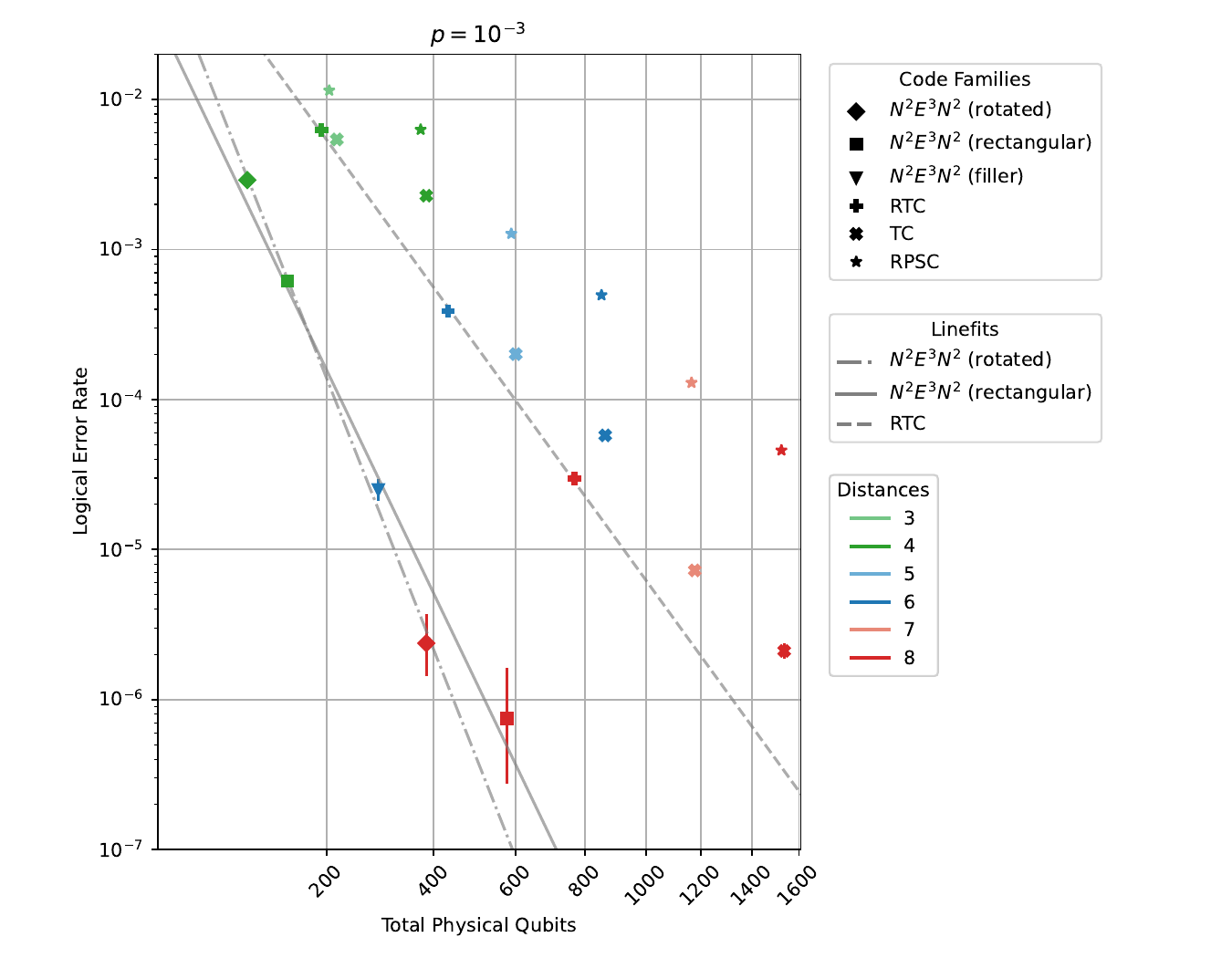}
         \caption{$N^2E^3N^2$ vs three types of surface codes at $p=10^{-3}$}
         \label{fig:N2E3N2_fixed_p}
     \end{subfigure}
     \begin{subfigure}{0.35\textwidth}
         \centering
         \includegraphics[width=1.0\linewidth]{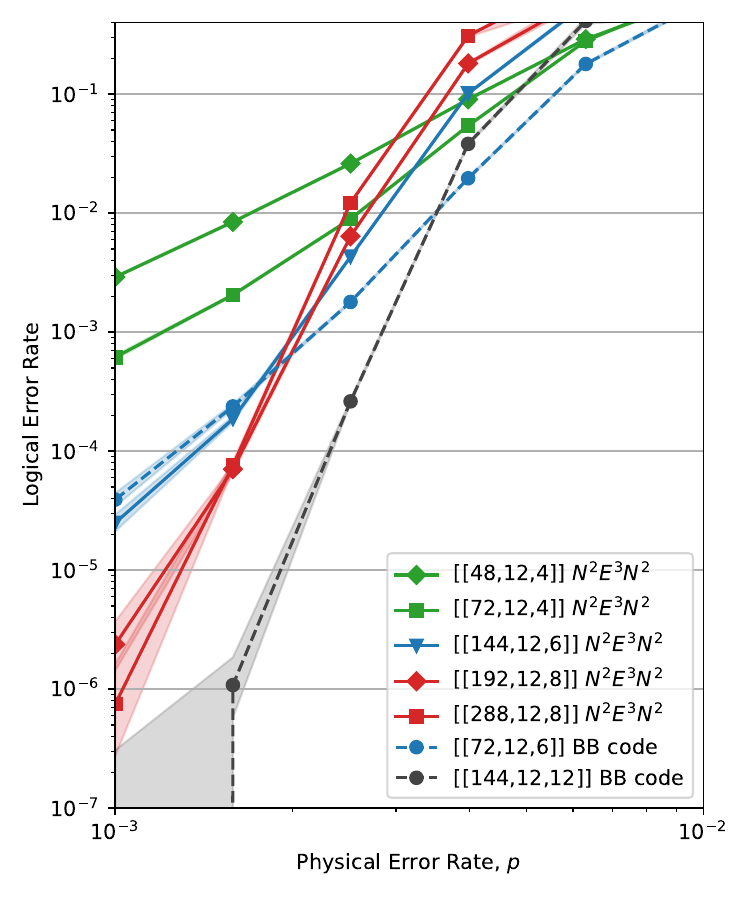}
         \caption{$N^2E^3N^2$ vs BB}
         \label{fig:N2E3N2_vs_BB}
     \end{subfigure}
    \caption{(a) Comparison of all five $N^2E^3N^2$-codes against the RTC, TC and RPSC where we fix $p=10^{-3}$ and plot the total number of physical qubits on the $x$-axis, using square-root scale, against the logical error rate on the $y$-axis, using logarithmic scale. We also plot three lines that are fitted on the data obtained for the rotated (dashed-dotted) and rectangular (solid) $N^2E^3N^2$-codes, including the filler code for both, and the RTC (dashed).
    (b) The $n=72, 144$ BB codes (dashed blue and black lines, respectively, with dot markers) are compared against the five $N^2E^3N^2$-codes. All codes encode twelve logical qubits.}
    \label{fig:N2E3N2_fixed_p_and_vs_BB}
\end{figure}

Next, we wish to characterise quantitatively how much benefit directional codes offer against the RTC. In \Cref{fig:N2E3N2_fixed_p}, we fix the physical error rate to be $p=10^{-3}$ and plot the logical error rates against $n_{ph}$, the total number of physical qubits used, for all five $N^2E^3N^2$-codes from \Cref{fig:N2E3N2_vs_RTC} as well as the RTC, TC and RPSC. We additionally display a line for each of the two families of $N^2E^3N^2$-codes and the RTC, which are fitted to the transformed data points $(\sqrt{n_{ph}},\; \log p_L)$, where $\sqrt{n_{ph}}$ is proportional to the distance of the code. Note that we do not show fitted lines on the TC's and RPSC's data to avoid overloading the plot. If $s_{dir}$ and $s_{rtc}$ denote the slopes of the lines corresponding to one family of $N^2E^3N^2$-codes and the RTC, respectively, then $(\frac{s_{rtc}}{s_{dir}})^2$ expresses the ratio of physical qubits one needs when using that family of $N^2E^3N^2$-codes to achieve the same logical error rate as the RTC at $p=10^{-3}$, see \Cref{sec:app-simulation_details} for more details on this. These comparison numbers are given in the fourth and fifth columns of the first row of \Cref{tab:slope_ratios}. In each cell the first number is $(\frac{s_{rtc}}{s_{dir}})^2$ expressed in percentage, and the sub- and superscripts describe an interval around it that provides an at least $98\%$ confidence for this ratio, see details in \Cref{sec:app-simulation_details}. Based on this, we can conclude that the rotated $N^2E^3N^2$-code family requires only a quarter to a third of the number of physical qubits to achieve the same logical error rate as the RTC at $p=10^{-3}$. The rectangular family requires approximately a third to half the number of physical qubits to achieve the same. The other rows of \Cref{tab:slope_ratios} contain the same comparison against the TC with even and odd distances, and the RPSC with even and odd distances, furthermore, the other columns describe the comparisons of the other directional code families.

\begin{table}[!htb]
    \centering
    \footnotesize
    \begin{tabular}{|Sc||Sc|Sc|Sc|Sc|Sc|Sc|}
        \hline
        \fbox{$p=10^{-3}$} & \makecell{$NE^3N$ \\ rectangular} & \makecell{$N^2E^2N^2$ \\ rotated} & \makecell{$N^2E^2N^2$ \\ rectangular} & \makecell{$N^2E^3N^2$ \\ rotated} & \makecell{$N^2E^3N^2$ \\ rectangular} & \makecell{$N^2E^4N^2$ \\ rotated} \\ [0.5ex]
        \hline\hline
        \makecell{RTC \\ (even $d$)} & \makecell{$73.4\%_{-16.1\%}^{+13.7\%}$} & \makecell{$29.6\%_{-10.8\%}^{+3.4\%}$} & \makecell{$55.7\%_{-9.1\%}^{+4.3\%}$} & \makecell{$29.2\%_{-4.4\%}^{+2.1\%}$} & \makecell{$49.9\%_{-18.3\%}^{+5.7\%}$} & \makecell{$32.8\%_{-2.9\%}^{+1.9\%}$} \\
        \hline
        \makecell{TC \\ even $d$} & \makecell{$63.0\%_{-14.6\%}^{+13.7\%}$} & \makecell{$25.1\%_{-9.3\%}^{+3.9\%}$} & \makecell{$48.0\%_{-8.6\%}^{+4.8\%}$} & \makecell{$25.1\%_{-4.1\%}^{+2.5\%}$} & \makecell{$43.0\%_{-16.3\%}^{+6.0\%}$} & \makecell{$28.2\%_{-2.9\%}^{+2.4\%}$} \\
        \hline
        \makecell{TC \\ odd $d$} & \makecell{$55.9\%_{-12.3\%}^{+11.5\%}$} & \makecell{$22.7\%_{-8.4\%}^{+2.8\%}$} & \makecell{$42.8\%_{-7.3\%}^{+3.6\%}$} & \makecell{$22.4\%_{-3.5\%}^{+1.9\%}$} & \makecell{$38.4\%_{-14.3\%}^{+4.7\%}$} & \makecell{$25.0\%_{-2.2\%}^{+1.9\%}$} \\
        \hline
        \makecell{RPSC \\ even $d$} & \makecell{$30.5\%_{-6.7\%}^{+5.9\%}$} & \makecell{$12.3\%_{-4.5\%}^{+1.5\%}$} & \makecell{$23.3\%_{-3.9\%}^{+1.8\%}$} & \makecell{$12.2\%_{-1.8\%}^{+0.9\%}$} & \makecell{$20.8\%_{-7.6\%}^{+2.5\%}$} & \makecell{$13.7\%_{-1.2\%}^{+0.8\%}$} \\
        \hline
        \makecell{RPSC \\ odd $d$} & \makecell{$25.1\%_{-5.5\%}^{+4.7\%}$} & \makecell{$10.1\%_{-3.7\%}^{+1.2\%}$} & \makecell{$19.2\%_{-3.1\%}^{+1.4\%}$} & \makecell{$10.0\%_{-1.5\%}^{+0.7\%}$} & \makecell{$17.1\%_{-6.3\%}^{+1.9\%}$} & \makecell{$11.2\%_{-1.0\%}^{+0.6\%}$} \\
        \hline
    \end{tabular}
    \caption{Comparison of directional codes against the RTC, TC and RPSC. In each cell we display a number that, based on the sampled logical error rates, expresses what percentage of the physical qubits one needs with directional codes to achieve the same logical error rate as the RTC/TC/RPSC. The sub- and super-scripts describe a confidence interval for this quantity with at least $98\%$.}
    \label{tab:slope_ratios}
\end{table}

Lastly, in \Cref{fig:N2E3N2_vs_BB}, we further compare the rotated, rectangular and filler $N^2E^3N^2$-codes against the $n=72,144$ BB codes. It is clear from the plot that the BB codes outperform the directional codes, although at the expense of introducing additional non-local connections. Since on physical hardware this modification introduces additional noise and also poses significant engineering challenges, directional codes may be a potentially more desirable choice than the BB codes for hardware implementation on a toric device. Additionally, it is important to note that our simulations assume the simplistic SI-$1000$ noise model that does not increase the system's noise with the addition of connections beyond square-grid. See e.g. \cite{tesseractdecoder} for a comparison of BB codes and the RPSC with such a noise model. Overall, this makes directional codes a potentially competitive alternative to the BB codes.


\section{Discussion}
\label{sec:discussion}
In this paper, we presented directional codes, a novel family of qLDPC codes tailored specifically to nearest-neighbour connectivity hardware. We simulated directional codes that encode four, six, twelve and eighteen logical qubits, and found that they outperform the RTC under the standard SI-$1000$ circuit-level Pauli noise model. In particular, we found that the best directional code family achieves the same logical error rate as the RTC at physical error rate $p=10^{-3}$ while requiring only a quarter to a third of the number of physical qubits. As such, directional codes may be a competitive alternative to proposals that necessitate adding connections beyond square-grid, like the BB codes, as they impose significant engineering challenges and make the device noisier.

The key idea is to utilise iSWAP gates instead of CZ gates to implement the syndrome extraction circuits of directional codes. Since iSWAP gates essentially dynamically alter the connectivity, utilising them opens up a new way to discover better QEC codes than the ones which use the more popular CZ gate. In particular, we believe it would be interesting to explore what other QEC codes can be implemented under the same connectivity that the BB codes from \cite{Bravyi2024} require, but utilising iSWAP gates instead of CZ gates. This we leave as an interesting open question.

In this paper we only investigated the quantum memory performance of directional codes; a natural next step would be to explore their ability to support fault-tolerant quantum computation. We see two potential avenues to address this. Since directional codes fit on a toric device, the first avenue is to develop a modular quantum computer similar to IBM's ``Tour de Gross'' architecture \cite{IBM_roadmap} but with substantially fewer qubit connections within each module. More precisely, a directional code could be stored on each toric ``code module'' and the planar ``factory module'' could operate with planar surface codes. A second possible avenue involves first adding boundaries to directional codes so that they may be implemented on a planar hardware with nearest-neighbour connectivity. We note that while we have been working on the revision of this manuscript, three papers addressing this problem have been released \cite{Georgia-boundary,Boren-boundary,Bunny-boundary}. This avenue would then involve the development of lattice surgery for planar directional codes, which respects the square-grid connectivity constraint. We recently became aware that the authors of \cite{tile-codes-ls} intend to release a revised version of their paper soon in which they address this. Finally, it would need to be figured out how to perform logical non-Clifford gates on directional code logical qubits, which may involve teleporting high-quality magic states prepared on RPSCs, or figuring out how to prepare them directly on planar directional codes. These ingredients would together make it possible to perform logical circuits via the Pauli-based computational model \cite{BravyiPBC}. We believe the latter option is more promising, since it requires the scaling up of a more generic-purpose planar square-grid architecture with iSWAP gates rather than the construction of a purpose-built hardware tailored for a specific QEC code.

For decoding directional codes, we used Tesseract which is relatively slow and unlikely to be suitable for real-time decoding. However, recent improvements in BP-based decoders suggest that, for example, the BP-RELAY decoder \cite{bp-relay} could be a natural possible way forward for real-time decoding of directional codes. We also mention a more recent result \cite{Kaavya-paper} in which the authors adapted the minimum-weight perfect matching algorithm to decode various qLDPC codes, including the directional codes. (We note that the directional codes benchmarked in \cite{Kaavya-paper} are from an earlier version of this manuscript.) This could be another way to achieve real-time decoding for directional codes. Alternatively, there may be further scope for development of a bespoke decoder for directional codes, which we leave as an interesting open problem.


\section*{Author contributions}
G.G. conceived the project idea and lead the project and paper writing. All authors contributed to the scientific work afterwards, as well as to software work. D.B. handled all the simulations and numerical data collection. A.R. implemented distance calculators that notably aided the discovery of the rotated and rectangular directional code families. All authors contributed to writing the paper.
\section*{Acknowledgements}
We would like to thank Dan Browne and Ophelia Crawford for reading an early version of the paper and providing valuable feedback. We also would like to thank Earl T. Campbell for some very inspiring discussions at the early stages of the project, and for creating a stimulating scientific environment at Riverlane. We are also grateful to Ophelia Crawford who leads the Logic team at Riverlane, and who was very supportive of the project from the start and gave us space to work on it. We are also extremely grateful to Chidi Nnadi, Gordon Bateman and Mohamed Rashid Hassan, who provided us with a high performance computing cluster with hundreds of cores, without which this research would have been impossible. Finally, we thank Maria Maragkou and Luigi Martiradonna for giving feedback on the abstract, and the Introduction and Discussion sections.

\section*{Corresponding authors}
Correspondence and requests for materials should be addressed to all authors: \linebreak G.G. (george.geher@riverlane.com or gehergyuri@gmail.com), \linebreak D.B. (david.byfield@riverlane.com or david.s.byfield@gmail.com) and \linebreak A.R. (archibald.ruban@riverlane.com or archibald.ruban@gmail.com).
\section*{Competing interests}
UK patent application 2503472.9 (naming G.G., D.B. and A.R. as co-inventors) contains technical aspects from this paper.

\section*{Data availability}
The circuits used for simulations and the numerical data obtained are available at \url{https://doi.org/10.5281/zenodo.21242387}.

\clearpage
\bibliography{references}

\newpage

\appendix


\section{Additional details on code construction}
\label{sec:app-method}
\begin{figure}[!htb]
    \centering
    \begin{subfigure}{0.45\textwidth}
        \centering
        \includegraphics[width=0.45\linewidth]{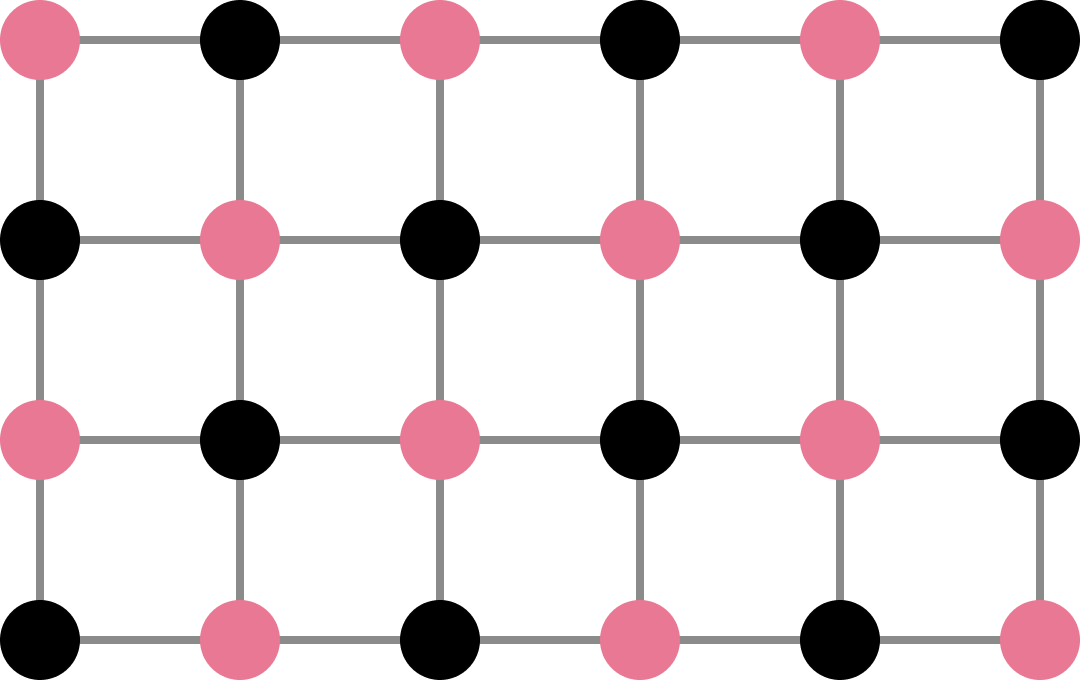}
        \caption{Square-grid connectivity}
        \label{fig:square_grid}
    \end{subfigure}
    \begin{subfigure}{0.45\textwidth}
        \centering
        \includegraphics[width=0.45\linewidth]{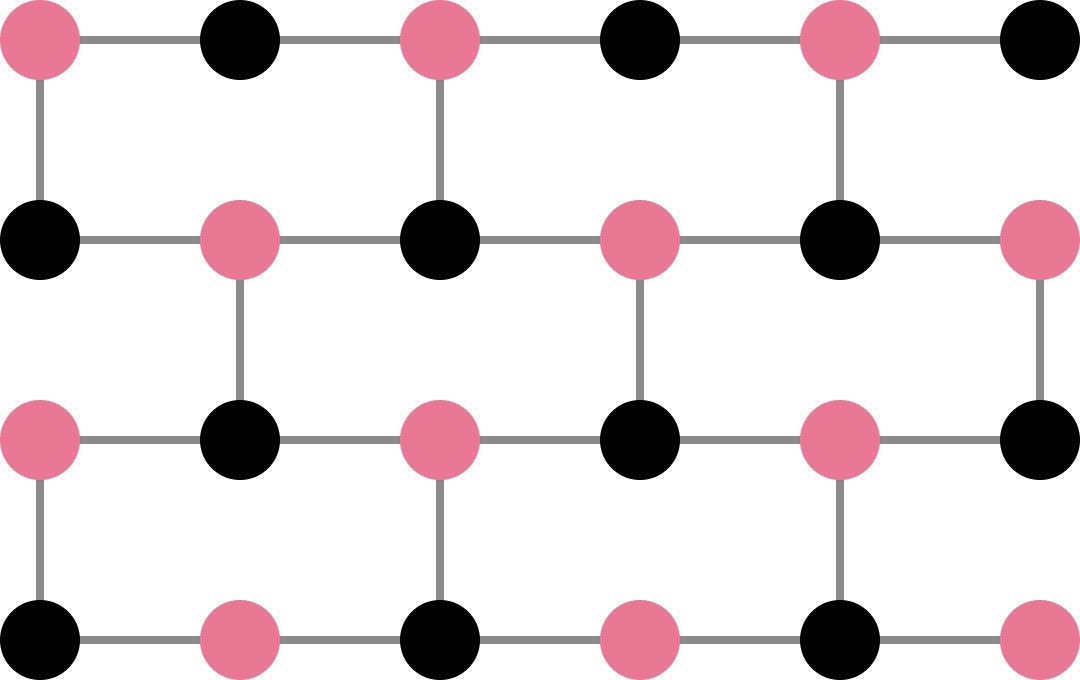}
        \caption{Hexagonal-grid connectivity}
        \label{fig:hex_grid}
    \end{subfigure}
    \caption{Hardware layouts with square- and hexagonal-grid connectivity. All the codes constructed in this paper can be executed on the square-grid (a), and some of them even on the hexagonal-grid (b) (see also \Cref{tab:directions_and_layouts}). The qubits are coloured with red and black, corresponding to ancilla and data qubits.}
    \label{fig:square-and-hex-grids}
\end{figure}

Directional codes are constructed by defining their associated syndrome extraction circuits, which are designed to satisfy the square-grid connectivity constraint (\Cref{fig:square_grid}) on a regular or twisted torus. We will also see that some of these codes do not require the full connectivity of the square-grid architecture, allowing them to be executed on the sparser hexagonal-grid. We achieve this by using iSWAP as the native two-qubit gate, instead of the popular choices like CZ.

After showing a motivating example in \Cref{sec:app-motivating_example}, we explain our code construction method in three steps, detailed in \Cref{sec:app-one_stab_explanation,sec:app-schedule_conflicts,sec:app-parallelograms}. First, we consider the infinite planar square-grid lattice $\Z^2$ with square-grid connectivity, and describe how we can measure a set of $X$-stabilisers whose data qubits are not all connected to a single (ancilla) qubit. Second, we consider what happens if we naively apply this circuit-construction method to arbitrary stabiliser types. In order to obtain a valid circuit that measures these stabilisers simultaneously and independently, we need to avoid entangling the ancilla qubits, see e.g. \cite{TangledSchedules}. We prove a theorem that characterises when this is the case. Finally, in order to get a CSS code that encodes logical qubits, we need to make the lattice finite. We achieve this by wrapping $\Z^2$ around a (possibly twisted) torus. We also collect all the balanced CSS code types (i.e. the ones with an equal number of $X$- and $Z$-stabilisers) that we can obtain in such a way with stabiliser weights $4,5,6$ and $7$, although the construction is valid for arbitrarily high weight. The weight-$4$ case, as shown in \Cref{sec:app-motivating_example}, provides a linearly transformed version of the toric code. However, the higher weight versions provide novel CSS codes that can outperform the RTC, while still being executable under low connectivity constraints. It is important to note that our construction specifies the syndrome extraction circuits for directional codes implicitly.
\begin{figure*}[!htb]
    \centering
    \includegraphics[width=0.95\linewidth]{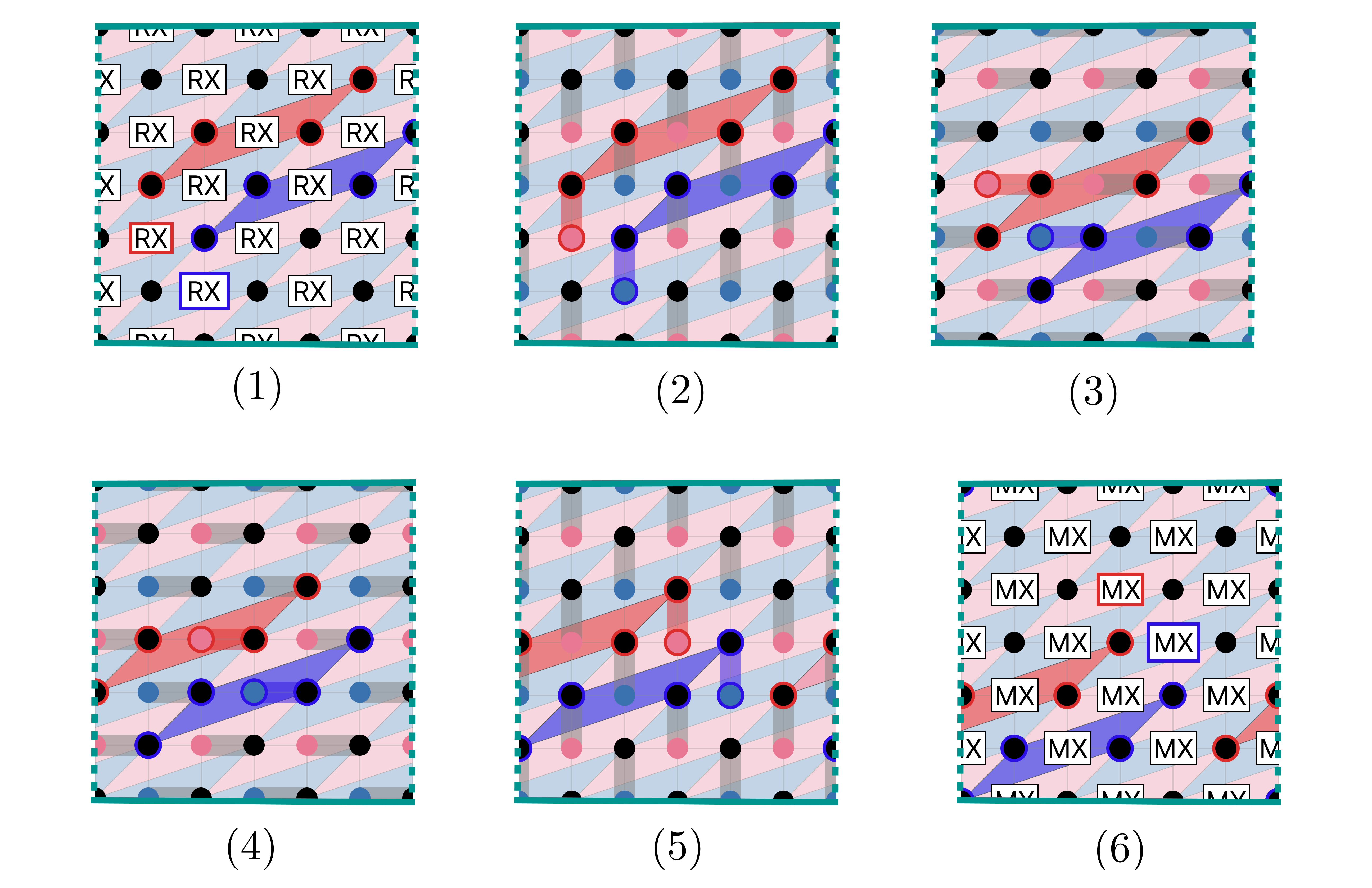}
    \caption{A sequence of layers of ancilla qubit resets (panel (1)), CPSWAP gates (panels (2)--(5)) and ancilla qubit measurements (panel (6)) that measure the stabilisers of a distance-$3$ TC on a $6\times6$ torus. Note that the TC is linearly transformed. Each red (blue) parallelogram represents a weight-$4$ $X$- ($Z$)-stabiliser supported on the four data qubits at its vertices. Data qubits are shown as black dots, and the ancilla qubit corresponding to each $X$- ($Z$)-stabiliser is shown as a red (blue) dot. Each grey line in panels (2)--(5) is a CPSWAP gate between the two qubits it touches, with the ancilla (data) qubit always being the control (target) qubit. The CPSWAP is a CXSWAP (CZSWAP) if the ancilla qubit corresponds to a $X$- ($Z$)-stabiliser. An $X$- ($Z$)-stabiliser has been highlighted in red (blue), together with its corresponding ancilla and data qubits, and the reset (RX), CPSWAP and measurement (MX) gates that participate in measuring it out. These serve to illustrate the movement of the stabilisers and data qubits relative to the ancilla qubits, as caused by the action of the SWAP-part of each CPSWAP layer.}
    \label{fig:neen_syndrome_extraction}
\end{figure*}


\subsection{Motivating example: the toric code as the \texorpdfstring{$NE^2N$}{NE2N}-code}
\label{sec:app-motivating_example}
As was mentioned in the main text, for all integer $d\geq 2$, the $NE^2N$-code on the regular torus specified by $\vec{v}_1 = (2d,0)$, $\vec{v}_2 = (0,2d)$ gives the (unrotated or Kitaev) TC of distance $d$. We explain this now for general $d$ using the illustration depicted in \Cref{fig:neen_syndrome_extraction} for the $d=3$ case. More precisely, consider the $(2d)\times(2d)$ square with toric boundary conditions, on which we place qubits in a square-grid fashion, with the bottom-left-most qubit being a data qubit. As shown in \Cref{fig:neen_syndrome_extraction}, we apply a sequence of four CPSWAP layers (see also \Cref{fig:directional_layers}), along with the necessary resets before and measurements after on the ancilla qubits, to perform syndrome extraction. The first layer, which is an $\vec{n}$ layer, has the effect of moving all the data qubits one unit south and the ancilla qubits one unit north. The next two layers, which are $\vec{e}$ layers, each move all the data qubits west and all the ancilla qubits east. We then perform a final $\vec{n}$ layer, again moving the data qubits south and the ancilla qubits north. The stabilisers measured this way all have parallelogram shapes. More precisely, expressed in the initial configuration (i.e. corresponding to panel (1) or (2) of \Cref{fig:neen_syndrome_extraction}), the stabiliser measured by the ancilla qubit initially at position $A_0$ is supported on the qubits that are initially at the positions $A_0+\vec{n}$, $A_0+2\vec{n}+\vec{e}$, $A_0+2\vec{n}+3\vec{e}$ and $A_0+3\vec{n}+4\vec{e}$. For instance, if we consider $A_0$ to be the highlighted red (blue) ancilla qubit in panel (2) of \Cref{fig:neen_syndrome_extraction}, then the four black qubits that are encircled red (blue) and are at the vertices of the red (blue) highlighted parallelogram are the qubits on which the stabiliser measured by $A_0$ is supported.

Now, consider the vector $\vec{w}_2 = 2\vec{v}_1+\vec{v}_2 = (4d,2d)$, and define $\mathcal{K}(\vec{v}_1,\vec{v}_2)=\{\ell_1\vec{v}_1+\ell_2\vec{v}_2\colon \ell_1, \ell_2\in\Z\}$ and $\mathcal{K}(\vec{v}_1,\vec{w}_2)=\{\ell_1\vec{v}_1+\ell_2\vec{w}_2\colon \ell_1, \ell_2\in\Z\}$. It is straightforward to see that $\mathcal{K}(\vec{v}_1,\vec{v}_2) = \mathcal{K}(\vec{v}_1,\vec{w}_2)$, and as such wrapping around the square $\mathcal{P}(\vec{v}_1, \vec{v}_2)$ (like in \Cref{fig:neen_syndrome_extraction}) gives the same torus and $NE^2N$-code as wrapping around the parallelogram $\mathcal{P}(\vec{v}_1, \vec{w}_2)$ (see \Cref{sec:app-parallelograms} for more details on parallelogram equivalence). The latter is depicted in \Cref{fig:NEEN_lin_transform}, and we see that it now resembles the standard TC, depicted in \Cref{fig:TC}. More precisely, consider the linear transformation $T$ defined by $T((1,0)) = (1,0)$ and $T((0,1))=(-2,1)$. Clearly, $T$ transforms the parallelogram $\mathcal{P}(\vec{v}_1, \vec{w}_2)$ into the $(2d)\times(2d)$ square, and additionally the $NE^2N$-stabilisers into diamond shaped squares. This shows that the $NE^2N$-code defined on the square $\mathcal{P}(\vec{v}_1, \vec{v}_2)$ is indeed a linearly transformed version of the standard TC.

\begin{figure}[!htb]
    \centering
    \begin{subfigure}{0.45\textwidth}
        \centering
        \includegraphics[width=1.25\linewidth]{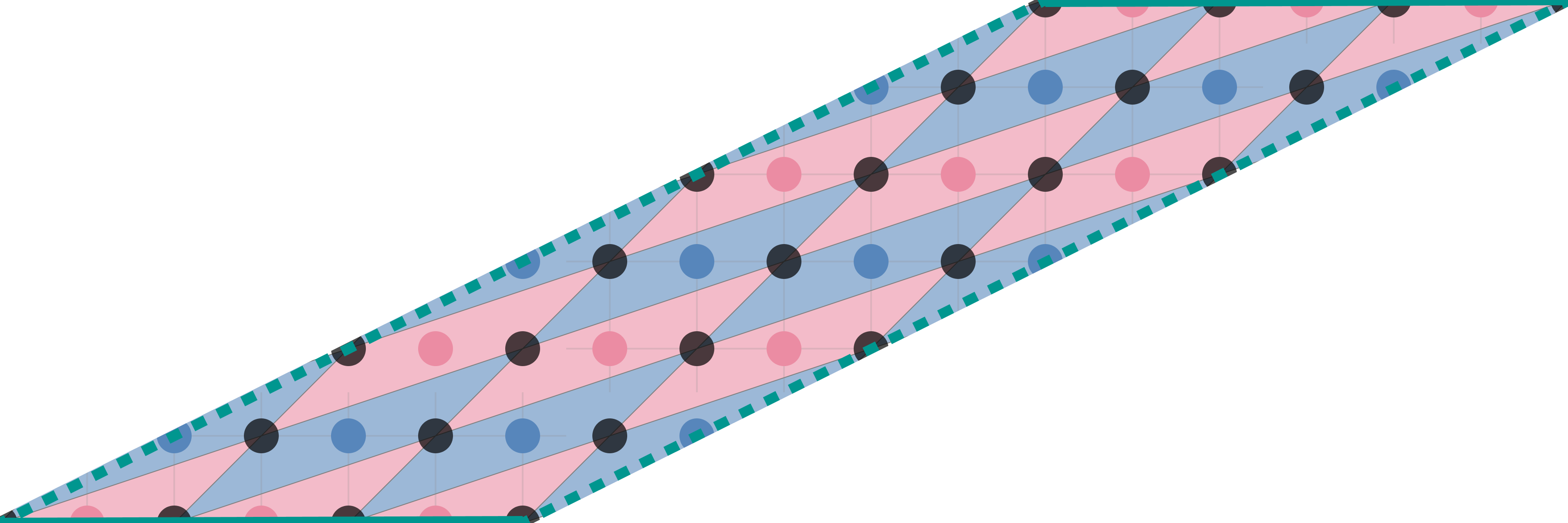}
        \caption{$NE^2N$-code on a parallelogram.}
        \label{fig:NEEN_lin_transform}
    \end{subfigure}
    \begin{subfigure}{0.45\textwidth}
        \centering
        \includegraphics[width=0.45\linewidth]{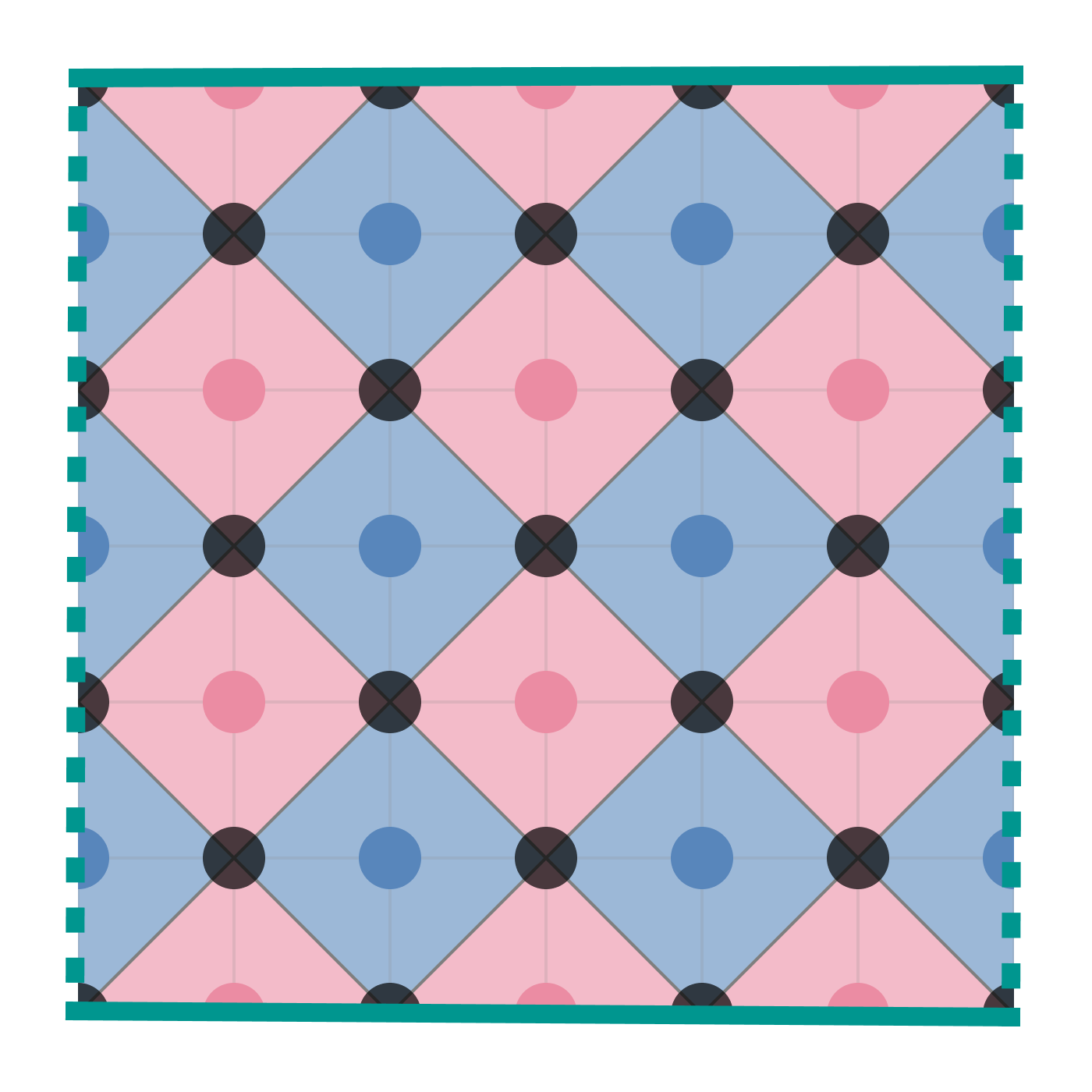}
        \caption{Standard representation of the TC.}
        \label{fig:TC}
    \end{subfigure}
    \caption{Two variants of the TC of distance $3$. (a) The $NE^2N$-code wrapped around the parallelogram $\mathcal{P}(\vec{v}_1=(6,0), \vec{w}_2=(12,6))$. This is equivalent to the $NE^2N$-code wrapped around the square $\mathcal{P}(\vec{v}_1=(6,0), \vec{w}_2=(0,6))$ (\Cref{fig:neen_syndrome_extraction}), and can be implemented using CPSWAP gates on a toric square-grid device. (b) The standard representation of the TC that can be implemented with controlled-Pauli gates under toric square-grid connectivity. Applying the linear transformation $T$, defined in the text, to the qubits and the parallelogram-shaped stabilisers in (a) brings us to exactly this configuration of qubits and diamond-square shaped stabilisers.}
    \label{fig:NEEN_to_TC}
\end{figure}

In the next few subsections we will formalise this process and describe how we may generalise to higher-weight stabiliser measurements, allowing novel CSS codes to be constructed on a torus. 

Lastly, we point out that the above explanation highlights that in the case of $NE^2N$-codes adding boundaries in the parallelogram shaped box $\mathcal{P}(\vec{v}_1, \vec{w}_2)$ is the most natural choice. Indeed, this would give a linearly transformed version of the unrotated planar surface code with distance $d$. However, if boundaries are added in the square shaped box $\mathcal{P}(\vec{v}_1, \vec{v}_2)$, then the code distance would reduce. We also point out that in order to obtain a linearly transformed version of the RPSC with $NE^2N$, one needs to add boundaries along another parallelogram shaped box. However, as the uniform scheduling reduces the circuit-level distance for the RPSC, that construction would not be optimal.

\subsection{Measuring \texorpdfstring{$X$}{X}-stabilisers on an infinite planar square-grid}
\label{sec:app-one_stab_explanation}

The goal of this subsection is to construct circuits on $\Z^2$ that measure an infinite set of $X$-stabilisers using CXSWAP gates. Recall that the following circuit measures an $X$-stabiliser on data qubits $\{Q_1,Q_2,\dots, Q_\ell\}$ by using an ancilla qubit $A$ and CX gates (see e.g. \cite{Nielsen_Chuang_2010,Mac-morphing-2} or \cite[Section 2]{TangledSchedules}): initialise qubit $A$ in the $|+\rangle$ state, apply the gates $\CX(A,Q_1), \dots, \CX(A,Q_\ell)$ in this order, and finally measure qubit $A$ in the $X$ basis. This measurement outcome on qubit $A$ corresponds to the $\prod_{j=1}^\ell X_{Q_j}$ stabiliser measurement. Denote the circuit defined this way by $\mathcal{M}(X; A; Q_1,Q_2,\dots, Q_\ell)$, which is often called the standard, or ``bare-ancilla'', syndrome extraction circuit to measure the stabiliser $\prod_{j=1}^\ell X_{Q_j}$. Note that the ordering of the data qubits in this circuit fixes the schedule of the stabiliser, i.e. the order in which the entangling gates are applied.

Assume we have a collection of such syndrome extraction circuits that each measure an $X$-stabiliser on some data qubits. Suppose further that these circuits use ancilla qubits that are different from each other and also from all data qubits. It is well-known that executing these circuits simultaneously gives a syndrome extraction circuit that measures all the $X$-stabilisers simultaneously and independently, provided that no qubit appears in the same layer twice. A crucial assumption here is that all the stabilisers are of the same type, $X$ (we will consider the more general case in the next subsection).

\begin{figure*}[t]
    \centering
    \includegraphics[height=7cm]{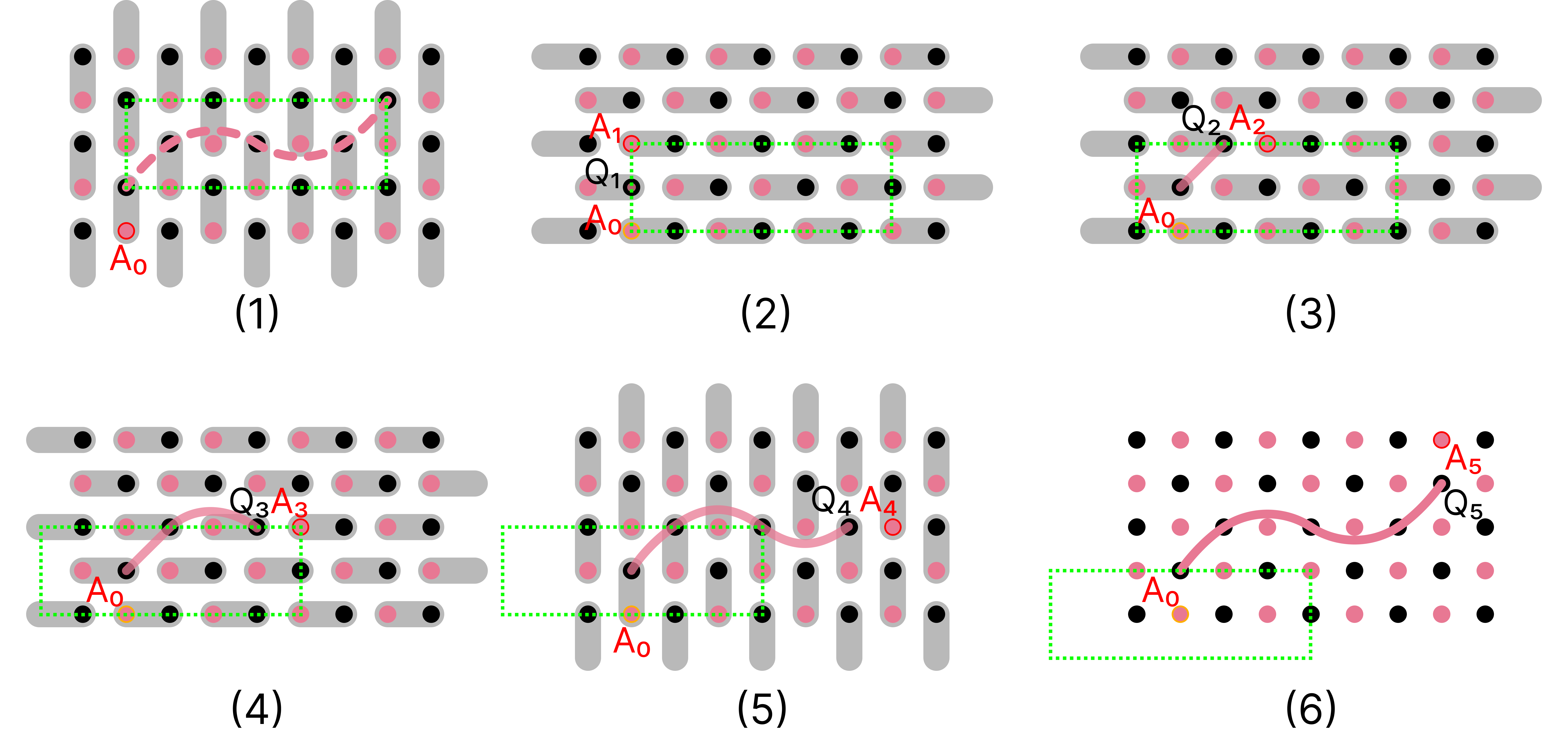}
    \caption{Illustration of CXSWAP layers executed according to the sequence of directions $\vec{n},\vec{e},\vec{e},\vec{e},\vec{n}$, shown in the data qubits' frame of reference. These layers appear in the syndrome extraction circuit of $NE^3N$-stabilisers, all of $X$-type here, although note that varying stabiliser types is also possible with $NE^3N$-stabilisers, see \Cref{fig:NEEEN_Delta_Vectors,fig:example_parallelogram_lattice} and \Cref{tab:directions_and_layouts}. (1)-(6) show how the ancilla qubit of a given stabiliser moves ($A_0, A_1,\dots, A_5$) under the action of each layer to reach all its data qubits ($Q_1,\dots, Q_5$). Data qubits are shown as black nodes, and ancilla qubits as red nodes. Grey highlighted pairs of qubits are involved in a CXSWAP together in that layer, as in \Cref{fig:cxswap_qubit_pair}. The CXSWAP's control is always the ancilla and target always the data qubit. The red dashed snake-like shape shown in (1) is an $NE^3N$-stabiliser whose Pauli terms are collected by the ancilla qubit circled in red. The shape fills up as each of its data qubits are covered by a CXSWAP gate with the ancilla qubit as control. Since each panel is shown in the data qubits' frame of reference, the stabiliser appears static throughout, even though on the hardware it is constantly sliding. This is illustrated by the green dashed box, which represents a fixed section of hardware across which the qubits move. As shown, the stabiliser slides out of that region, as we apply the CXSWAP layers.}
    \label{fig:neeen_syndrome_extraction}
\end{figure*}

Next, consider the infinite square-grid lattice $\Z^2$ on which we lay out data and ancilla qubits in a chequerboard pattern as described in \Cref{sec:code-construct}. Consider an arbitrary sequence of directions: $\vec{d_1},\vec{d_2},\dots, \vec{d_\ell}$, where $\vec{d_j} \in \{\vec{n},\vec{e},\vec{s},\vec{w}\}$. We describe how the ancilla qubits move in the frame of reference of the data qubit sublattice if we were to execute the series of CXSWAP layers according to $\vec{d_1},\vec{d_2},\dots, \vec{d_\ell}$ (\Cref{fig:directional_layers_with_arrows}). Recall that in each layer the two sublattices move in opposite directions on the hardware, as explained in \Cref{sec:code-construct} and illustrated in \Cref{fig:directional_layers_with_arrows}. Therefore, in the frame of reference of the data qubit sublattice (in which data qubits appear as fixed qubits) the ancilla qubits always move twice the distance. Thus in the data qubits' frame of reference, if the ancilla qubit started at position $A_0$ and after the first $j$ layers got shifted to $A_j$, then the vector pointing from the former to the latter position is 
\begin{equation}\label{eq:A_0A_j}
    \overrightarrow{A_0A_j} = 2\sum_{p=1}^j \vec{d_p}.
\end{equation}
Therefore, given that the data qubits just moved in the opposite direction, the vector that points from the original position of the ancilla qubit $A_0$ to the current position of the data qubit $Q_j$ it just interacted with is 
\begin{equation}\label{eq:Q_j}
    \overrightarrow{A_0Q_j} = \overrightarrow{A_0A_j} - \vec{d_j} = 2\sum_{p=1}^{j-1} \vec{d_p} + \vec{d_j},
\end{equation}
see \Cref{fig:neeen_syndrome_extraction}. 

Now, consider the circuit where we start by initialising all ancilla qubits $A\in\Z^2_{\mathrm{anc}}$ in the $|+\rangle$ state, then execute a series of CXSWAP gates according to the sequence of directions $\vec{d_1},\vec{d_2},\dots, \vec{d_\ell}$, and finally measure all ancilla qubits (at their final location) in the $X$ basis. Then, based on the above observations, we measured an infinite set of $X$-stabilisers simultaneously and independently, which all have the same shape and scheduling. More precisely, for each ancilla qubit that in the beginning is placed at $A_0$ in the frame of reference of the data qubit sublattice, the stabiliser which was measured by it is $\prod_{j=1}^\ell X_{Q_j}$, and throughout the circuit the ancilla qubit itself moved to position $A_\ell$ where it was measured, see \Cref{eq:A_0A_j,eq:Q_j} that define $Q_j$ and $A_\ell$. We call stabilisers obtained in such a way $D_1D_2\dots D_\ell$-stabilisers, where $D_j$ stands for the capitalised version of $\vec{d_j}$ without the arrow. 

If we wish to measure these stabilisers for a second time, we could continue by executing a circuit according to the reversed direction sequence $(-\vec{d_\ell}),(-\vec{d_{\ell-1}}),\dots, (-\vec{d_1})$. For instance, in the case of $NE^2N$-codes, the reversed layers would correspond to $\vec{s}, \vec{w}, \vec{w}, \vec{s}$. If we wish to measure them more times, we could apply these two circuits repeatedly, one after the other. In this way the stabilisers on the hardware move back and forth. Note that in principle we could use the same direction sequence all the time, but then the stabilisers would move in the same direction on the hardware, and as a result would eventually leave any finite area of $\Z^2$. Also, as was pointed out in e.g. \cite{Mac-morphing-2}, it is usually beneficial to reverse the scheduling of every other round.


\subsection{Measuring stabilisers of varying Pauli types on an infinite planar square-grid}
\label{sec:app-schedule_conflicts}

The goal of this subsection is to construct circuits on $\Z^2$ that measure an infinite set of $X$- and $Z$-stabilisers simultaneously and independently using CPSWAP gates. Recall that in the previous subsection it was a crucial assumption that all stabilisers had the same Pauli type. When we combine circuits that separately measure stabilisers of different Pauli types, we additionally have to be careful not to entangle the ancilla qubits.

First, consider circuits that use controlled-Pauli gates. Denote by \linebreak $\mathcal{M}(Z; A; Q_1,Q_2,\dots, Q_\ell)$ the circuit we obtain from $\mathcal{M}(X; A; Q_1,Q_2,\dots, Q_\ell)$ (defined at the beginning of \Cref{sec:app-one_stab_explanation}) by replacing all CX gates by CZ gates. It is well-known that $\mathcal{M}(Z; A; Q_1,Q_2,\dots, Q_\ell)$ measures the stabiliser $\prod_{j=1}^\ell Z_{Q_j}$. Consider now the two circuits $\mathcal{M}(X; A; Q_1,Q_2,\dots, Q_\ell)$ and $\mathcal{M}(Z; A'; Q'_1,Q'_2,\dots, Q'_\ell)$, where $A$ and $A'$ are different and $A, A' \notin \{Q_1,\dots,Q_\ell\}\cup\{Q'_1,\dots,Q'_\ell\}$. Recall that when we combine these circuits so that they happen simultaneously, we get a circuit that measures $\prod_{j=1}^\ell X_{Q_j}$ and $\prod_{j=1}^\ell Z_{Q'_j}$ simultaneously and independently if and only if the following three conditions are satisfied:
\begin{itemize}
    \item[(a)] $Q_j\neq Q'_j$ for all $j=1,\dots,\ell$;
    \item[(b)] the set $\{(i,j)\colon Q_i = Q'_j, i<j\}$ has evenly many elements;
    \item[(c)] the set $\{(i,j)\colon Q_i = Q'_j, i>j\}$ has evenly many elements;
\end{itemize}
see e.g. \cite[Section 2]{TangledSchedules}. Condition (b)/(c) means that among the data qubits shared between the two stabilisers, evenly many are scheduled earlier/later in the circuit \linebreak $\mathcal{M}(X; A; Q_1,Q_2,\dots, Q_\ell)$ than in $\mathcal{M}(Z; A'; Q'_1,Q'_2,\dots, Q'_\ell)$. Furthermore, conditions (a)--(c) ensure that when we combine the circuits the ancilla qubits do not get entangled, and that the stabilisers $\prod_{j=1}^\ell X_{Q_j}$ and $\prod_{j=1}^\ell Z_{Q'_j}$ commute. We also point out that in the case where the Pauli types of the two stabilisers are the same, then only condition (a) needs to be satisfied, as was the case in \Cref{sec:app-one_stab_explanation}.

\begin{figure}[!htb]
    \centering
    \begin{subfigure}{0.30\textwidth}
        \centering
        \includegraphics[width=0.6\linewidth]{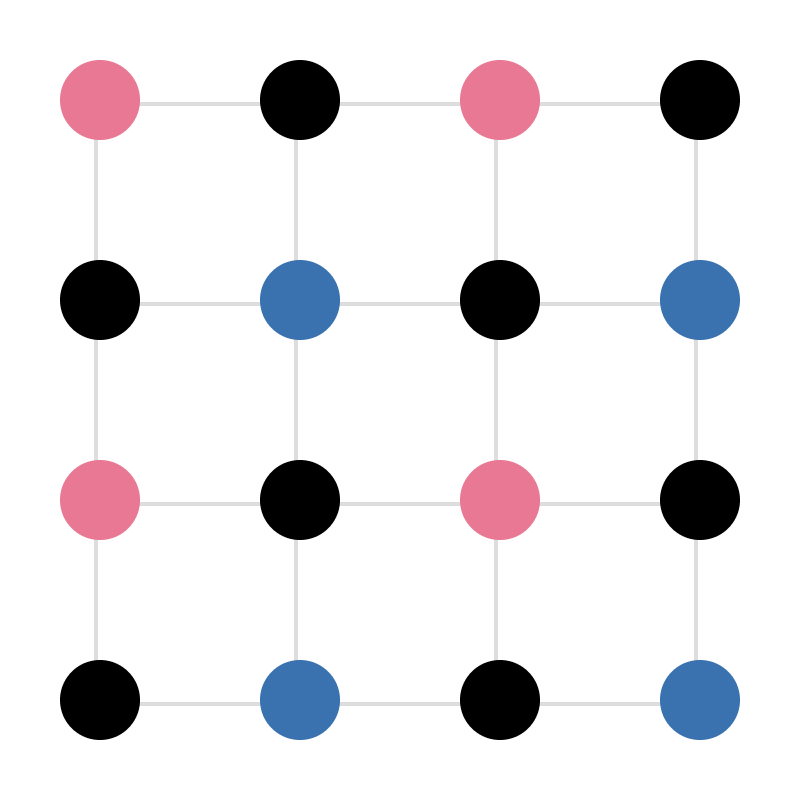}
        \caption{Layout 1}
        \label{fig:layout1}
    \end{subfigure}
    \begin{subfigure}{0.30\textwidth}
        \centering
        \includegraphics[width=0.6\linewidth]{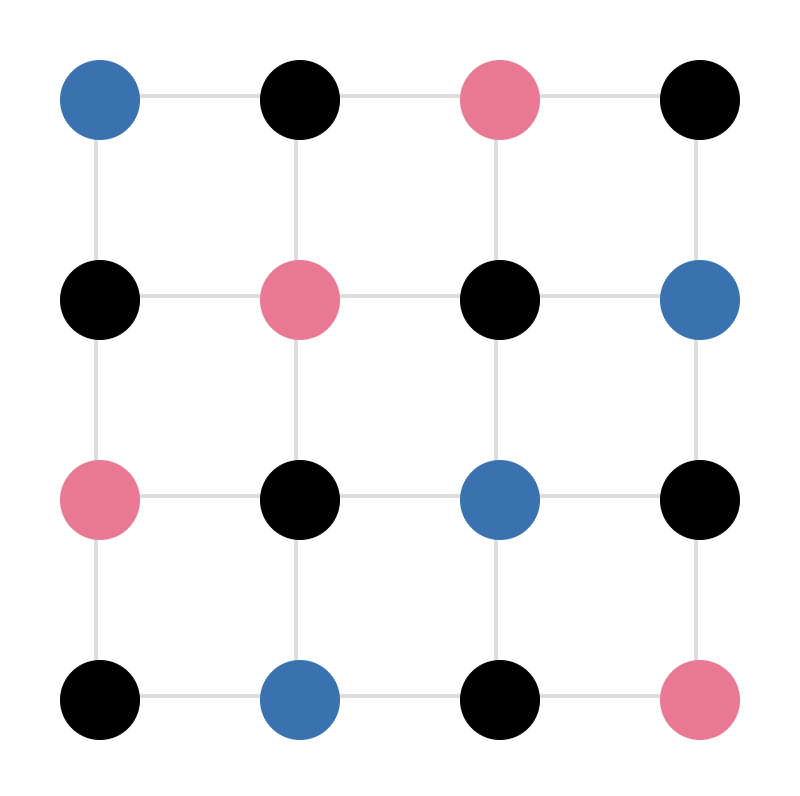}
        \caption{Layout 2}
        \label{fig:layout2}
    \end{subfigure}
    \begin{subfigure}{0.30\textwidth}
        \centering
        \includegraphics[width=0.6\linewidth]{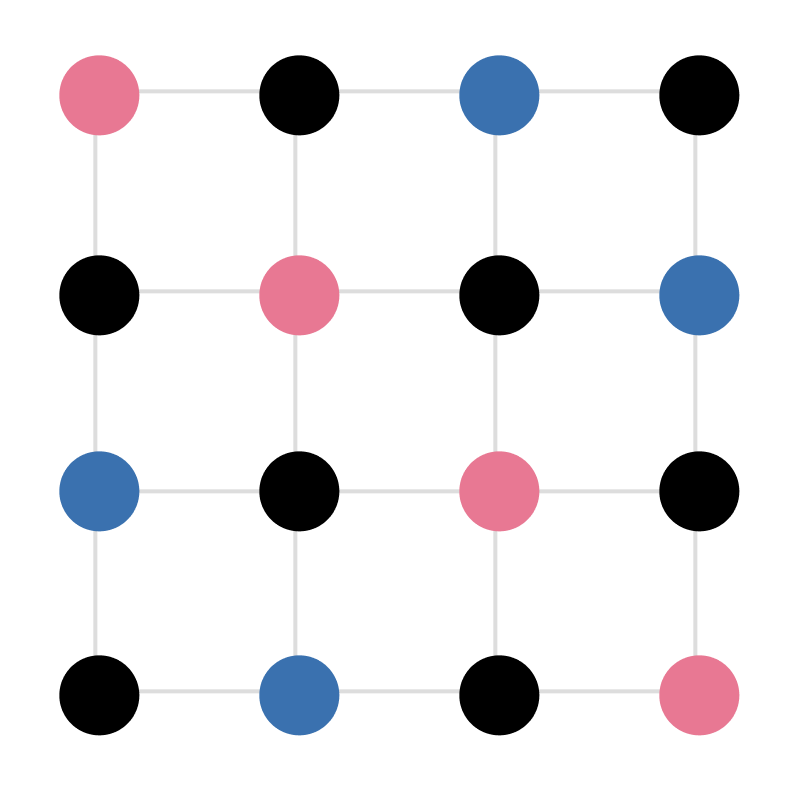}
        \caption{Layout 3}
        \label{fig:layout3}
    \end{subfigure}
    \caption{The three typical layouts $\Lambda$ for allocating stabilisers for a directional code, although other valid layouts do exist. Data/$X$-ancilla/$Z$-ancilla qubits are shown as black/red/blue dots.}
    \label{fig:Layouts}
\end{figure}

Next, we call a mapping $\Lambda\colon\Z^2_{\mathrm{anc}}\to\{X,Z\}$ a ``layout''. In the rest of the paper a layout will determine the location of the ancilla qubits for each stabiliser type. For instance, the layout considered in \Cref{sec:code-construct} where we have alternating rows of $X$- and $Z$-ancillas we call ``Layout 1'' (\Cref{fig:layout1}, see also \Cref{fig:N2E3N2_syndrome_extraction,fig:N2E3N2_on_parallelogram}), the layout where we have alternating diagonals travelling north-east we call ``Layout 2'' (\Cref{fig:layout2}), and the one where we have alternating diagonals travelling north-west we call ``Layout 3'' (\Cref{fig:layout3}). These three layouts were the most common during our investigation.

Now, we discuss circuits that use CPSWAP gates. Previously, we denoted by $A_j$ the position of the ancilla qubit of a stabiliser after the $j$th CPSWAP layer in the data qubits' frame of reference. We introduce the notation $A(j)$ for the position of the same qubit in the hardware's frame of reference. Consider a direction sequence $\vec{d_1},\vec{d_2},\dots, \vec{d_\ell}$, and a layout $\Lambda\colon\Z^2_{\mathrm{anc}}\to\{X,Z\}$. We define the circuit $\mathcal{C}(\Lambda; \vec{d_1},\vec{d_2},\dots, \vec{d_\ell})$ in the following way. 
\begin{itemize}
    \item[(0)] Reset all ancilla qubits $A\in\Z^2_{\mathrm{anc}}$ in the $|+\rangle$ state.
    \item[($j$)] ($1\leq j\leq \ell$) For each initial ancilla qubit position $A$, track where it is now due to the previous steps: $A(j-1)=A+\sum_{p=1}^{j-1}\vec{d_p}$. At each of these apply the $\CXSWAP(A(j-1),A(j-1)+\vec{d_j})$ gate if $\Lambda(A)=X$, otherwise the $\CZSWAP(A(j-1),A(j-1)+\vec{d_j})$ gate. Due to this layer, the ancilla qubit is now at position $A(j)=A(j-1)+\vec{d_j}=A+\sum_{p=1}^{j}\vec{d_p}$.
    \item[($\ell+1$)] Measure all ancilla qubits at their final locations $A(\ell)$ in the $X$ basis.
\end{itemize}
Note that with this notation the circuit from the previous subsection, that measures only $X$-stabilisers, corresponds to $\mathcal{C}(\Lambda_X; \vec{d_1},\vec{d_2},\dots, \vec{d_\ell})$ where $\Lambda_X$ is the mapping that has value $X$ everywhere. We saw that this circuit measures simultaneously and independently $D_1D_2\dots D_\ell$-stabilisers of $X$-type associated with all ancilla qubits. As another example, the circuit illustrated in \Cref{fig:N2E3N2_syndrome_extraction} is $\mathcal{C}(\Lambda; \vec{n},\vec{n},\vec{e},\vec{e},\vec{e},\vec{n},\vec{n})$ where $\Lambda$ is Layout 1. We noted in \Cref{sec:code-construct} and will prove in \Cref{cor:NaEbNa} that this circuit measures simultaneously and independently both $X$- and $Z$-stabilisers.

\begin{figure*}[!htb]
     \centering
     \begin{subfigure}{0.40\textwidth}
         \centering
         \includegraphics[width=\linewidth]{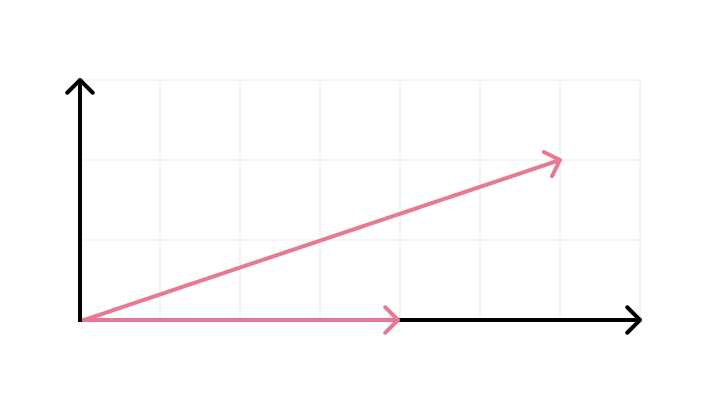}
         \caption{$\Delta_{odd}(\vec{n},\vec{e},\vec{e},\vec{e},\vec{n})$}
         \label{fig:NEEEN_DeltaVectors1}
     \end{subfigure}
     \begin{subfigure}{0.40\textwidth}
         \centering
         \includegraphics[width=\linewidth]{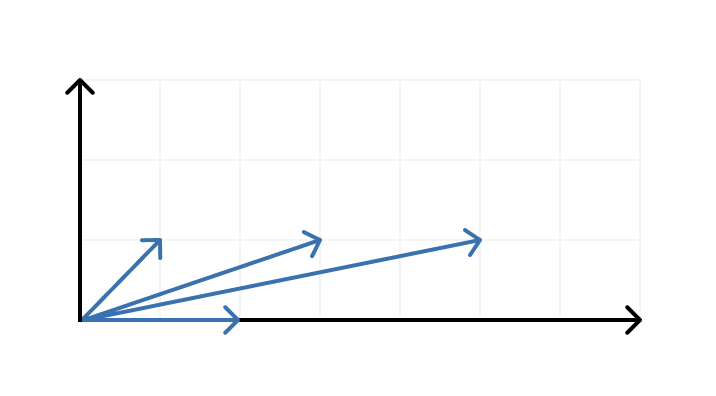}
         \caption{$\Delta(\vec{n},\vec{e},\vec{e},\vec{e},\vec{n})\setminus\Delta_{odd}(\vec{n},\vec{e},\vec{e},\vec{e},\vec{n})$}
         \label{fig:NEEEN_DeltaVectors2}
     \end{subfigure}
     \caption{Delta vectors of an $NE^3N$-stabiliser. (a) The vectors from $\Delta_{odd}(\vec{n},\vec{e},\vec{e},\vec{e},\vec{n})$ each shown in red. (b) The vectors from $\Delta(\vec{n},\vec{e},\vec{e},\vec{e},\vec{n})\setminus\Delta_{odd}(\vec{n},\vec{e},\vec{e},\vec{e},\vec{n})$ each shown in blue.}
     \label{fig:NEEEN_Delta_Vectors}
\end{figure*}

Next, our goal is to establish a set of ``if and only if'' conditions that the circuit $\mathcal{C}(\Lambda; \vec{d_1},\vec{d_2},\dots, \vec{d_\ell})$ needs to satisfy so that it measures the $D_1D_2\dots D_\ell$-stabiliser of $\Lambda(A)$-type, for all $A\in\Z^2_{\mathrm{anc}}$, simultaneously and independently. Since each CPSWAP layer moves the data qubit sublattice uniformly in one direction (see \Cref{fig:directional_layers_with_arrows}), we can use conditions (a)--(c) above in the data qubits' frame of reference for the CPSWAP case to validate circuits of the form $\mathcal{C}(\Lambda; \vec{d_1},\vec{d_2},\dots, \vec{d_\ell})$. Let us introduce some notation. We define the set of \emph{delta vectors} associated with the sequence of directions $\vec{d_1},\vec{d_2},\dots, \vec{d_\ell}$ as the set of vectors that point from an earlier scheduled data qubit $Q_i$ to a later scheduled one $Q_j$ (i.e. $i<j$). More precisely, using \Cref{eq:Q_j} this set can be written as
\begin{equation}\label{eq:delta_vector_definition}
    \Delta = \Delta(\vec{d_1},\vec{d_2},\dots, \vec{d_\ell}) := \left\{\overrightarrow{Q_iQ_j} = \vec{d_i} + 2\sum_{p=i+1}^{j-1} \vec{d_p} + \vec{d_j} \; \colon 1\leq i < j \leq \ell \right\}.
\end{equation}
Throughout the paper we shall always assume that $\vec{0}\notin\Delta$, or equivalently, that each pair of ancilla and data qubits interact at most once in the circuit $\mathcal{C}(\Lambda; \vec{d_1},\vec{d_2},\dots, \vec{d_\ell})$. The following subset of delta vectors will be useful in the proof of \Cref{thm:Thm1} to capture conditions (b)--(c):
\begin{equation}
    \Delta_{odd} = \Delta_{odd}(\vec{d_1},\vec{d_2},\dots, \vec{d_\ell}) := \left\{\vec{v} \in \Delta \; \colon \; \vec{v} = \overrightarrow{Q_iQ_j} \text{ for odd many } (i,j), 1\leq i < j \leq \ell \right\},
\end{equation}
In other words, $\Delta_{odd}$ is the set of delta vectors that appear odd many times. This set and $\Delta\setminus\Delta_{odd}$ are illustrated for $NE^3N$-stabilisers in \Cref{fig:NEEEN_Delta_Vectors}.

For a given set of vectors $S\subset\Z^2$, denote by $\SPAN_\Z (S)$ the set of all vectors that are linear combinations of vectors from $S$ with integer coefficients. Now, we are able to state and prove our theorem.

\begin{theorem}
\label{thm:Thm1}
Consider a layout $\Lambda\colon\Z^2_{\mathrm{anc}}\to\{X,Z\}$ and a sequence of directions $\vec{d_1},\vec{d_2},\dots, \vec{d_\ell}$, where $\vec{d_j} \in \{\vec{n},\vec{e},\vec{s},\vec{w}\}$, $\ell\in\Z$ and $\ell>0$. Assume $\vec{0} \notin \Delta$. Define the sub-lattice $\mathcal{L} \coloneq \SPAN_\Z (\Delta_{odd})$.
Then the circuit $\mathcal{C}(\Lambda;\vec{d_1},\vec{d_2},\dots, \vec{d_\ell})$ measures all the $D_1D_2\dots D_\ell$-stabilisers simultaneously and independently according to the layout $\Lambda$ if and only if 
\begin{equation}\label{eq:sublatt_colouring_constraint}
    \Lambda(A) = \Lambda(A') \text{ holds for all } A, A' \in \Z^2_{\mathrm{anc}}, \overrightarrow{AA'\;\;}\in\mathcal{L}.
\end{equation}
\end{theorem}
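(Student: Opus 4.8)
The plan is to reduce the CPSWAP circuit $\mathcal{C}(\alpha;\vec{d_1},\dots,\vec{d_w})$ to an ordinary controlled-Pauli syndrome-extraction circuit in the data qubits' frame of reference, and then apply the known criterion (a)--(c) from \Cref{sec:schedule_conflicts} pairwise to all stabilisers. First I would make precise the ``frame of reference'' observation already used informally: because each CPSWAP is a controlled-Pauli gate composed with a SWAP, the whole circuit $\mathcal{C}(\alpha;\vec{d_1},\dots,\vec{d_w})$ is, up to relabelling which physical qubit carries which logical wire, exactly the circuit that applies, for each ancilla $A$ with $\alpha(A)=P$, the gates $\mathcal{M}(P;A;Q_1^A,\dots,Q_w^A)$ where $Q_j^A$ is the data qubit sitting at $A_0 + \overrightarrow{A_0Q_j}$ with $\overrightarrow{A_0Q_j}$ given by \eqref{eq:Q_j}. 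The SWAP-parts only permute the physical carriers and never entangle anything, so they cannot affect whether the measurements are simultaneous and independent; hence $\mathcal{C}$ does the right thing if and only if the corresponding pile of $\mathcal{M}(P;A;\cdot)$ circuits does.

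Next I would analyse pairs of ancillas. Fix $A,A'\in\Z^2_{anc}$ with $\alpha(A)=X$, $\alpha(A')=Z$ (the same-type case needs only condition (a), which is automatic once $\vec 0\notin\Delta$, so it imposes nothing). The shared data qubits of the two stabilisers are indexed by pairs $(i,j)$ with $Q_i^A = Q_j^{A'}$, i.e. with $\overrightarrow{A_0Q_i} - \overrightarrow{A_0'Q_j'} = \overrightarrow{A_0'A_0} = \overrightarrow{A'A}$ after translating; using \eqref{eq:Q_j} this says $\overrightarrow{AA'}$ equals a specific vector $\vec{v}(i,j)$ built from the direction sequence. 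Condition (b) asks that the number of such pairs with $i<j$ be even, and (c) the same for $i>j$. The key combinatorial identity to extract is that the multiset $\{\vec v(i,j)\}$ over \emph{all} pairs $(i,j)$, $1\le i\ne j\le w$, is symmetric under $(i,j)\mapsto(j,i)$ up to sign — more precisely $\vec v(j,i) = -\vec v(i,j)$ — so the count of $i<j$ pairs realising $\overrightarrow{AA'}$ has the same parity as the count of $i>j$ pairs realising $-\overrightarrow{AA'}$, and for the diagonal-type constraints I would match these up with the definition of $\Delta$ in \eqref{eq:delta_vector_definition}. The upshot should be: (b) and (c) both hold for the pair $(A,A')$ if and only if $\overrightarrow{AA'}\notin\Delta_{odd}$, because $\Delta_{odd}$ is by definition exactly the set of displacement vectors realised an odd number of times as some $\overrightarrow{Q_iQ_j}$.

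Then the theorem follows by a span argument. The pairwise condition just derived says: the circuit works iff for every $X$-ancilla $A$ and every $Z$-ancilla $A'$ we have $\overrightarrow{AA'}\notin\Delta_{odd}$. I would then show this is equivalent to \eqref{eq:sublatt_colouring_constraint}, i.e. to $\alpha$ being constant on cosets of $\mathcal{L}=\SPAN_\Z(\Delta_{odd})$. One direction is immediate: if $\alpha$ is constant on $\mathcal{L}$-cosets and $\overrightarrow{AA'}\in\Delta_{odd}\subseteq\mathcal{L}$ then $\alpha(A)=\alpha(A')$, contradicting mixed types, so no bad pair exists. For the converse I would argue contrapositively: if $\alpha$ is \emph{not} constant on some $\mathcal{L}$-coset, pick $A,A'$ in that coset with $\alpha(A)\ne\alpha(A')$; then $\overrightarrow{AA'}\in\mathcal{L}$ is an integer combination $\sum_k c_k \vec u_k$ of elements $\vec u_k\in\Delta_{odd}$, and by walking from $A$ to $A'$ in steps drawn from $\pm\Delta_{odd}$ (noting $\Delta_{odd}$ is symmetric under negation, or adjoining negatives harmlessly) one encounters two consecutive ancilla positions differing by a single element of $\Delta_{odd}$ across which the colour changes — that adjacent pair is a bad pair. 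Here one must check the intermediate lattice points all lie in $\Z^2_{anc}$, which holds because every delta vector has even coordinate-sum (each $\vec d_\ell$ has odd sum and \eqref{eq:delta_vector_definition} adds an even number of them plus two more, wait — more carefully, $\overrightarrow{Q_iQ_j}$ in \eqref{eq:delta_vector_definition} is a sum of $\vec d_i + \vec d_j$ plus twice something, and $\vec d_i+\vec d_j$ has even coordinate-sum, so the whole vector does), hence translating an ancilla position by a delta vector keeps it on $\Z^2_{anc}$.

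The main obstacle I anticipate is the bookkeeping in the second paragraph: correctly translating conditions (b) and (c) — which are about \emph{ordered} schedule positions within \emph{two different} syndrome-extraction circuits that may themselves share data qubits with multiplicity — into a clean statement about the symmetric difference structure captured by $\Delta_{odd}$, and in particular verifying that the ``$i<j$'' and ``$i>j$'' counts for a \emph{single} pair $(A,A')$ (rather than a symmetrised pair) combine to exactly the parity of $|\{(i,j): \overrightarrow{Q_iQ_j}=\overrightarrow{AA'}\}|$. I would handle this by writing out \eqref{eq:Q_j} explicitly, noting that $Q_i^A = Q_j^{A'}$ forces a relation symmetric in the roles of the two circuits, and carefully tracking that swapping the roles of $A$ and $A'$ swaps conditions (b) and (c) while negating the displacement — so the conjunction of (b) and (c) is symmetric and matches the (unordered) odd-multiplicity condition. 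Everything else is either the standard criterion, quoted, or elementary lattice manipulation.
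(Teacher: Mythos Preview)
Your approach is essentially the paper's: pass to the data-qubit frame so that the CPSWAP circuit becomes a parallel family of ordinary $\mathcal{M}(P;A;\cdot)$ circuits, apply the criterion (a)--(c) pairwise, and then close under integer combinations to reach $\mathcal{L}=\SPAN_\Z(\Delta_{odd})$. The paper's computation is exactly your second paragraph, done via the identity $\overrightarrow{Q_iQ'_j}=\overrightarrow{A_0A'_0}+\overrightarrow{Q_iQ_j}$, from which (a) is automatic, (b) becomes $-\overrightarrow{A_0A'_0}\notin\Delta_{odd}$, and (c) becomes $\overrightarrow{A_0A'_0}\notin\Delta_{odd}$; the span step is then declared ``straightforward'' where you spell out the coset-walk.

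One point to tighten: your intermediate claim ``(b) and (c) both hold for the pair $(A,A')$ iff $\overrightarrow{AA'}\notin\Delta_{odd}$'' is not quite right as stated---the per-pair condition is $\pm\overrightarrow{AA'}\notin\Delta_{odd}$, since (b) and (c) separately control the two signs. You clearly anticipate this in your obstacle paragraph (swapping roles negates the displacement and swaps (b) with (c)), and it is harmless for the span conclusion because $\SPAN_\Z(\Delta_{odd})=\SPAN_\Z(\Delta_{odd}\cup(-\Delta_{odd}))$; but in the walk argument you should allow the ``bad adjacent step'' to land in $-\Delta_{odd}$ as well, which is exactly what the two-sided per-pair condition furnishes. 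Also note that $\Delta_{odd}$ need not itself be closed under negation (your parenthetical ``noting $\Delta_{odd}$ is symmetric under negation'' is not generally true), so the ``adjoining negatives harmlessly'' alternative is the correct phrasing.
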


Before we prove our theorem, we explain its meaning through some special cases. According to the theorem, the only restriction on the layout $\Lambda$ is forced via the vectors of $\Delta_{odd}$, more precisely via the sub-lattice $\mathcal{L}$, but otherwise we are free to choose the type of the ancilla qubits. Consider the case of the $NE^2N$-stabilisers. Since we have $\Delta(\vec{n},\vec{e},\vec{e},\vec{n})_{odd} = \{2\vec{e},4\vec{e}+2\vec{n}\}$, we have $\mathcal{L} = \{2x\vec{n} + 2y\vec{e}\colon x,y\in\Z\}$. Therefore, a valid layout $\Lambda$ for the $NE^2N$ case needs to be constant on both sublattices $\mathcal{L}$ and $\mathcal{L}+\vec{e}+\vec{n}$. In particular, we may only get a quantum CSS code (after wrapping around a torus, see \Cref{sec:app-parallelograms}) if $\Lambda$ is ``Layout 1'', up to translation, which is what we had in \Cref{sec:app-motivating_example}. We note that the $NE^2N$-stabilisers require the full connectivity of the square-grid architecture (\Cref{fig:square_grid}).

As a further example, which this time gives a novel CSS code with weight-$5$ stabilisers, consider the $NE^3N$-stabilisers. We already established that $\Delta(\vec{n},\vec{e},\vec{e},\vec{e},\vec{n})_{odd} = \{4\vec{e},6\vec{e}+2\vec{n}\}$ (\Cref{fig:NEEEN_Delta_Vectors}, see also \Cref{fig:NE3N_stabilisers}), and thus $\mathcal{L} = \{2x(\vec{e}+\vec{n}) + 2y(\vec{e}-\vec{n})\colon x,y\in\Z\}$. Therefore, we have four sublattices for this case on which we are free to choose the value of the layout $\Lambda$. In particular, we may choose any of the layouts from \Cref{fig:Layouts}. We point out that remarkably the syndrome extraction circuit of $NE^3N$-stabilisers uses only the hexagonal-grid connectivity (\Cref{fig:hex_grid}). As another weight-$5$ example, consider the $NE^2SW$-stabilisers, like in \Cref{fig:NE2SW_stabilisers}. This is a particularly interesting case, as both $\vec{e}+\vec{n}$ and $-(\vec{e}+\vec{n})$ are in $\Delta_{odd}$. As such if we have differing ancilla types at $A$ and $A+\vec{e}+\vec{n}$, then even though the corresponding stabilisers commute, conditions (b)--(c) are not met. Therefore the same stabiliser type is enforced by \Cref{thm:Thm1} for these two. Also, since $\vec{e}-\vec{n}\in\Delta_{odd}$, we have $\mathcal{L} = \Z^2_{\mathrm{anc}}$, therefore the $NE^2SW$ case cannot give rise to a quantum CSS code.

\begin{figure*}[!htb]
    \centering
    \begin{subfigure}{0.48\textwidth}
        \centering
        \includegraphics[width=\linewidth]{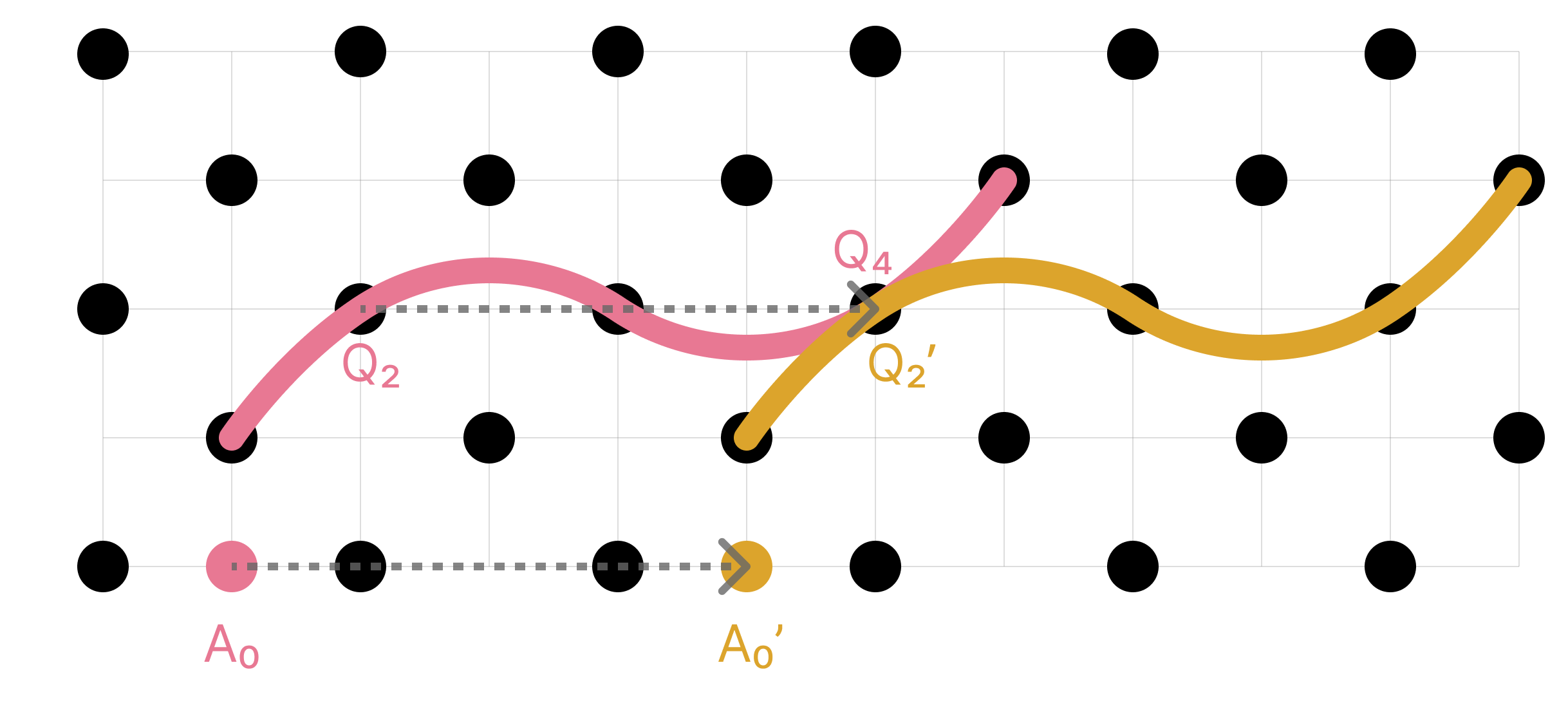}
        \caption{Same stabiliser type enforced.}
    \end{subfigure}
    \begin{subfigure}{0.48\textwidth}
        \centering
        \includegraphics[width=\linewidth]{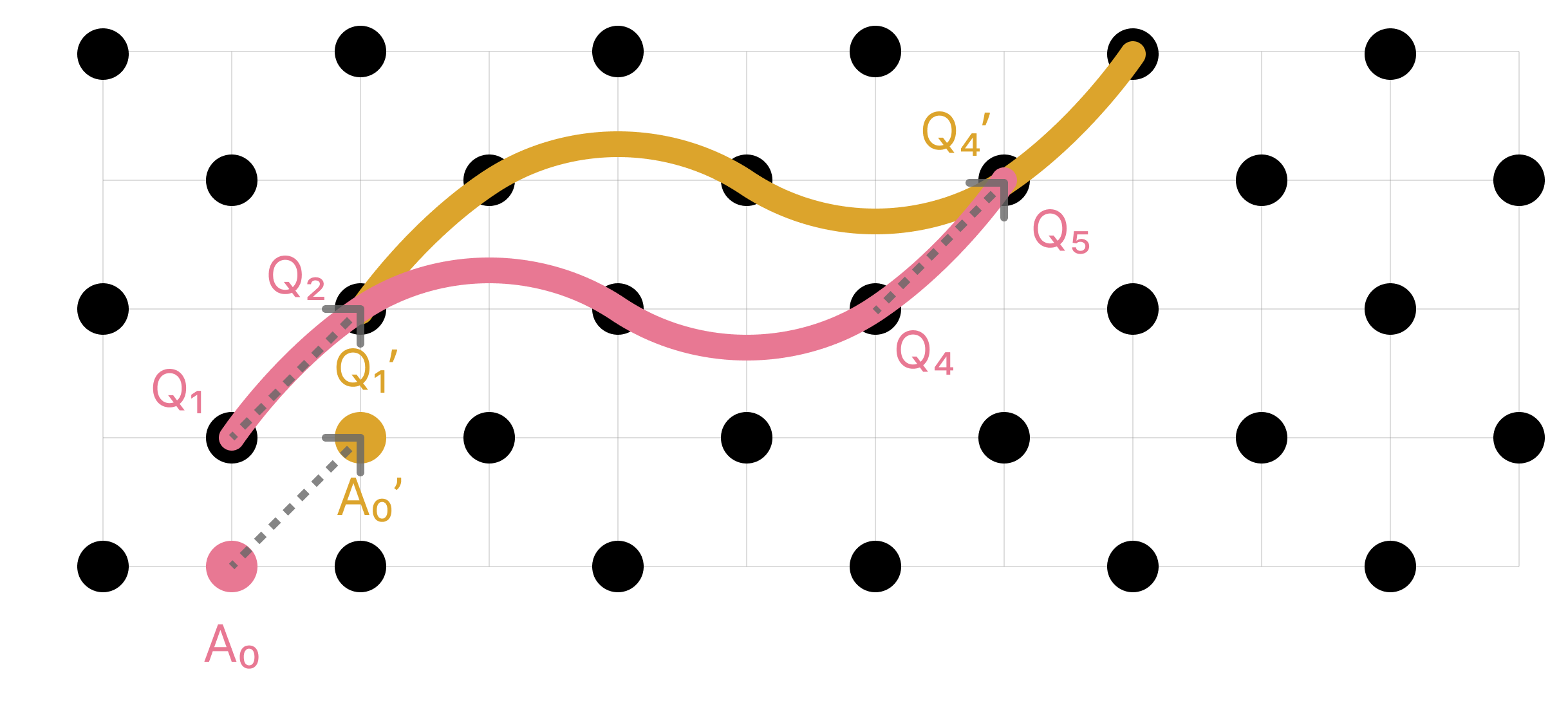}
        \caption{Differing stabiliser types allowed.}
    \end{subfigure}
    \caption{Two $NE^3N$-stabilisers in red and orange that are separated by the vector $\overrightarrow{A_0A'_0\;}$. Note that $\overrightarrow{A_0A'_0\;}=\overrightarrow{Q_iQ'_i\;}$ for all $i$. Black dots are data qubits, the red and orange dots are the two ancilla qubits associated with the two shown stabilisers. The rest of the stabilisers and ancilla qubits are not shown. In (a) $\overrightarrow{A_0A'_0\;} = 4\vec{e} \in\Delta_{odd}$, hence \Cref{thm:Thm1} forces them to be of the same Pauli type. In (b) $\pm\overrightarrow{A_0A'_0\;} = \pm(\vec{n}+\vec{e}) \notin\Delta_{odd}$, hence they can be of differing Pauli type, say the red one $X$ and the orange one $Z$.}
    \label{fig:NE3N_stabilisers}
\end{figure*}

\begin{figure*}[!htb]
    \centering
    \begin{subfigure}{0.35\textwidth}
        \centering
        \includegraphics[width=\linewidth]{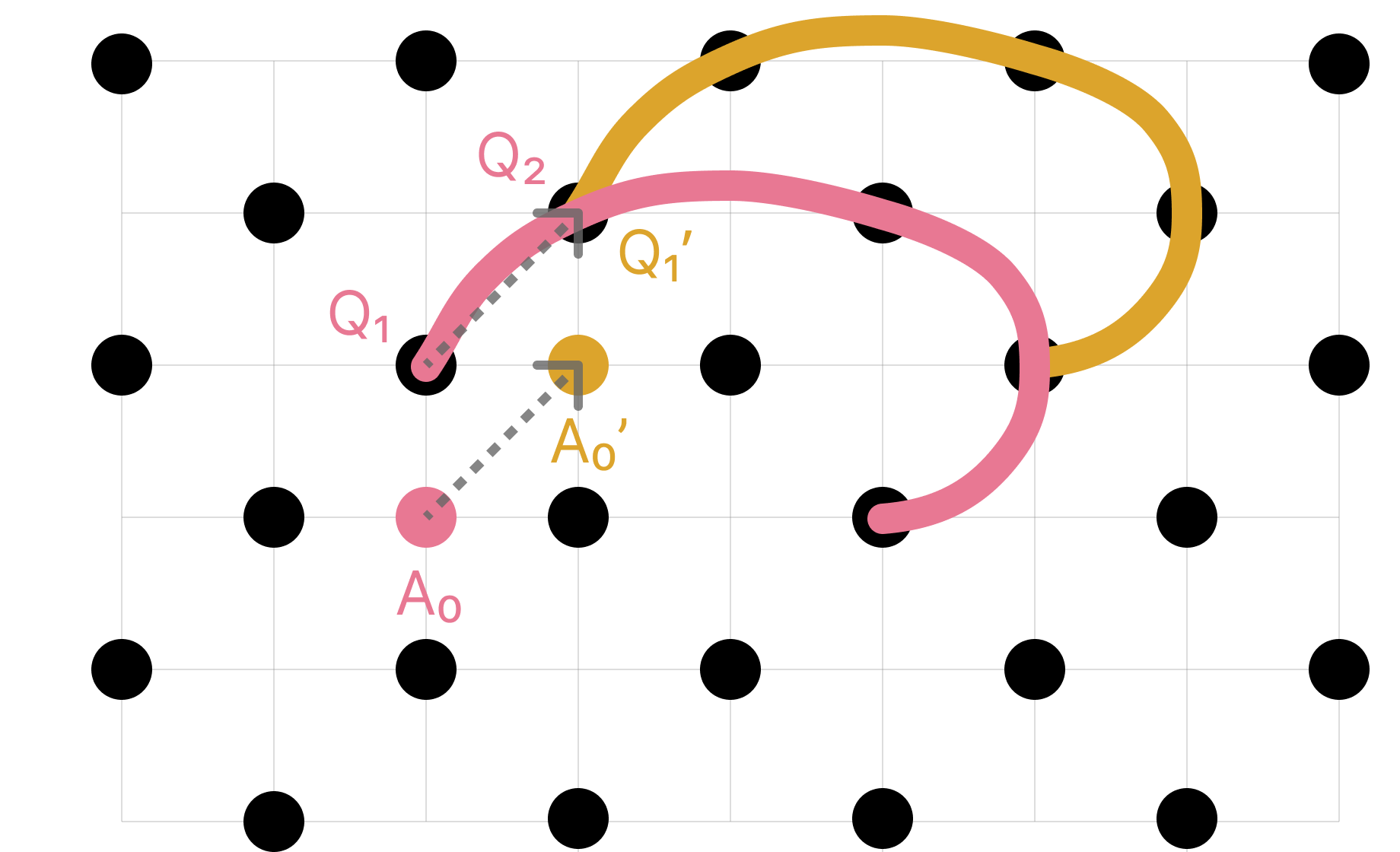}
        \caption{Same stabiliser type enforced.}
    \end{subfigure}
    \hspace{1cm}
    \begin{subfigure}{0.35\textwidth}
        \centering
        \includegraphics[width=\linewidth]{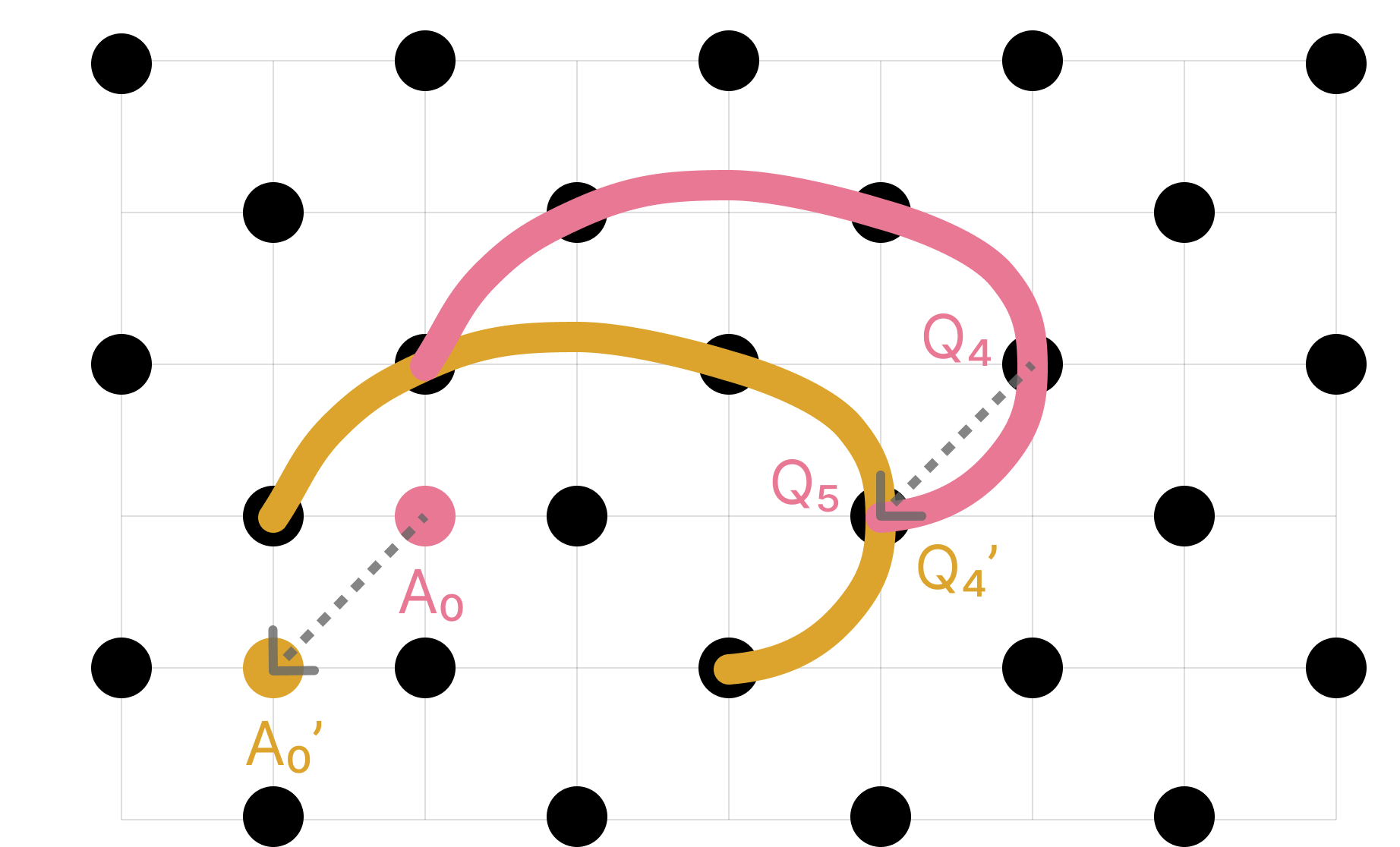}
        \caption{Same stabiliser type enforced.}
    \end{subfigure}
    \caption{Two $NE^2SW$-stabilisers in red and orange that are separated by the vector $\overrightarrow{A_0A'_0\;}$. In both (a) and (b) $\vec{e}+\vec{n},-(\vec{e}+\vec{n})\in\Delta_{odd}$. In (a) $\overrightarrow{A_0A'_0\;} = \vec{e}+\vec{n}$, and in (b) $\overrightarrow{A_0A'_0\;} = -(\vec{e}+\vec{n})$. Therefore, by \Cref{thm:Thm1}, the two stabilisers are forced to be of the same type in both cases. Even though the stabilisers would commute with differing Pauli types, the circuit $\mathcal{C}(\Lambda; \vec{n},\vec{e},\vec{e},\vec{s},\vec{w})$ would entangle the ancilla qubits $A_0$, $A'_0$.}
    \label{fig:NE2SW_stabilisers}
\end{figure*}

\begin{proof}[Proof of \Cref{thm:Thm1}]
We will consider everything in the data qubit sublattice's frame of reference. Note then that the CPSWAP-version of conditions (a)--(c), that were stated in the beginning of this subsection, remain the same, as the data qubits always move uniformly in one direction. So our goal from now on is to check against (a)--(c). Consider two ancilla qubits $A_0,A'_0\in\Z^2_{\mathrm{anc}}$, $A_0 \neq A'_0$, and the stabilisers they are associated with: $\prod_{j=1}^w \Lambda(A_0)_{Q_j}$ and $\prod_{j=1}^w \Lambda(A'_0)_{Q'_j}$. Note that
\begin{equation}\label{eq:shift_qubits}
    \overrightarrow{Q_iQ'_j\;} = \overrightarrow{Q_iQ_j} + \overrightarrow{Q_jQ'_j\;} = \overrightarrow{A_0A'_0\;} + \overrightarrow{Q_iQ_j} \quad \text{for all} \; i,j,
\end{equation}
see \Cref{eq:Q_j} and \Cref{fig:NE3N_stabilisers,fig:NE2SW_stabilisers}. Substituting $j=i$ in \Cref{eq:shift_qubits} gives 
\begin{equation}\label{eq:primed_shift_same}
    \overrightarrow{A_0A'_0\;}=\overrightarrow{Q_iQ'_i\;} \quad \text{for all} \; i,
\end{equation}
and hence $Q_i\neq Q'_i$. Therefore we conclude that condition (a) always holds, regardless of the layout $\Lambda$. 

Note that we only need to check conditions (b)--(c) when $\Lambda(A_0)\neq\Lambda(A'_0)$. From \Cref{eq:shift_qubits} we obtain
\begin{equation}
    \{(i,j)\colon Q_i = Q'_j, i<j\} = \{(i,j)\colon \overrightarrow{Q_iQ_j} = -\overrightarrow{A_0A'_0\;}, i<j\}.
\end{equation}
Therefore, (b) is satisfied if and only if $-\overrightarrow{A_0A'_0\;}\notin\Delta_{odd}$. By symmetry, we conclude that (c) holds if and only if $\overrightarrow{A_0A'_0\;}\notin\Delta_{odd}$.

From the above observations we see that the layout $\Lambda$ gives a valid circuit if and only if 
\begin{equation}
    \Lambda(A) = \Lambda(A') \text{ holds for all } A, A' \in \Z^2_{\mathrm{anc}}, \overrightarrow{AA'\;\;} \text{ or } -\overrightarrow{AA'\;\;} \in\Delta_{odd}.
\end{equation}
This means that shifting the stabiliser by a vector or its opposite from $\Delta_{odd}$ forces the same Pauli type on the two stabilisers. Clearly, if we shift again by another such vector, that third stabiliser is still forced to have the same Pauli type. Therefore, by induction it is straightforward to see that we have a valid circuit if and only if \Cref{eq:sublatt_colouring_constraint} holds, which concludes the proof.
\end{proof}

We prove the following consequence of \Cref{thm:Thm1}, where we note that in the case when $\alpha=\beta=1$, the data qubits of the obtained $NEN$-stabilisers lie on diagonal lines.

\begin{corollary}\label{cor:NaEbNa}
    For all $\alpha,\beta\geq 1$ integers, Layout 1 is a valid layout for $N^\alpha E^\beta N^\alpha$-stabilisers.
\end{corollary}

\begin{proof}
    Let $\Lambda$ be Layout 1. Note that there are two types of vectors of the form $\overrightarrow{AA'\;\;}$ with $A, A' \in \Z^2_{\mathrm{anc}}$:
    \begin{itemize}
        \item when $\Lambda(A) \neq \Lambda(A')$, then both coordinates are odd, i.e. $\overrightarrow{AA'\;\;} = (2x+1)\vec{e}+(2y+1)\vec{n}$ with some $x,y\in\Z$;
        \item when $\Lambda(A) = \Lambda(A')$, then both coordinates are even, i.e. $\overrightarrow{AA'\;\;} = 2x\vec{e}+2y\vec{n}$ with some $x,y\in\Z$.
    \end{itemize}
    We prove now that any vector from $\Delta_{odd}$ is of the second form. From \Cref{eq:delta_vector_definition} we can see that any vector from $\Delta$ is of the form
    \begin{equation}
        \overrightarrow{Q_iQ_j} = \vec{d_i} + 2\sum_{p=i+1}^{j-1} \vec{d_p} + \vec{d_j}
    \end{equation}
    Since $\vec{d_i}, \vec{d_j}\in\{\vec{n},\vec{e}\}$, the vector $\overrightarrow{Q_iQ_j}$ has odd coordinates if and only if $\vec{d_i} \neq \vec{d_j}$. Furthermore, based on the specific structure of the direction sequence, it is straightforward that this happens if and only if either $i \leq \alpha < j \leq \alpha+\beta$ or $\alpha < i \leq \alpha+\beta < j$. A simple calculation gives that, on the one hand, 
    \begin{center}
        if $i \leq \alpha < j \leq \alpha+\beta$, then $\alpha < (2\alpha+\beta+1-j) \leq \alpha+\beta < (2\alpha+\beta+1-i)$;
    \end{center} and, on the other hand, 
    \begin{center}
        if $\alpha < i \leq \alpha+\beta < j$ then $(2\alpha+\beta+1-j) \leq  \alpha < (2\alpha+\beta+1-i) \leq \alpha+\beta$.
    \end{center}
    But this also shows that whenever the delta vector $\overrightarrow{Q_iQ_j}$ has odd coordinates, then automatically we have $\{i,j\}\neq \{2\alpha+\beta+1-j,2\alpha+\beta+1-i\}$ and $\overrightarrow{Q_iQ_j}=\overrightarrow{Q_{2\alpha+\beta+1-j}Q_{2\alpha+\beta+1-i}}$, and as such $\overrightarrow{Q_iQ_j}\notin\Delta_{odd}$.

    From here, we see that indeed all elements of $\Delta_{odd}$ have even coordinates, and therefore, the same holds for all elements of the generated sub-lattice $\mathcal{L}$. This implies that ``Layout 1'' is indeed valid.
\end{proof}

\begin{table}[!htb]
\footnotesize
    \begin{tabular}[t]{|c|c|c|}        
    \hline
        \multicolumn{3}{|c|}{Weight 4} \\
        \hline
        Direction & Valid Layouts & Connectivity \\ [0.5ex]
        \hline
        $NE^2N$&1&square-grid\\ \hline

        \multicolumn{3}{|c|}{Weight 5} \\
        \hline
        Direction & Valid Layouts & Connectivity \\ [0.5ex]
        \hline
        $NE^3N$&1,2,3&hex-grid\\ \hline
        $NESEN$&1,2,3&hex-grid\\ \hline
        $N^2EN^2$&1,2,3&hex-grid\\ \hline

        \multicolumn{3}{|c|}{Weight 6} \\
        \hline
        Direction & Valid Layouts & Connectivity \\ [0.5ex]
        \hline
        $NE^4N$&1&square-grid\\ \hline
        $NEN^2EN$&1&square-grid \\ \hline   
        $NENWSW$&3&square-grid\\ \hline
        $NES^2EN$&1&square-grid \\ \hline  
        $N^2E^2N^2$&1&square-grid \\ \hline  
    \end{tabular}
    \hfill
    \begin{tabular}[t]{|c|c|c|}
    \hline
        \multicolumn{3}{|c|}{Weight 7} \\
        \hline
        Direction & Valid Layouts & Connectivity \\ [0.5ex]
        \hline
        $NE^5N$&1,2,3&hex-grid\\ \hline  
        $NE^2NE^2N$&1,2,3&square-grid\\ \hline  
        $NE^2SE^2N$&1,2,3&hex-grid\\ \hline  
        $NEN^3EN$&1,2,3&square-grid\\ \hline  
        $NENWNEN$&1,2,3&hex-grid\\ \hline  
        $NESESEN$&1,2,3&square-grid\\ \hline  
        $NES^3EN$&1,2,3&square-grid\\ \hline  
        $NES^2WNE$&1&square-grid\\ \hline  
        $NESWSEN$&1,2,3&square-grid\\ \hline  
        $NESW^2NE$&1&square-grid\\ \hline  
        $N^2E^3N^2$&1,2,3&square-grid\\ \hline  
        $N^2ENEN^2$&1,2,3&square-grid\\ \hline  
        $N^2ESEN^2$&1,2,3&square-grid\\ \hline  
        $N^3EN^3$&1,2,3&square-grid\\ \hline  
    \end{tabular}
    \caption{All direction sequences $\vec{d_1},\vec{d_2},\dots, \vec{d_\ell}$ with $4\leq \ell \leq 7$ satisfying \Cref{thm:Thm1}. Their valid layouts that give rise to balanced CSS codes (after wrapping around a torus) are also shown. Most of these require the full square-grid connectivity, but as indicated there are some that only need the sparser hexagonal-grid. We note that the $N^2E^4N^2$ case is not included here, as that has weight-$8$ stabilisers.}
    \label{tab:directions_and_layouts}
\end{table}

Throughout our research we focused on balanced CSS codes that are two-dimensional. For these, we only encountered the three layouts from \Cref{fig:Layouts}. In \Cref{tab:directions_and_layouts} we list all the direction sequences between length $4$ and $7$ that satisfy the conditions of \Cref{thm:Thm1} with at least one of these three layouts. We also included in \Cref{tab:directions_and_layouts} whether the circuit uses square- or hexagonal-grid connectivity. These direction sequences were found via an exhaustive search, where we assumed that the first direction is always $\vec{n}$, and the first direction after that which is different from it is $\vec{e}$. We did not include direction sequences of length $8$ in \Cref{tab:directions_and_layouts}, but we emphasise that the $N^2E^4N^2$ case was benchmarked in \Cref{sec:simulations} and \Cref{sec:app-additional_results}. If one increases the length of the direction sequence further, one will find further valid direction sequences. However, as we increase the length we have a price to pay: the depth of the syndrome extraction circuit is increased, and hence it will become more noisy and potentially introduce more so-called bad hook errors \cite{TopologicalQuantumMemory,PhysRevA.90.062320,TangledSchedules,geher_error-corrected_2023}. Furthermore, the circuits grow larger and thus obtaining numerical results with Tesseract becomes infeasible. As such, we decided to restrict our code search and simulations for directional codes that have stabilisers of weight at most $8$.


\subsection{Wrapping around parallelograms}
\label{sec:app-parallelograms}

\begin{figure}[!htb]
     \centering
     \begin{subfigure}{0.45\textwidth}
         \centering
         \includegraphics[width=\linewidth]{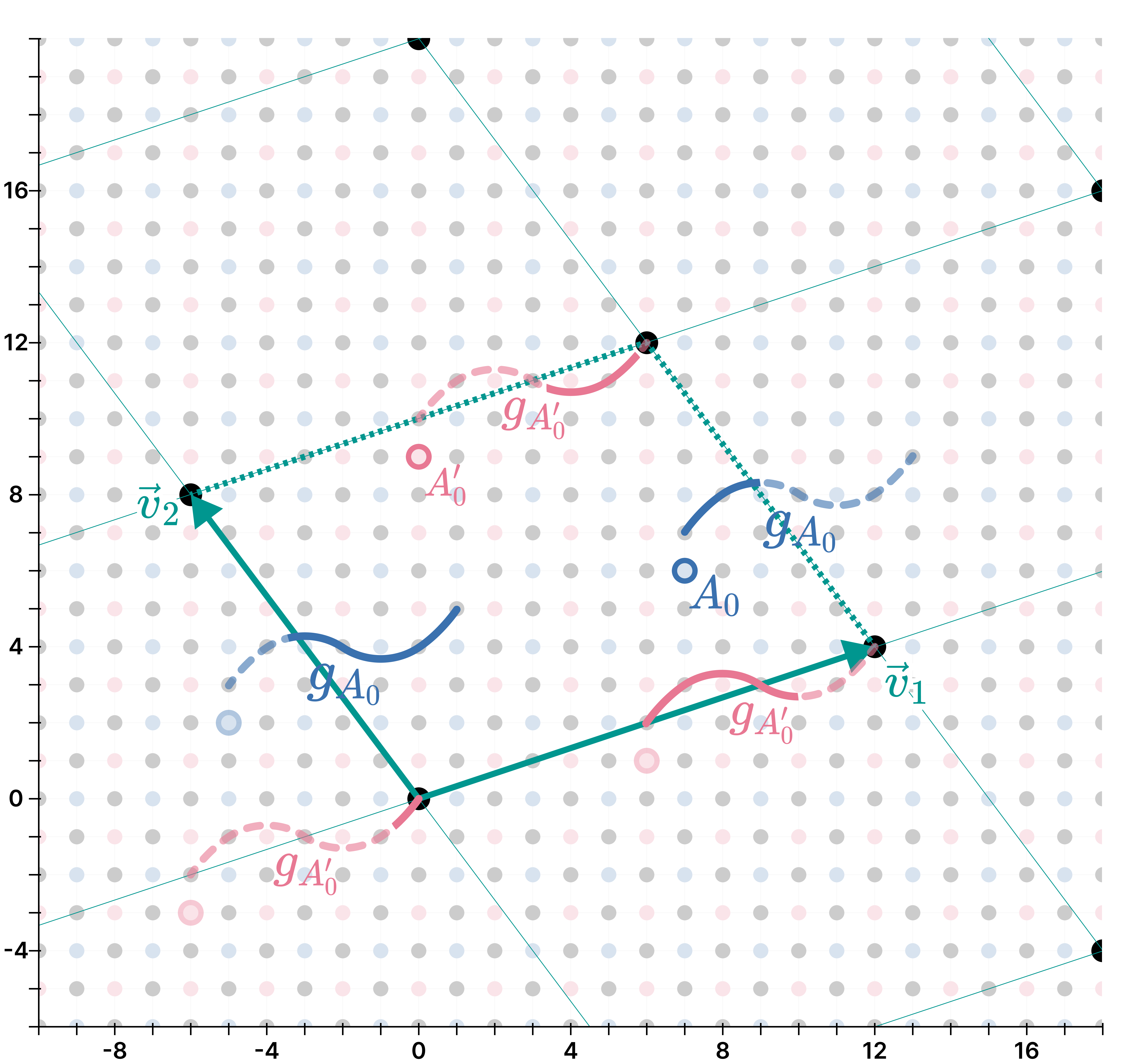}
         \caption{A parallelogram, its associated (twisted) torus, and an $X$- and $Z$-stabiliser.}
         \label{fig:parallelogram_lattice_sub}
     \end{subfigure}
     \begin{subfigure}{0.45\textwidth}
         \centering
         \includegraphics[width=\linewidth]{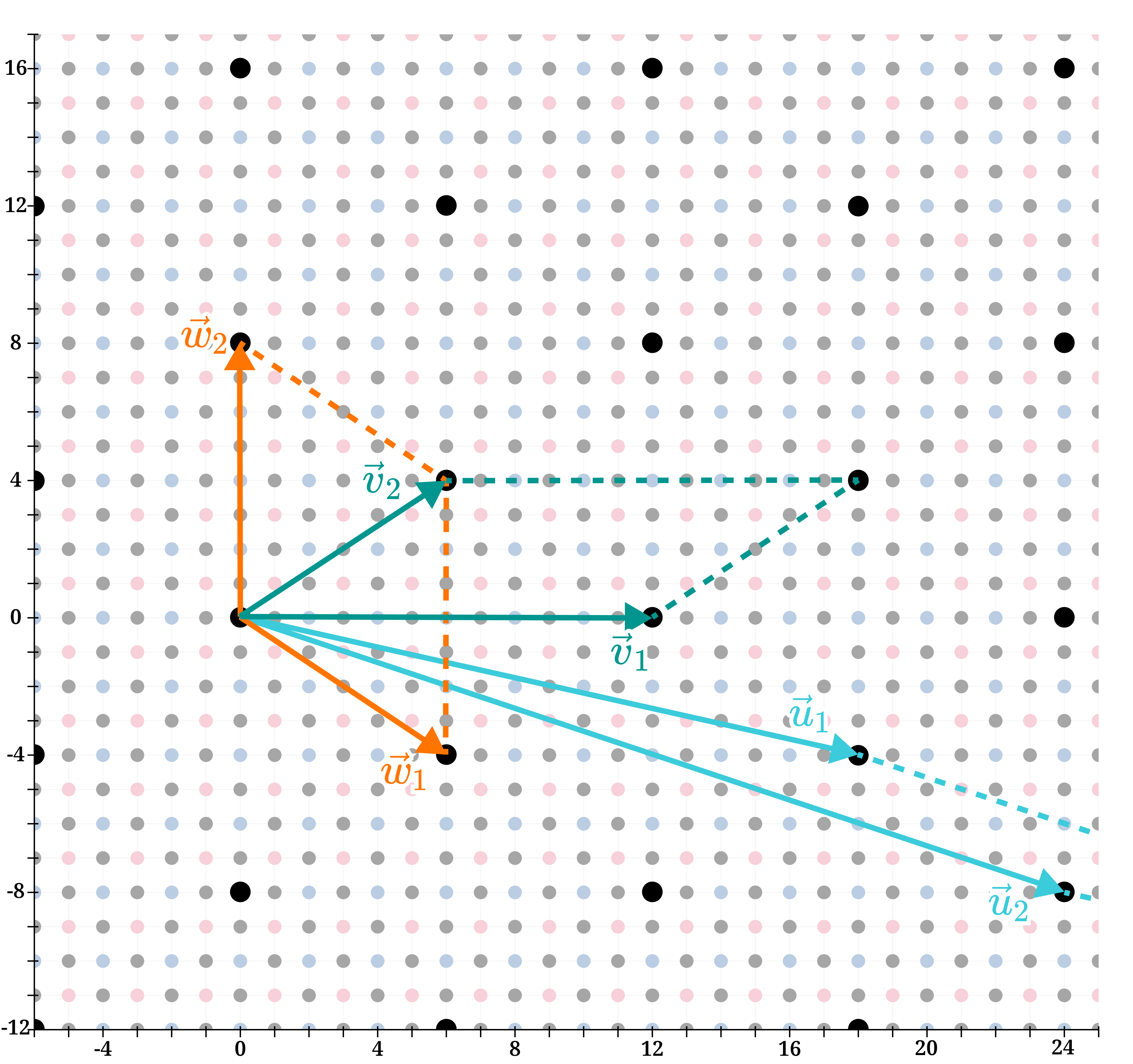}
         \caption{Three equivalent parallelograms that define the same (twisted) torus.}
         \label{fig:equivalent_parallelogram}
     \end{subfigure}
    \caption{
    We show examples of parallelograms on $\Z^2$, as well as examples of equivalent parallelograms that therefore give equivalent tori and code instances. (a) The parallelogram $\mathcal{P}(\vec{v}_1=(12, 4), \vec{v}_2 = (-6, 8))$ in green and its associated sublattice $\mathcal{K}$ as large black points. This parallelogram with Layout 1 defines an $NE^3N$-code with four logical qubits. Two examples of $NE^3N$-stabilisers defined across the torus boundary are shown. Following the directions ordering, $g_{A_0}$ in blue is the $Z$-stabiliser associated with the starting ancilla qubit $A_0=(7, 6)$, and is supported on data qubits $(7, 7)$, $(8, 8)$, $(-2, 4) \equiv (10, 8)\;(\modulo\mathcal{K})$, $(0, 4)$ and $(1, 5)$. Similarly, $g_{A_0'}$ in red is the $X$-stabiliser associated with the starting ancilla qubit $A_0'=(0,9)$, and is supported on data qubits $(6, 2) \equiv (0,10)\;(\modulo\mathcal{K})$, $(7, 3)$, $(9, 3)$, $(5, 11) \equiv (11, 3)\;(\modulo\mathcal{K})$, and  $(0, 0) \equiv (6, 12)\;(\modulo\mathcal{K})$.
    (b) An example of three equivalent parallelograms: $\mathcal{P}(\vec{v}_1=(12, 0), \vec{v}_2=(6, 4))$ in green, $\mathcal{P}(\vec{w}_1=(6, -4), \vec{w}_2=(0, 8))$ in orange, and $\mathcal{P}(\vec{u}_1=(18, -4), \vec{u}_2=(24, -8))$ in blue. Their associated (shared) sublattice $\mathcal{K}$ is also shown as large black points.
    }
    \label{fig:example_parallelogram_lattice}
\end{figure}

So far we worked on an infinite plane and constructed a syndrome extraction circuit that measured an infinite number of stabilisers. In order to define a CSS code, we wrap the infinite plane, with the qubits and the stabilisers, around a parallelogram so that we obtain a CSS code on a regular or twisted torus. We defined $\mathcal{P} = \mathcal{P}(\vec{v}_1, \vec{v}_2)$, $\mathcal{K}$ and the torus $\Z^2/\mathcal{K}$ obtained by wrapping around $\mathcal{P}$ in \Cref{sec:code-construct}. Note that using the notation introduced just before stating \Cref{thm:Thm1}, we may write $\mathcal{K} = \SPAN_\Z(\{\vec{v}_1, \vec{v}_2\})$. Whenever we consider a point on the torus $\Z^2/\mathcal{K}$, we will use a representative point from $\Z^2$ to denote it. Accordingly, two points $A, B \in \Z^2$ are the same on the torus if and only if $\overrightarrow{AB}\in\mathcal{K}$. In case we would like to emphasise this, we will use the notation $A\equiv B \;(\modulo\mathcal{K})$. \Cref{fig:parallelogram_lattice_sub} illustrates an example of how stabilisers defined on a parallelogram wrap around across its boundary, see also \Cref{fig:N2E3N2_on_parallelogram}. Now we prove a proposition that allows us to define a CSS code on a torus whose syndrome extraction circuit requires a square- or hexagonal-grid connectivity.

\begin{proposition}\label{prop:wrap_CSS}
    Consider a sequence of directions $\vec{d_1},\vec{d_2},\dots, \vec{d_\ell}$, a valid layout on the infinite plane $\Lambda\colon \Z^2_{\mathrm{anc}} \to \{X,Z\}$, and a parallelogram $\mathcal{P} = \mathcal{P}(\vec{v}_1, \vec{v}_2)$. Assume that 
    \begin{itemize}
        \item[(i)] $A\not\equiv Q\;(\modulo\mathcal{K})$ for all $A\in\Z^2_{\mathrm{anc}}$ and $Q\in\Z^2_{\mathrm{data}}$,
        \item[(ii)] $A\not\equiv A'\;(\modulo\mathcal{K})$ for all $A, A'\in\Z^2_{\mathrm{anc}}$ and $\Lambda(A)\neq\Lambda(A')$,
        \item[(iii)] $\Delta(\vec{d_1},\vec{d_2},\dots, \vec{d_\ell})\cap\mathcal{K}=\emptyset$,
        \item[(iv)] for any two delta vectors $\vec{u},\vec{w}\in\Delta$ we have $\{\vec{u}-\vec{w},\vec{u}+\vec{w}\}\cap\mathcal{K}\subseteq\{\vec{0}\}$.
    \end{itemize} 
    Let $\mathcal{P}_{\mathrm{data}}:=\Z^2_{\mathrm{data}}\cap\mathcal{P}$, and for each $A_0\in\mathcal{P}_{\mathrm{anc}}:=\Z^2_{\mathrm{anc}}\cap\mathcal{P}$ define the Pauli operator $g_{A_0}:=\prod_{j=1}^\ell \Lambda(A_0)_{Q_j}$ where $Q_j\in\mathcal{P}_{\mathrm{data}}$ is defined by \Cref{eq:Q_j} modulo $\mathcal{K}$. Then 
    \begin{equation}\label{eq:stabiliser_group}
        \mathcal{S}=\mathcal{S}(\mathcal{P}; \Lambda; \vec{d_1},\vec{d_2},\dots, \vec{d_\ell}) :=\langle g_{A_0}\colon \; A_0\in \mathcal{P}_{\mathrm{anc}} \rangle
    \end{equation}
    defines a CSS code on data qubits $\mathcal{P}_{\mathrm{data}}$. Furthermore, the circuit $\mathcal{C}(\Lambda; \vec{d_1},\vec{d_2},\dots, \vec{d_\ell})$ when considered on the torus achieves measuring all the stabilisers simultaneously and independently, and uses the set of qubits $\mathcal{P}_{\mathrm{data}}\cup\mathcal{P}_{\mathrm{anc}}$.
\end{proposition}

\begin{proof}
    Conditions (i)--(ii) ensure that the operator $g_{A_0}$ is well-defined. Furthermore, if in the definition of $g_{A_0}$ we replaced $A_0$ and $Q_j$ with other representations from $\Z^2$, then this defines the same operator on the torus. Condition (iii) implies that the data qubits of any stabiliser $g_{A_0}$ are different from each other on the torus. We will show that (iv) further implies that the circuit measures all the operators $g_{A_0}$ simultaneously and independently, which implicitly implies that $\mathcal{S}$ is Abelian. For this, consider two operators $g_{A_0}=\prod_{j=1}^\ell \Lambda(A_0)_{Q_j}$ and $g_{A_0'}=\prod_{j=1}^\ell \Lambda(A_0')_{Q_j'}$ such that $A_0\not\equiv A_0'\;(\modulo\mathcal{K})$, and we will show that they satisfy conditions (a)--(c) from the beginning of \Cref{sec:app-schedule_conflicts}. Note that \Cref{eq:primed_shift_same} implies $Q_i\not\equiv Q_i'\;(\modulo\mathcal{K})$ for all $i$, and as such (a) is satisfied. Assume now that $\Lambda(A_0)\neq \Lambda(A_0')$ and $Q_i\equiv Q_j'\;(\modulo\mathcal{K})$ for some $i\neq j$. By considering a different representation of $A_0$, we may assume that $Q_i=Q_j'$ as points on the plane. Note that for any indices $p,q$ we have $\overrightarrow{Q_pQ_q'\;} = \overrightarrow{Q_pQ_i} + \overrightarrow{Q_j'Q_q'\;}$. Since each vector on the right hand side is either a delta vector or its opposite is, we obtain from (iv) that $Q_p\equiv Q_q' \; (\modulo\mathcal{K})$ happens if and only if $Q_p = Q_q'$. Therefore, if the circuit measures all the operators simultaneously and independently on the plane, it also does so on the torus, which concludes the proof.
\end{proof}

We refer to such a CSS code on a regular or twisted torus as a $D_1D_2\dots D_\ell$-code.

We define two parallelograms $\mathcal{P}(\vec{v}_1, \vec{v}_2)$ and $\mathcal{P}(\vec{w}_1, \vec{w}_2)$ to be equivalent if they define the same torus. This happens if and only if the sublattices $\SPAN_\Z(\{\vec{v}_1, \vec{v}_2\})$ and $\SPAN_\Z(\{\vec{w}_1, \vec{w}_2\})$ are the same. \Cref{fig:NEEN_to_TC} and \Cref{fig:equivalent_parallelogram} show examples. We may write $\vec{w}_i = \sum_{j=1}^2\gamma_{ij}\vec{v}_j$ with some $\gamma_{ij}\in\mathbb{R}$, and define the matrix $\Gamma := [\gamma_{ij}]_{i,j=1}^2\in\mathbb{R}^{2\times 2}$. Note that $\vec{v}_i = \sum_{j=1}^2\beta_{ij}\vec{w}_j$ where $\Gamma^{-1} = [\beta_{ij}]_{i,j=1}^2$. We prove the following about the equivalence of parallelograms.

\begin{proposition}\label{prop:equiv_par}
    The parallelograms $\mathcal{P}(\vec{v}_1, \vec{v}_2)$ and $\mathcal{P}(\vec{w}_1, \vec{w}_2)$ are equivalent if and only if $\Gamma$ has integer elements and its determinant is either $-1$ or $+1$.
\end{proposition}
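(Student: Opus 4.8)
The plan is to reduce everything to the criterion already stated in the text: the two parallelograms are equivalent precisely when the sublattices $\mathcal{K}_v := \SPAN_\Z(\{\vec{v}_1,\vec{v}_2\})$ and $\mathcal{K}_w := \SPAN_\Z(\{\vec{w}_1,\vec{w}_2\})$ coincide. So it suffices to prove that $\mathcal{K}_v = \mathcal{K}_w$ if and only if $\Gamma\in\mathbb{Z}^{2\times 2}$ with $\det\Gamma=\pm 1$. Throughout I would use that $\vec{v}_1,\vec{v}_2$ (and likewise $\vec{w}_1,\vec{w}_2$) are linearly independent over $\mathbb{R}$, since the parallelograms are assumed well-defined; hence $\Gamma$ is an invertible real matrix and the coefficients $\gamma_{ij}$ in $\vec{w}_i=\sum_j\gamma_{ij}\vec{v}_j$, as well as the $\beta_{ij}$ in $\vec{v}_i=\sum_j\beta_{ij}\vec{w}_j$, are uniquely determined.

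For the forward implication I would argue as follows. Assume $\mathcal{K}_v=\mathcal{K}_w$. Then each $\vec{w}_i$ lies in $\mathcal{K}_v$, i.e. is an integer combination of $\vec{v}_1,\vec{v}_2$; by uniqueness of the representation these integer coefficients must be the $\gamma_{ij}$, so $\Gamma$ has integer entries. By the symmetric argument, each $\vec{v}_i\in\mathcal{K}_w$ forces $\Gamma^{-1}=[\beta_{ij}]$ to be an integer matrix as well. Then both $\det\Gamma$ and $(\det\Gamma)^{-1}=\det(\Gamma^{-1})$ are integers, which forces $\det\Gamma\in\{-1,1\}$.

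For the converse I would start from $\Gamma$ integral with $\det\Gamma=\pm 1$. The relations $\vec{w}_i=\sum_j\gamma_{ij}\vec{v}_j$ with $\gamma_{ij}\in\Z$ give $\mathcal{K}_w\subseteq\mathcal{K}_v$ at once. For the reverse inclusion I would use $\Gamma^{-1}=(\det\Gamma)^{-1}\,\mathrm{adj}(\Gamma)$: the adjugate of an integer matrix is again an integer matrix, and $\det\Gamma=\pm 1$, so $\Gamma^{-1}$ is integral, i.e. $\beta_{ij}\in\Z$; then $\vec{v}_i=\sum_j\beta_{ij}\vec{w}_j$ yields $\mathcal{K}_v\subseteq\mathcal{K}_w$. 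Hence $\mathcal{K}_v=\mathcal{K}_w$, and the parallelograms are equivalent.

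I do not expect a genuine obstacle here: this is essentially the classical statement that two ordered bases generate the same lattice exactly when the transition matrix lies in $\mathrm{GL}_2(\Z)$. The only two points deserving a word of care are (a) invoking linear independence so that ``$\vec{w}_i$ lies in the integer span of $\vec{v}_1,\vec{v}_2$'' genuinely pins $\Gamma$ down to be integral, rather than merely asserting the existence of some integer combination, and (b) in the converse, observing that $\det\Gamma=\pm 1$ is precisely what makes the adjugate formula produce an integer inverse. Everything else is routine bookkeeping.
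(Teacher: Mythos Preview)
Your proposal is correct and follows essentially the same approach as the paper: reduce to the equality of the two integer spans, use linear independence to conclude that both $\Gamma$ and $\Gamma^{-1}$ are integral when the lattices coincide (forcing $\det\Gamma=\pm 1$), and for the converse use the adjugate formula to show $\Gamma^{-1}$ is integral when $\det\Gamma=\pm 1$. The paper merely writes out the $2\times 2$ adjugate explicitly rather than naming it, but the argument is otherwise identical.
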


\begin{proof}
    Clearly, $\vec{w}_i \in \SPAN_\Z(\{\vec{v}_1, \vec{v}_2\})$ holds if and only if both elements in the $i$th row of $\Gamma$ are integers. Therefore, $\SPAN_\Z(\{\vec{w}_1, \vec{w}_2\}) \subseteq \SPAN_\Z(\{\vec{v}_1, \vec{v}_2\})$ if and only if $\Gamma\in\Z^{2\times 2}$. Note that because the two vectors in each pair are linearly independent, it follows that $\Gamma$ is invertible in $\mathbb{R}^{2\times 2}$. Thus, the reverse inclusion $\SPAN_\Z(\{\vec{v}_1, \vec{v}_2\}) \subseteq \SPAN_\Z(\{\vec{w}_1, \vec{w}_2\})$ holds if and only if 
    $\Gamma^{-1}\in\Z^{2\times 2}$. Therefore, we conclude that the two sublattices are the same if and only if $\Gamma, \Gamma^{-1}\in\Z^{2\times 2}$. 
    
    Now, on the one hand, both $\Gamma$ and $\Gamma^{-1}$ being integer matrices implies that both $\det\Gamma$ and $\frac{1}{\det\Gamma}$ are integers, forcing $\det\Gamma=\pm 1$. On the other hand, if $\Gamma\in\Z^{2\times 2}$ and $\det\Gamma=\pm 1$, then $\Gamma^{-1} = \frac{1}{\det\Gamma}\left[\begin{array}{cc}
         \gamma_{22} & -\gamma_{12}  \\
         -\gamma_{21} & \gamma_{11}
    \end{array}\right]$, and as such it has integer elements, which concludes the proof.
\end{proof}

This proposition also ensures that if we wrap the circuit $\mathcal{C}(\Lambda; \vec{d_1},\vec{d_2},\dots, \vec{d_\ell})$ around two equivalent parallelograms, then, given the conditions of \Cref{prop:wrap_CSS} hold, we obtain the same circuit and associated CSS code.

\section{Logical operator structure of rotated \texorpdfstring{$N^\alpha E^\beta N^\alpha$}{N$\alpha$E$\beta$N$\alpha$}-codes}
\label{sec:app-logicals}

The goal of this appendix is to describe the logical operator structure of rotated $N^\alpha E^\beta N^\alpha$ directional codes for $\alpha\geq1$, $\beta\geq 2$. Recall from \Cref{sec:code-construct} that the stabilisers are allocated according to Layout 1, and that for each $d$ divisible by four, the code is defined on the twisted torus specified by the diamond-shaped parallelogram 
\begin{equation}\label{eq:rotated_diamond}
    \mathcal{P}\left(\vec{v}_1=\left(-\tfrac{1}{2}\beta d, \alpha d\right), \vec{v}_2=\left(\tfrac{1}{2} \beta d, \alpha d\right)\right).
\end{equation}
We conjecture the code distance of all rotated $N^\alpha E^\beta N^\alpha$-codes to be exactly $d$, which was verified for all codes that were benchmarked in the paper. We will not describe the logical operators for the rectangular and filler codes, but note that they can be described in a similar way.

The logical operator description is divided into three subsections. First, in \Cref{sec:app-construct_block_cycle_operators} we construct ``cycle'' Pauli-$X$ operators of which we will have two types: horizontal and diagonal. These cycle operators consist of regularly spaced ``block'' operators around the torus, and as we will show they commute with all stabilisers and all have weight exactly $d$. Secondly, in \Cref{sec:app-cycle_operator_equiv} we investigate when two horizontal/diagonal cycle Pauli-$X$ operators are equivalent up to stabilisers, and we also identify sets of cycle operators whose product is a stabiliser of the code. With this we prove that there are at most $2(2\alpha-1)(\beta-1)$ many independent cycle Pauli-$X$ operators. Lastly, in \Cref{sec:app-logical_operator_construction} we show that these independent cycle operators are logical $X$-operators. In order to do that, we define the Pauli-$Z$ cycle operators as shifted versions of the Pauli-$X$ cycle operators whose Pauli types we also change from $X$ to $Z$. Then we prove that the anti-commutation matrix associated with these $2(2\alpha-1)(\beta-1)$ pairs of cycle Pauli-$X$- and $Z$-operators is invertible. This will enable us to express the logical $Z$-operators as products of cycle Pauli-$Z$ operators, and prove the lower bound $k \geq 2(2\alpha-1)(\beta-1)$ for the number of logical qubits. Even though this on its own does not exclude the possibility that some rotated code instances may possess additional logical qubits, we numerically never observed this to be the case. In particular, for all the benchmarked directional codes from \Cref{sec:simulations} and \Cref{sec:app-additional_results} we confirmed that $k=2(2\alpha-1)(\beta-1)$ always holds, and we conjecture this to be the case in general.

\begin{figure}[!htb]
    \centering
    \includegraphics[width=0.2\linewidth]{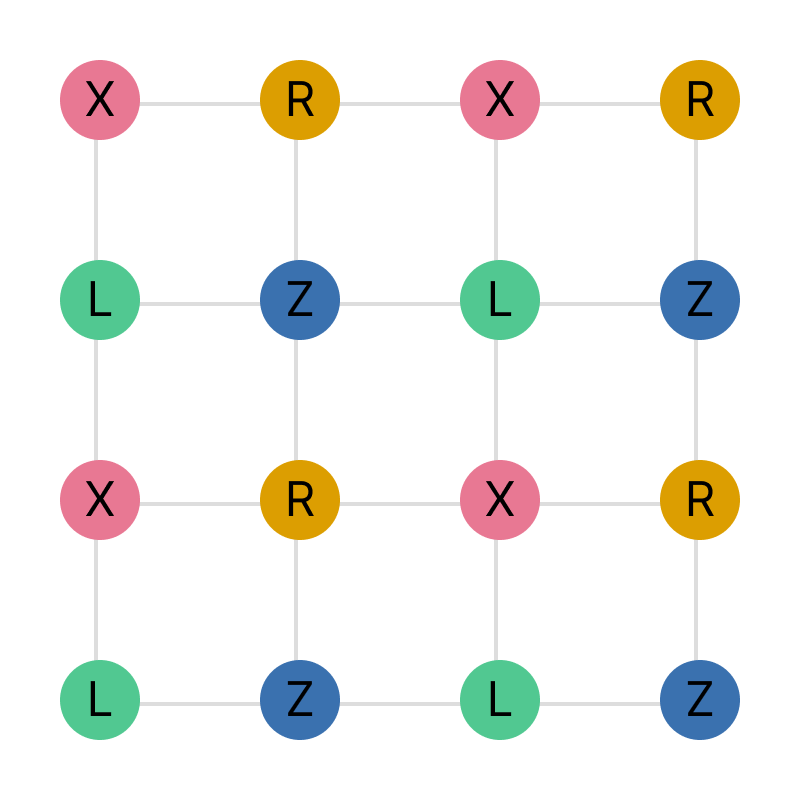}
    \caption{The four types of labels we use for qubits: $X$ (red) and $Z$ (blue) for ancilla qubits, and $L$ (green) and $R$ (orange) for data qubits.}
    \label{fig:XZLR_qubits}
\end{figure}

We introduce qubit labelling on $\Z^2$ with Layout 1 that we will be using throughout this appendix. We partition the set of ancilla qubits $\Z^2_{\mathrm{anc}}$ into the following two subsets:
\begin{equation}\label{eq:X_aq}
    \Z^2_X:= \{(x,y)\in\Z^2 : x\equiv 0,\; y\equiv 1 \; (\modulo 2)\}
\end{equation}
and
\begin{equation}\label{eq:Z_aq}
    \Z^2_Z := \{(x,y)\in\Z^2 : x\equiv 1,\; y\equiv 0 \; (\modulo 2)\}.
\end{equation}
The elements of $\Z^2_X$ and $\Z^2_Z$ are exactly the $X$- and $Z$-type ancilla qubits.
We also partition the set of data qubits $\Z^2_{\mathrm{data}}$ into two subsets:
\begin{equation}\label{eq:L_dq}
    \Z^2_L := \{(x,y)\in\Z^2 : x\equiv y\equiv 0 \; (\modulo 2)\}
\end{equation}
and
\begin{equation}\label{eq:R_dq}
    \Z^2_R := \{(x,y)\in\Z^2 : x\equiv y\equiv 1 \; (\modulo 2)\}.
\end{equation}
We will refer to elements of $\Z^2_L$ as $L$-data qubits, which always lie on rows of $Z$-ancilla qubits, or equivalently, on columns of $X$-ancilla qubits. We will term elements of $\Z^2_R$ as $R$-data qubits, which always lie on rows of $X$-ancilla qubits, or equivalently, on columns of $Z$-ancilla qubits. We depict this qubit labelling in \Cref{fig:XZLR_qubits}, and we note that it is the same as what was used in \cite{Bravyi2024}. Finally, we emphasise again that whenever we consider a point on the torus, we always use a representative point from $\Z^2$ to denote it, as in \Cref{sec:app-parallelograms}.

\subsection{Construction of horizontal and diagonal cycle operators}\label{sec:app-construct_block_cycle_operators}

The goal of this subsection is to construct two types of ``cycle'' Pauli-$X$ operators, horizontal and diagonal, that both loop around the torus and commute with all stabilisers of the code. In later sections, these cycle operators are then shown to be logical $X$-operators. 

We start by describing the anti-commutation relation between a $Z$-stabiliser and a weight-$1$ Pauli-$X$ operator. In order to do that we briefly consider the general case, namely, a directional sequence $\vec{d}_1,\vec{d}_2,...,\vec{d}_\ell$ that satisfies the conditions of \Cref{thm:Thm1} and \Cref{prop:wrap_CSS} with $\Lambda$ being Layout 1. We associate each $D_1D_2\dots D_\ell$-stabiliser with its ancilla qubit (at its initial location). More precisely, with a $Z$-ancilla qubit $A_0\in\Z^2_Z$ we associate the $Z$-stabiliser $g_{A_0}$ supported on the data qubits $Q_1,Q_2,\dots,Q_\ell\in\Z^2_{\mathrm{data}}$ defined by \Cref{eq:Q_j} (similarly for $X$-ancilla qubits). Obviously, a weight-$1$ Pauli-$X$ operator supported on the data qubit $Q\in\Z^2_{\mathrm{data}}$ anti-commutes with $g_{A_0}$ if and only if $Q\in\{Q_1,Q_2,\dots,Q_\ell\}$. Also, \Cref{eq:Q_j} and \Cref{eq:X_aq,eq:Z_aq,eq:L_dq,eq:R_dq} readily imply that if $Q=Q_j$, then $Q$ is an $L$-data qubit if $\vec{d_j}\in\{\vec{e},\vec{w}\}$, and otherwise an $R$-data qubit. Therefore, the set of $L$- and $R$-data qubits that support weight-$1$ Pauli-$X$ operators which anti-commute with $g_{A_0}$ are exactly
\begin{equation}\label{eq:general_L}
    \left\{ A_0 + 2\sum_{p=1}^{j-1} \vec{d_p} + \vec{d_j} : 1 \leq j \leq \ell,\; \vec{d}_j \in \{\vec{e},\vec{w}\}\right\}
\end{equation}
and
\begin{equation}\label{eq:general_R}
    \left\{ A_0 + 2\sum_{p=1}^{j-1} \vec{d_p} + \vec{d_j} : 1 \leq j \leq \ell,\; \vec{d}_j \in \{\vec{n},\vec{s}\}\right\},
\end{equation}
respectively. 

\begin{figure*}[!htbp]
     \centering
     \begin{subfigure}[t]{0.45\textwidth}
         \centering
         \vspace{0pt}
         \includegraphics[width=0.8125\linewidth]{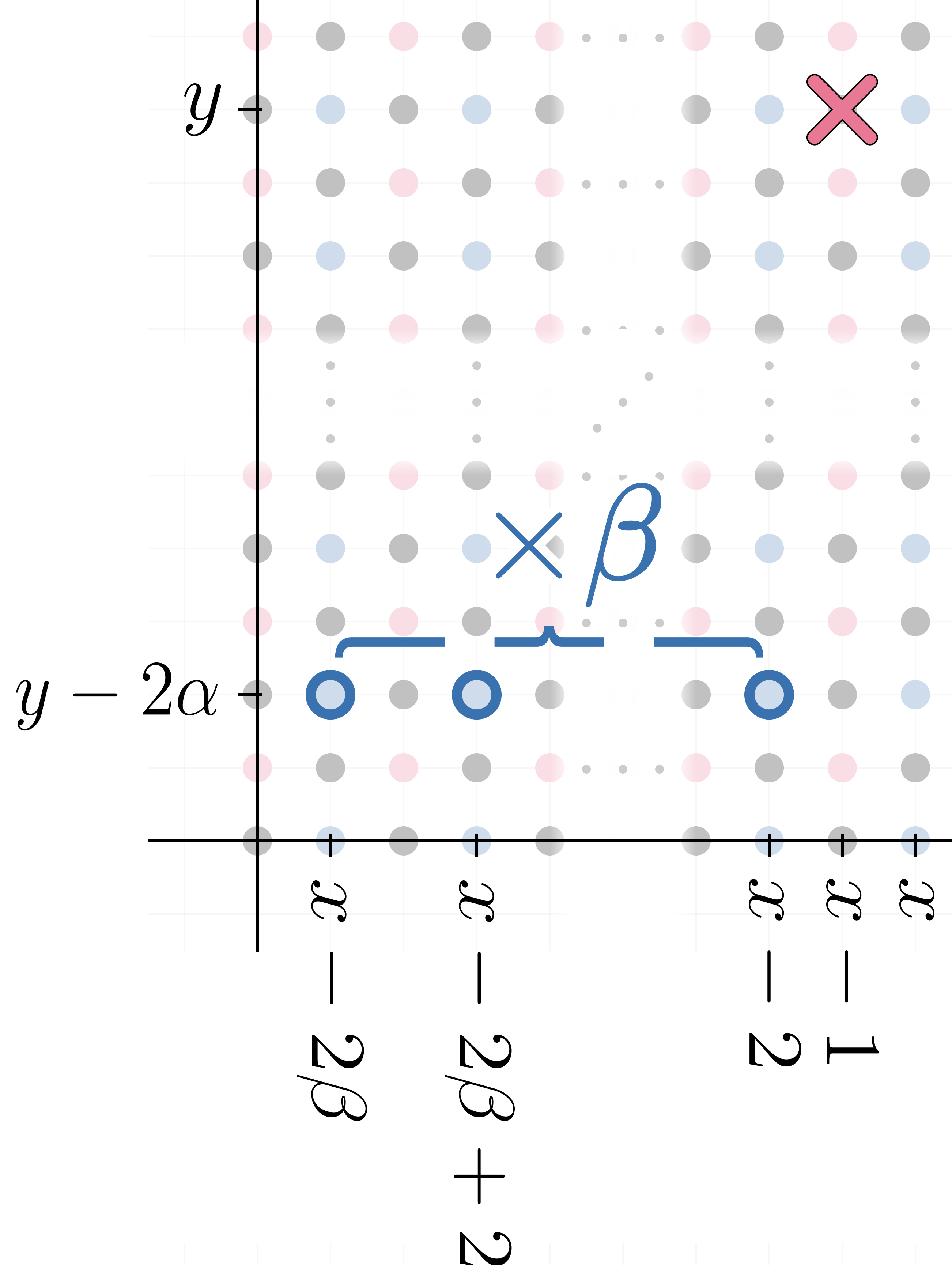}
         \caption{A weight-$1$ Pauli-$X$ operator on the $L$-data qubit $(x-1,y)$, and the $Z$-ancilla qubits corresponding to $N^\alpha E^\beta N^\alpha$-stabilisers anti-commuting with it, see \Cref{eq:Z_anti-comm_L}.}
         \label{fig:NaEbNa_x_error_l_qubit}
     \end{subfigure}
     \hspace{0.25cm}
     \begin{subfigure}[t]{0.45\textwidth}
         \centering
         \vspace{0pt}
         \includegraphics[width=1.0\linewidth]{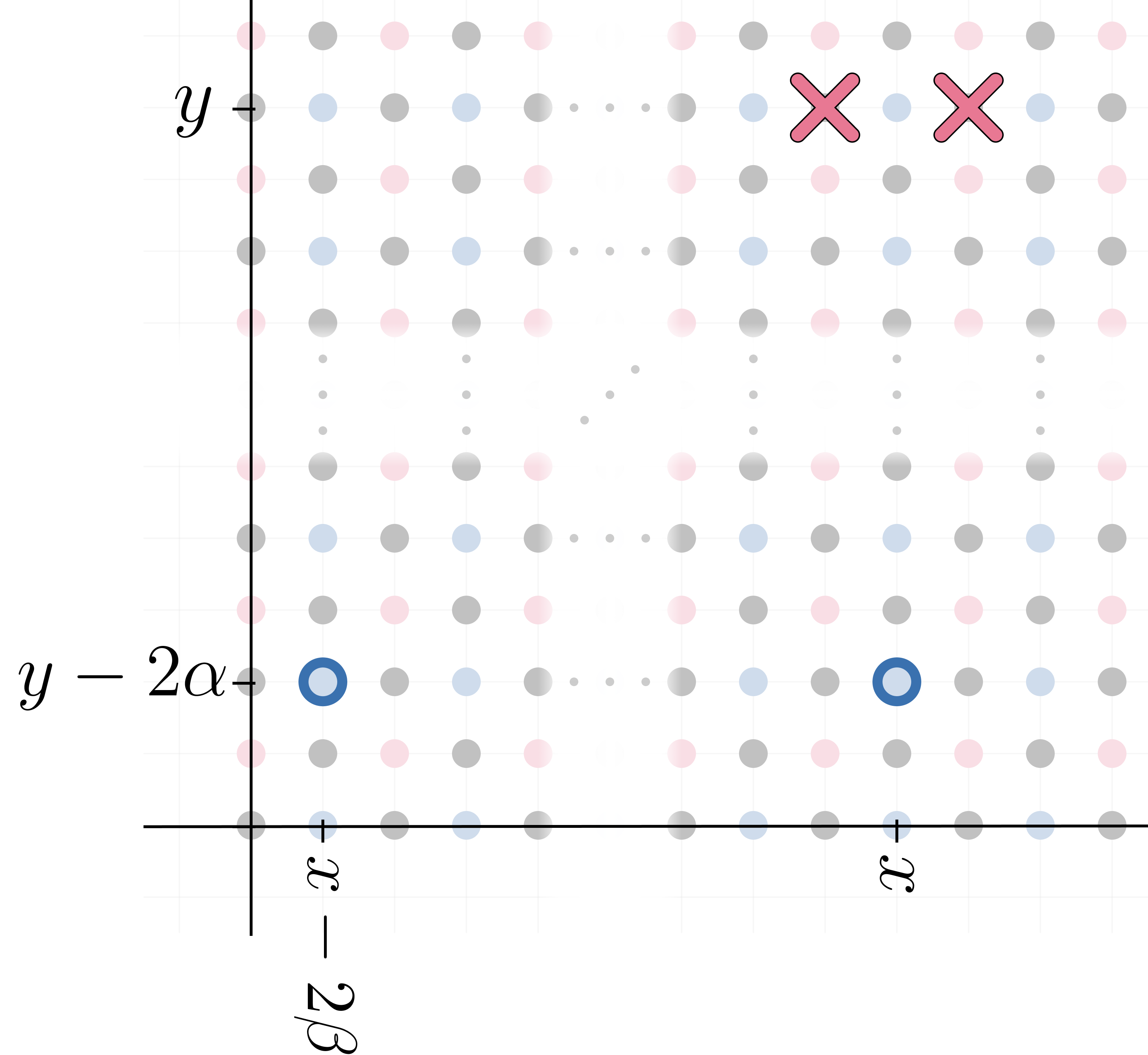}
         \vspace{0.65cm}
         \caption{The weight-$2$ Pauli-$X$ block operator $B^X_{\mathrm{hor}}(x,y)$, which is the building block of the horizontal cycle Pauli-$X$ operators. It anti-commutes with two $Z$-type $N^\alpha E^\beta N^\alpha$-stabilisers, see \Cref{eq:anti-comm_hor_block}.}
         \label{fig:NaEbNa_xx_error_l_qubits}
     \end{subfigure}\\[5mm]
     \begin{subfigure}[t]{0.45\textwidth}
         \centering
         \includegraphics[width=1.0\linewidth]{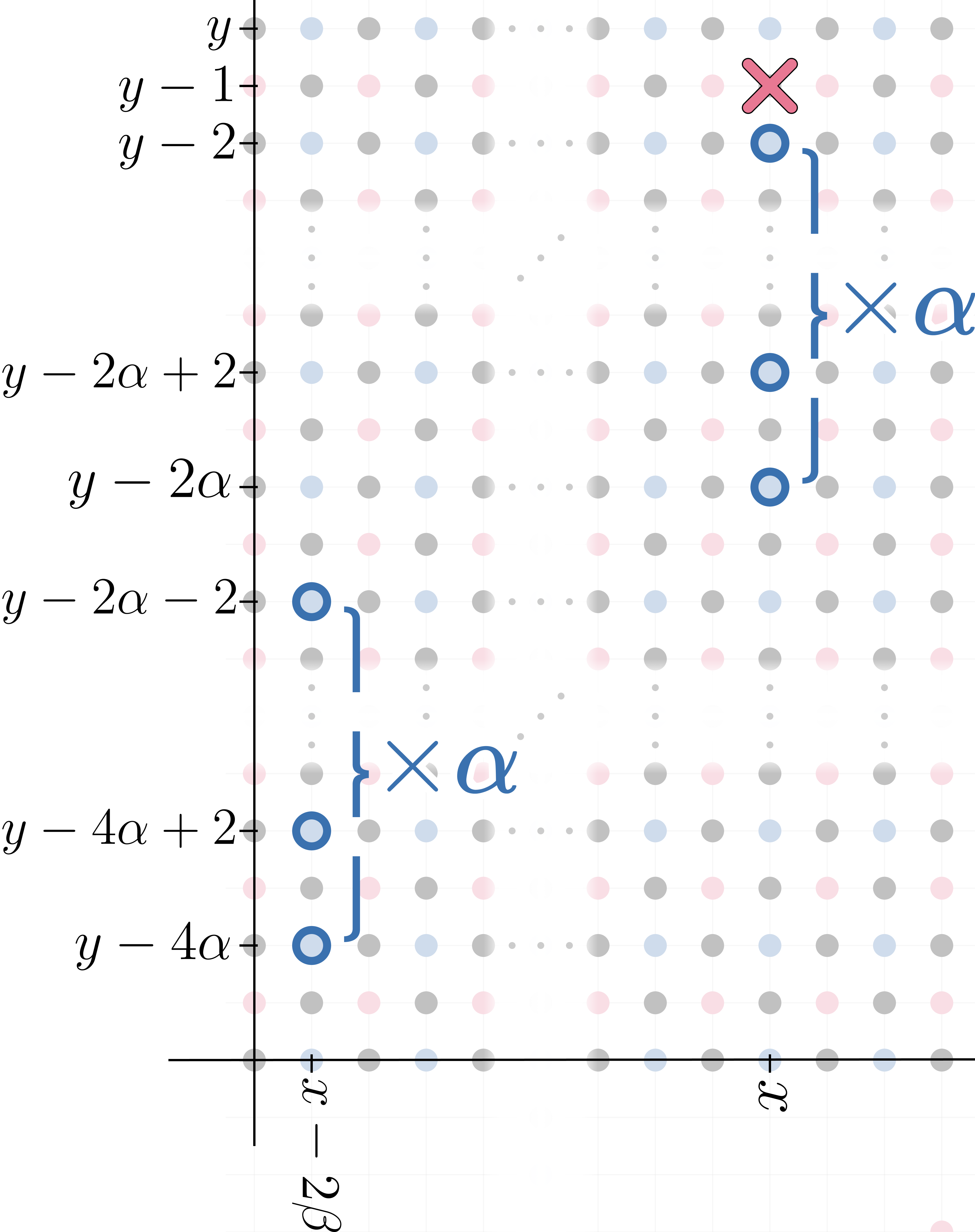}
         \caption{A weight-$1$ Pauli-$X$ operator on the $R$-data qubit $(x,y-1)$, and the $Z$-ancilla qubits corresponding to $N^\alpha E^\beta N^\alpha$-stabilisers anti-commuting with it, see \Cref{eq:Z_anti-comm_R}.}
         \label{fig:NaEbNa_x_error_r_qubit}
     \end{subfigure}
     \hspace{0.25cm}
     \begin{subfigure}[t]{0.45\textwidth}
         \centering
         \includegraphics[width=0.88\linewidth]{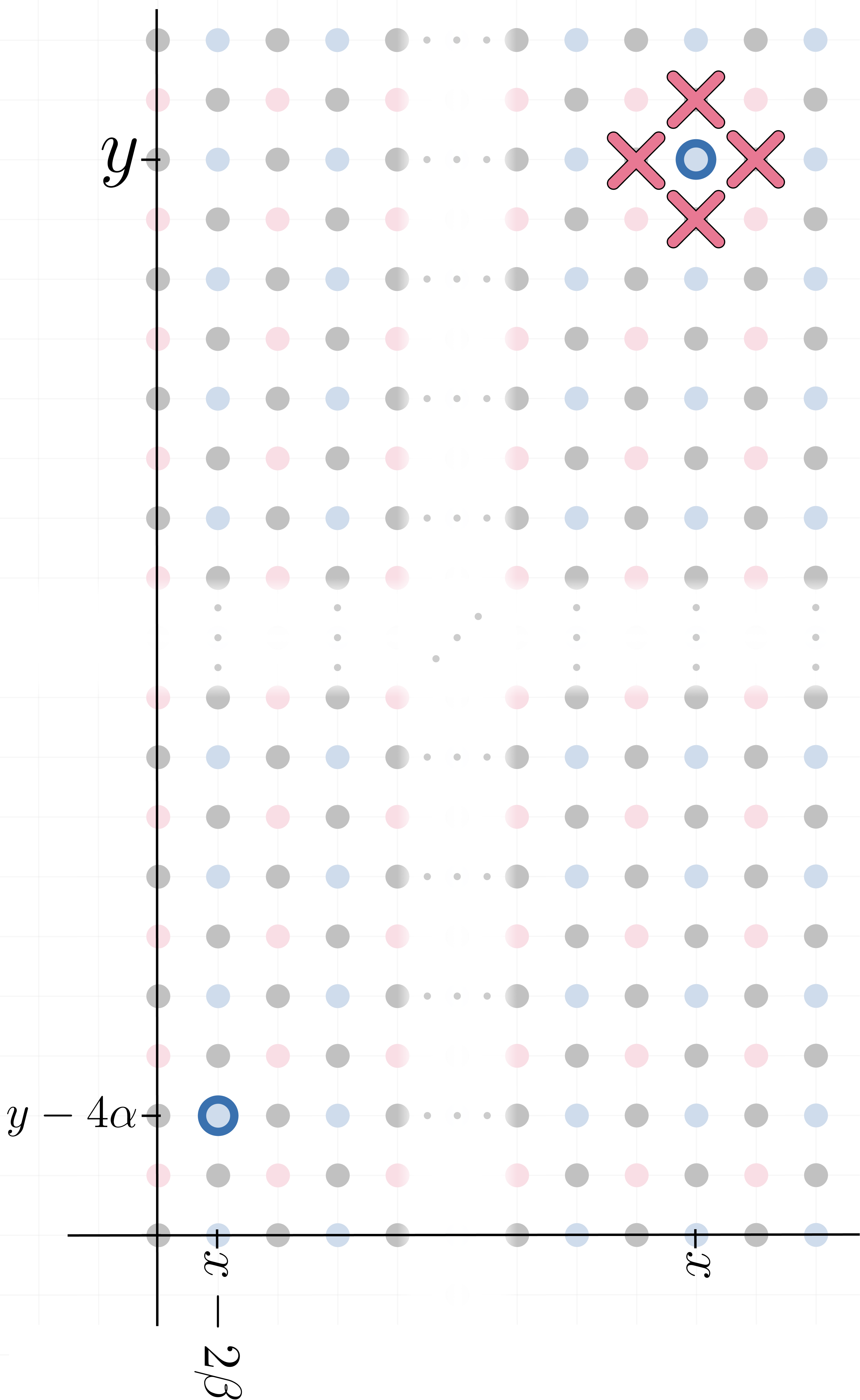}
         \caption{The weight-$4$ Pauli-$X$ block operator $B^X_{\mathrm{diag}}(x,y)$, which is the building block of the diagonal cycle Pauli-$X$ operators. It anti-commutes with two $Z$-type $N^\alpha E^\beta N^\alpha$-stabilisers, see \Cref{eq:anti-comm_diag_block}.}
         \label{fig:NaEbNa_xxxx_error}
     \end{subfigure}
     \caption{Each panel shows a Pauli-$X$ operator (red crosses) and the $Z$-ancilla qubits (highlighted blue dots) that correspond to $N^\alpha E^\beta N^\alpha$-stabilisers which anti-commute with it.}
\end{figure*}

From now on until the end of the Appendix, we consider $N^\alpha E^\beta N^\alpha$-codes. We also reserve the notation $(x,y)$ to denote a $Z$-ancilla qubit. As such, $(x\pm 1,y)$ is always an $L$-data qubit, $(x, y\pm 1)$ is always an $R$-data qubit, and $(x\pm 1, y\pm 1)$ is always an $X$-ancilla qubit. From \Cref{eq:general_L} we infer that the set of $L$-data qubits which support weight-$1$ Pauli-$X$ operators that anti-commute with the $Z$-type $N^\alpha E^\beta N^\alpha$-stabiliser whose ancilla qubit is at $(x,y)\in\Z^2_Z$ is
\begin{equation}\label{eq:L_anti-comm_Z}
    \left\{ (x+2j-1,y+2\alpha) : 1 \leq j \leq \beta \right\}.
\end{equation}
This set has $\beta$ elements. Similarly, for the $R$-data qubits we obtain from \Cref{eq:general_R}
\begin{equation}\label{eq:R_anti-comm_Z}
    \left\{ (x,y+2j-1), (x+2\beta,y+2j-1+2\alpha) : 1 \leq j \leq \alpha \right\},
\end{equation}
which has 2$\alpha$ elements. From \Cref{eq:L_anti-comm_Z} we obtain that the weight-$1$ Pauli-$X$ operator supported on the $L$-data qubit $(x-1,y)\in\Z^2_L$ anti-commutes with exactly those $Z$-type $N^\alpha E^\beta N^\alpha$-stabilisers whose ancilla qubits are elements of the set
\begin{equation}\label{eq:Z_anti-comm_L}
    \left\{ (x-2j,y-2\alpha) : 1 \leq j \leq \beta \right\}.
\end{equation}
This is depicted in \Cref{fig:NaEbNa_x_error_l_qubit}. Similarly, \Cref{eq:R_anti-comm_Z} implies that the weight-$1$ Pauli-$X$ operator supported on the $R$ data-qubit $(x,y-1)\in\Z^2_R$ anti-commutes with exactly the $Z$-type $N^\alpha E^\beta N^\alpha$-stabilisers whose ancilla qubits are elements of the set
\begin{equation}\label{eq:Z_anti-comm_R}
    \left\{ (x,y-2j), (x-2\beta,y-2j-2\alpha) : 1 \leq j \leq \alpha \right\},
\end{equation}
see \Cref{fig:NaEbNa_x_error_r_qubit} for an illustration. 

We now construct higher weight Pauli-$X$ operators that anti-commute with exactly two $Z$-type $N^\alpha E^\beta N^\alpha$-stabilisers. We will refer to these operators as ``block'' operators, and they will form the building blocks of the cycle operators we will later construct. First, consider the weight-$2$ Pauli-$X$ operator supported on two horizontally next-nearest-neighbour $L$-data qubits, which we parametrise by the $Z$-ancilla qubit that lies in between them:
\begin{equation}\label{eq:x_horizontal_block}
        B^X_{\mathrm{hor}}(x,y) := X_{(x - 1, y)} X_{(x+1, y)} 
        \qquad (x,y)\in\Z^2_Z.
\end{equation}
We call this a horizontal block Pauli-$X$ operator. It is straightforward to see from \Cref{eq:Z_anti-comm_L} that the only two $Z$-type $N^\alpha E^\beta N^\alpha$-stabilisers that anti-commute with $B^X_{\mathrm{hor}}(x,y)$ have ancilla qubits located at
\begin{equation}\label{eq:anti-comm_hor_block}
    (x,y-2\alpha) \text{ and } (x-2\beta,y-2\alpha),
\end{equation}
see \Cref{fig:NaEbNa_xx_error_l_qubits} for an illustration. Second, consider the weight-$4$ Pauli-$X$ operator supported on the four nearest-neighbours of a $Z$ ancilla qubit:
\begin{equation}\label{eq:x_diag_block}
    B^X_{\mathrm{diag}}(x,y):=X_{(x - 1, y)} X_{(x + 1, y)} X_{(x, y - 1)} X_{(x, y + 1)}
    \qquad (x,y)\in\Z^2_Z.
\end{equation}
We call this a diagonal block Pauli-$X$ operator. It follows from \Cref{eq:Z_anti-comm_R,eq:anti-comm_hor_block} and \Cref{fig:NaEbNa_x_error_r_qubit,fig:NaEbNa_xx_error_l_qubits} that the only two $Z$-type $N^\alpha E^\beta N^\alpha$-stabilisers that anti-commute with $B^X_{\mathrm{diag}}(x,y)$ have ancilla qubits located at
\begin{equation}\label{eq:anti-comm_diag_block}
    (x,y) \text{ and } (x-2\beta,y-4\alpha),
\end{equation}
see \Cref{fig:NaEbNa_xxxx_error} for an illustration.

\begin{figure}[htbp]
    \centering
    \begin{subfigure}[b]{0.56\textwidth}
        \centering
        \begin{subfigure}[b]{\textwidth}
            \centering
            \includegraphics[width=\textwidth]{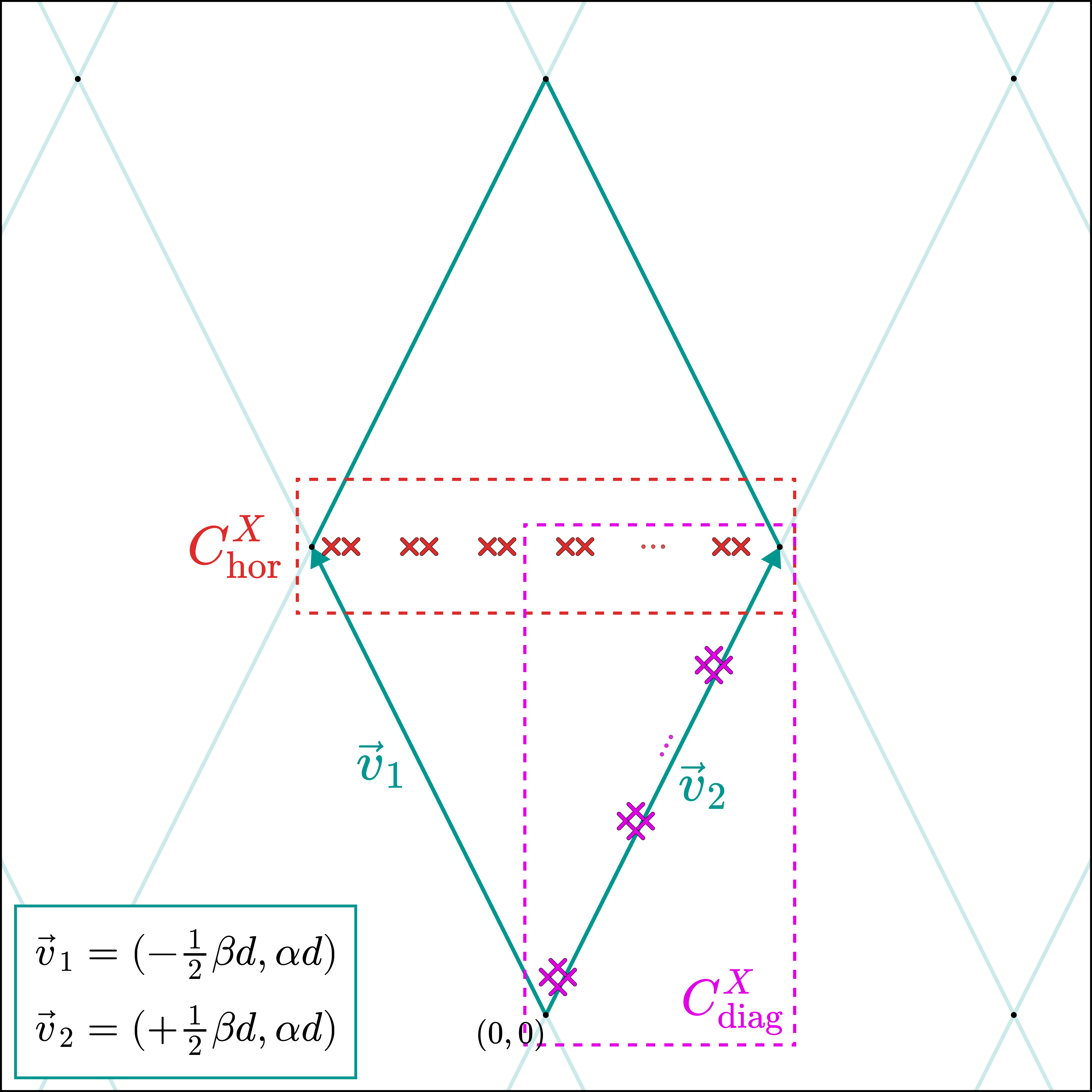}
            \caption{Zoomed-out view of the torus and the two types of cycle operators.}
         \label{fig:zoomed_out_x_cycle_operators}
        \end{subfigure}
        
        \vspace{0.5cm} 
        
        \begin{subfigure}[b]{\textwidth}
            \centering
            \includegraphics[width=\textwidth]{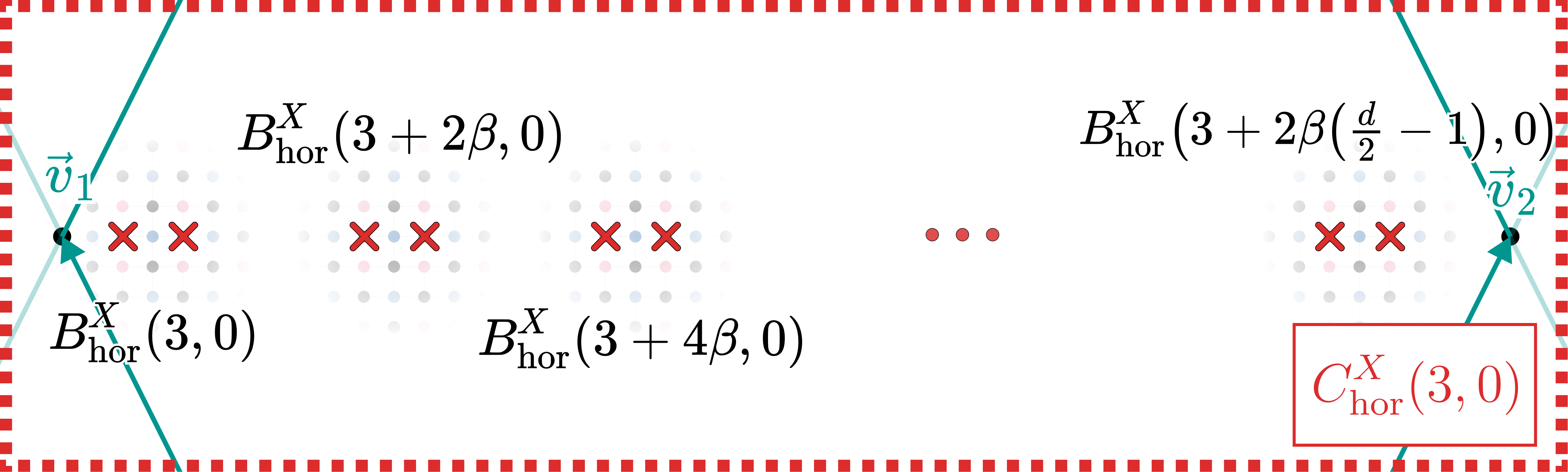}
         \caption{The operator $C^X_{\mathrm{hor}}(3,0)$.}
         \label{fig:zoomed_in_x_horizontal_cycle_operator}
        \end{subfigure}
    \end{subfigure}
    \hfill 
    \begin{subfigure}[b]{0.40\textwidth}
        \centering
        \includegraphics[width=\textwidth]{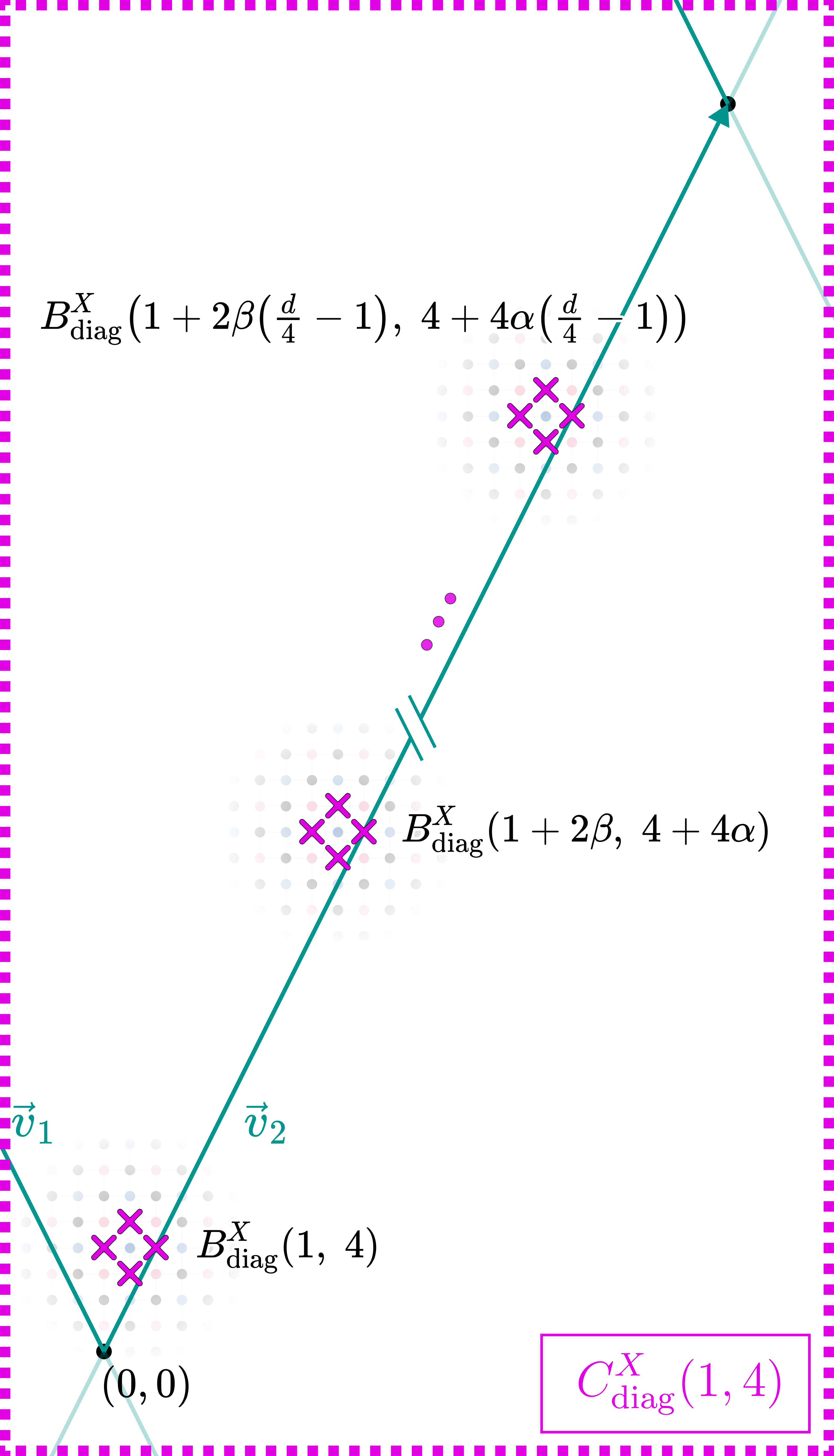}
         \caption{The operator $C^X_{\mathrm{diag}}(1,4)$.}
         \label{fig:zoomed_in_x_diagonal_cycle_operator}
    \end{subfigure}
    \caption{Horizontal and diagonal cycle operators wrapped around the diamond-shaped parallelogram (\Cref{eq:rotated_diamond}) on which the rotated $N^\alpha E^\beta N^\alpha$-code is defined. A zoomed-out view of the torus with both types of cycle operators is depicted in (a). The horizontal cycle operator is inside the orange dashed rectangle, while the diagonal cycle operator is inside the purple dashed rectangle. A zoomed-in view of the horizontal cycle operator $C^X_{\mathrm{hor}}(3,0)$ from (a) is shown in (b). A zoomed in view of the diagonal cycle operator $C^X_{\mathrm{diag}}(1,4)$ from (a) is displayed in (c).}
     \label{fig:x_cycle_operators}
\end{figure}

Now, we use the block operators to define the cycle operators. By \Cref{prop:equiv_par}, the parallelogram from \Cref{eq:rotated_diamond} is equivalent to 
\begin{equation}\label{eq:rotated_hermite_repr}
    \mathcal{P}\left(\vec{v}_1=\left(-\frac{1}{2}\beta d, \alpha d\right), \vec{v}_2-\vec{v}_1=\left(\beta d, 0\right)\right).
\end{equation}
As such, two points $A,B\in\Z^2$ with $\overrightarrow{AB}=\gamma\vec{e}$ are the same on the torus if and only if $\gamma$ is divisible by $\beta d$. Therefore, for all $Z$-ancilla qubits $(x,y)\in\Z^2_Z$ the following operator is well-defined and has weight exactly $d$:
\begin{equation}\label{eq:x_hor_operator}
    C^X_{\mathrm{hor}}(x, y) := \prod_{m=0}^{\frac{d}{2}-1}B^X_{\mathrm{hor}}(x+2\beta m,y)
    \quad (x,y)\in\Z^2_Z.
\end{equation}
We call $C^X_{\mathrm{hor}}(x, y)$ the horizontal Pauli-$X$ cycle operator associated with $(x,y)$, see \Cref{fig:zoomed_out_x_cycle_operators,fig:zoomed_in_x_horizontal_cycle_operator} for an illustration of $C^X_{\mathrm{hor}}(3,0)$. Notice that by \Cref{eq:anti-comm_hor_block} and \Cref{fig:NaEbNa_xx_error_l_qubits}, the horizontal cycle operators commute with all stabilisers of the code. 

Next, the definition of the parallelogram in \Cref{eq:rotated_diamond} implies that two points $A,B\in\Z^2$ with $\overrightarrow{AB}=\gamma(\beta\vec{e}+2\alpha\vec{n})$ are the same on the torus if and only if $\gamma$ is divisible by $\frac{d}{2}$. Therefore, for all $Z$-ancilla qubits $(x,y)\in\Z^2_Z$ the following operator is well-defined and has weight exactly $d$:
\begin{equation}\label{eq:x_diag_operator}
    C^X_{\mathrm{diag}}(x, y) := \prod_{m=0}^{\frac{d}{4}-1}B^X_{\mathrm{diag}}(x+2\beta m,y+4\alpha m)
    \qquad (x,y)\in\Z^2_Z.
\end{equation}
We call $C^X_{\mathrm{diag}}(x, y)$ the diagonal Pauli-$X$ cycle operator associated with $(x,y)$, see \Cref{fig:zoomed_out_x_cycle_operators,fig:zoomed_in_x_diagonal_cycle_operator} for an illustration of $C^X_{\mathrm{diag}}(1,4)$. Notice that by \Cref{eq:anti-comm_diag_block} and \Cref{fig:NaEbNa_xxxx_error}, the diagonal cycle operators commute with all stabilisers of the code.

With this we identified a natural set of Pauli-$X$ operators that, due to their cycle structure, are natural candidates for $X$-logical operators. In the next subsection we investigate the relationship between sets of cycle operators up to stabilisers.

\subsection{Dependence and equivalence of cycle operators up to stabilisers}\label{sec:app-cycle_operator_equiv}

In this subsection, we show that some products of cycle operators are stabilisers of the code, thereby demonstrating dependency between them. We start with the horizontal cycle operators. First, directly from \Cref{eq:rotated_hermite_repr,eq:x_hor_operator} we obtain that shifting a horizontal cycle Pauli-$X$ operator by $2\beta\vec{e}$ gives back the same operator, i.e.
\begin{equation}\label{eq:hor_2beta_east_shift}
    C^X_{\mathrm{hor}}(x, y) = C^X_{\mathrm{hor}}(x+2\beta, y) \;\;\text{for all}\;\; (x,y)\in\Z^2_Z.
\end{equation}
Second, observe that the product of all the $\beta$ horizontal cycle Pauli-$X$ operators that are supported on the same horizontal line is the identity operator, i.e.
\begin{equation}\label{eq:beta_hor_prod_on_hor_line}
    \prod_{p=0}^{\beta-1}{C^X_{\mathrm{hor}}(x + 2p, y)}=I \;\;\text{for all}\;\; (x,y)\in\Z^2_Z.
\end{equation}
This is straightforward from the fact that the Pauli-$X$ terms of $C^X_{\mathrm{hor}}(x+2p+2, y)$ and $C^X_{\mathrm{hor}}(x+2p, y)$ cancel at $(x+2p+1+2\beta m, y)$, for all $m=0,\dots,\frac{d}{2}-1$. 

\begin{figure*}[!htbp]
     \centering
     \begin{subfigure}[t]{0.45\textwidth}
         \centering
         \includegraphics[height=8cm]{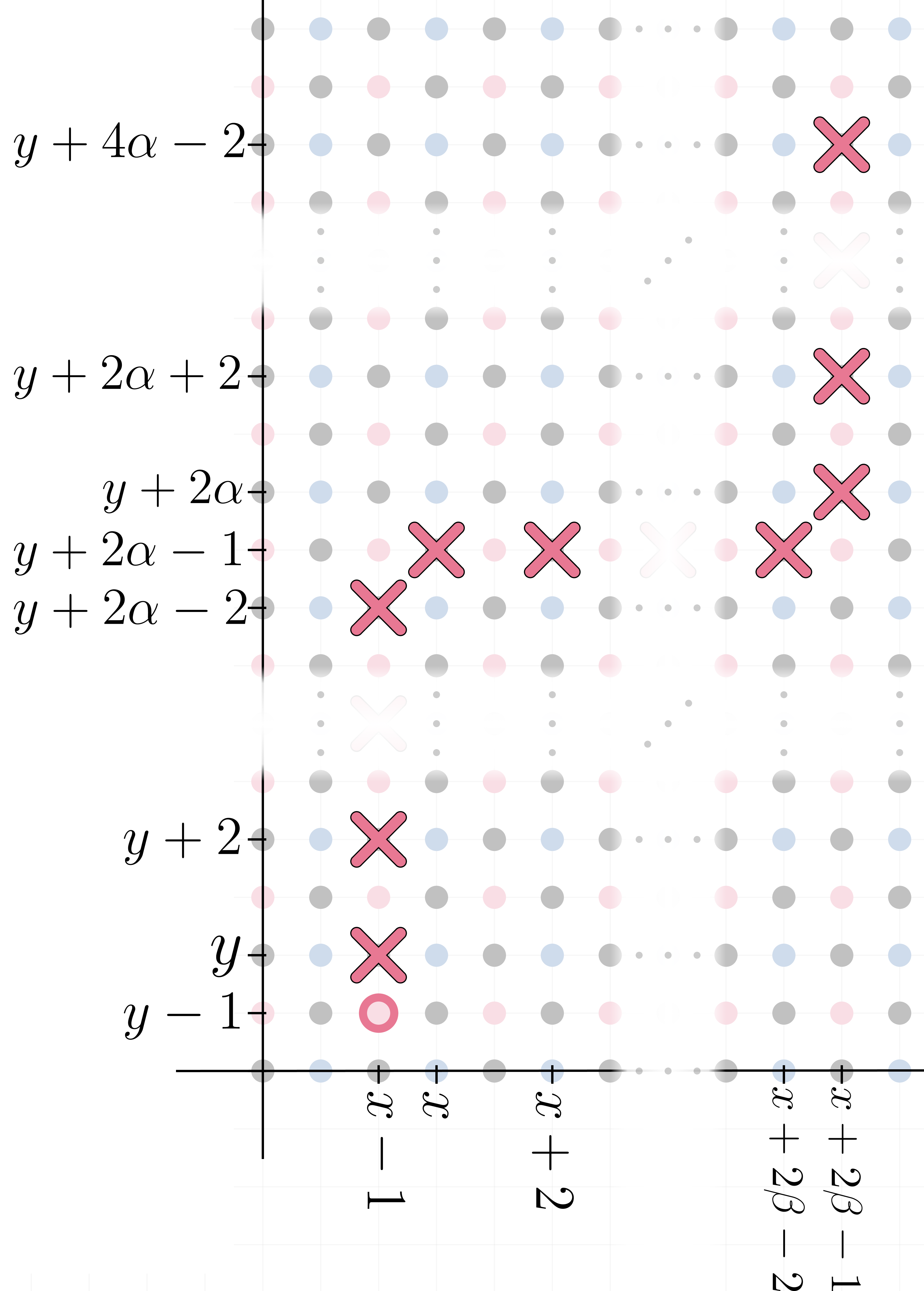}
         \caption{An $X$-stabiliser with associated ancilla qubit $(x-1,y-1)\in\Z^2_X$.}
         \label{fig:NaEbNa_stabiliser}
     \end{subfigure}
     \hspace{0.25cm}
     \begin{subfigure}[t]{0.45\textwidth}
         \centering
         \includegraphics[height=8cm]{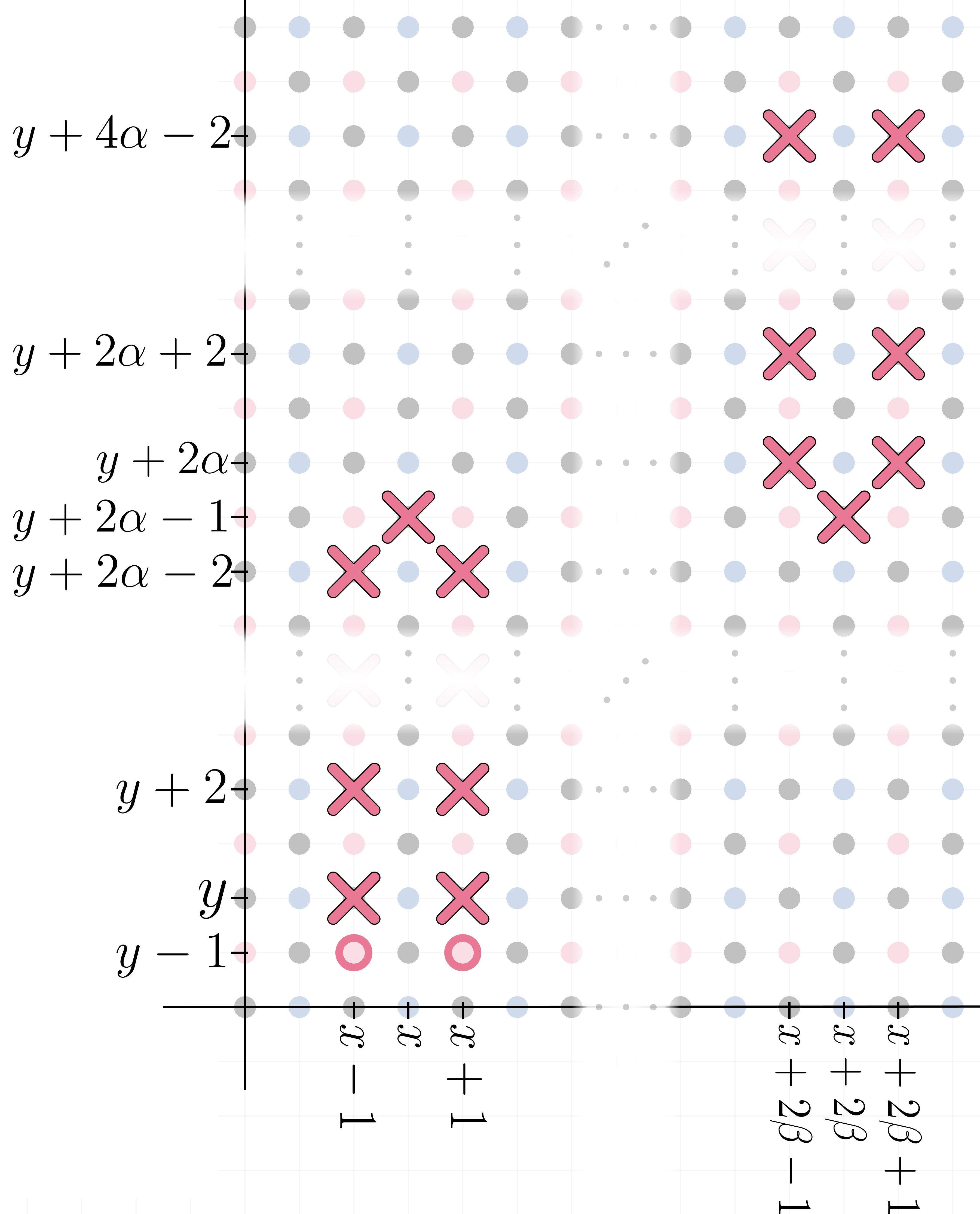}
         \caption{The product of two $X$-stabilisers with associated ancilla qubits $(x-1,y-1),(x+1,y-1)\in\Z^2_X$.}
         \label{fig:two_horizontal_NaEbNa_stabilisers}
     \end{subfigure}
     \caption{An $X$-type $N^\alpha E^\beta N^\alpha$-stabiliser is shown in (a), and the product of two such $X$-stabilisers is depicted in (b). Pauli $X$ terms of these operators are shown as red crosses, and the ancilla qubits are highlighted with red circles.}
     \label{fig:X_NaEbNa_stabilisers_hor_dep}
\end{figure*}

Next, consider the two $X$-stabilisers with associated ancilla qubits $(x-1,y-1), (x+1,y-1)\in\Z^2_X$. The former is depicted in \Cref{fig:NaEbNa_stabiliser}. The product of the two $N^\alpha E^\beta N^\alpha$-stabilisers can be expressed in the following way:
\begin{equation}\label{eq:two_horizontal_NaEbNa_stabilisers}
    X_{(x,y+2\alpha-1)}X_{(x+2\beta,y+2\alpha-1)}\prod_{q=0}^{\alpha-1} B^X_{\mathrm{hor}}(x,y+2q)B^X_{\mathrm{hor}}(x+2\beta,y+2\alpha+2q),
\end{equation}
see \Cref{fig:two_horizontal_NaEbNa_stabilisers}. Notice that if we shift this operator with the vector $2\beta\vec{e}$, then the $X_{(x+2\beta,y+2\alpha-1)}$ term is cancelled in the product of the original and the shifted operator. Therefore, if we apply the same shift consecutively $\frac{d}{2}-1$ times and take the product of the obtained $\frac{d}{2}$ operators, we obtain a stabiliser of the code that is of the form
\begin{equation}\label{eq:2alpha_hor_prod_vert_cons_shifts}
    \prod_{q=0}^{\alpha-1}\prod_{m=0}^{\frac{d}{2}-1} B^X_{\mathrm{hor}}\left(x+2\beta m,y+2q\right)B^X_{\mathrm{hor}}\left(x+2\beta (m+1),y+2\alpha+2q\right)
    = \prod_{q=0}^{2\alpha-1}{C^X_{\mathrm{hor}}(x, y+2q)},
\end{equation}
for all $(x,y)\in\Z^2_Z$, see \Cref{fig:2alpha_hor_prod_vert_cons_shifts}. This demonstrates that if we shift a horizontal cycle Pauli-$X$ operator with the vector $2\vec{n}$ consecutively $2\alpha-1$ times, then the product of these and the original operator is a stabiliser of the code. This fact readily implies that the horizontal cycle Pauli-$X$ operators $C^X_{\mathrm{hor}}(x, y)$ and $C^X_{\mathrm{hor}}(x, y+4\alpha)$ are equivalent up to stabilisers. Indeed, we simply need to multiply the stabiliser from \Cref{eq:2alpha_hor_prod_vert_cons_shifts} with the stabiliser we obtain by translating it by the vector $2\vec{n}$, see \Cref{fig:hor_4alpha_north_shift}. With this we showed that up to stabiliser equivalence we have at most $(2\alpha-1)(\beta-1)$ independent horizontal cycle Pauli-$X$ operators, namely, 
\begin{equation}\label{eq:candidate_hor_X_logicals}
    C^X_{\mathrm{hor}}(1+2p, 2q) \qquad (0\leq p \leq \beta-2,\; 0\leq q \leq 2\alpha-2).
\end{equation}

\begin{figure*}[!htbp]
     \centering
     \begin{subfigure}[t]{0.45\textwidth}
         \centering
         \includegraphics[height=5cm]{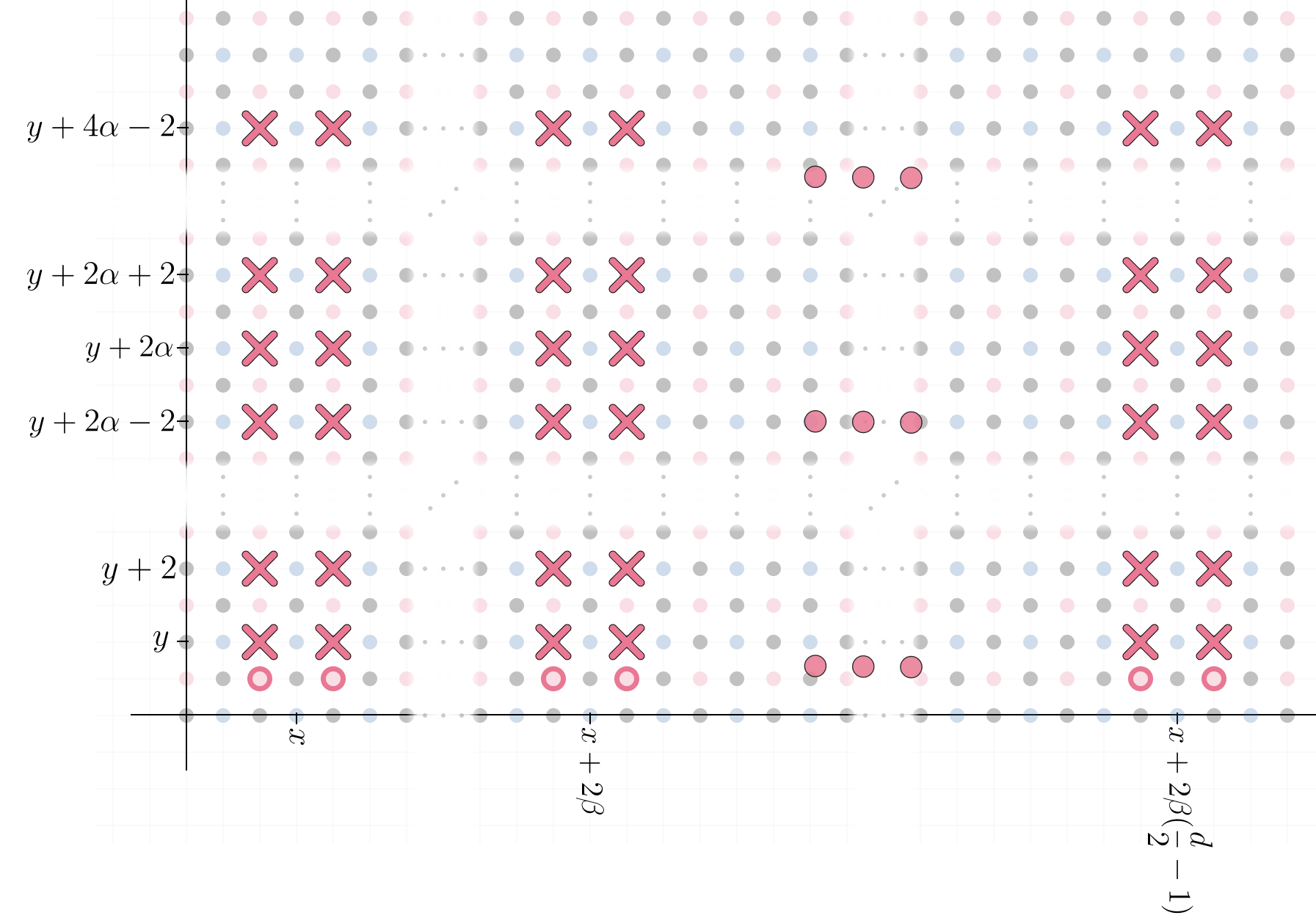}
         \caption{The operator from \Cref{eq:2alpha_hor_prod_vert_cons_shifts} that is the product of $X$-stabilisers whose ancilla qubits are highlighted. This operator is also a product of $2\alpha$ horizontal cycle Pauli-$X$ operators.}
         \label{fig:2alpha_hor_prod_vert_cons_shifts}
     \end{subfigure}
     \hspace{0.25cm}
     \begin{subfigure}[t]{0.45\textwidth}
         \centering
         \includegraphics[height=5cm]{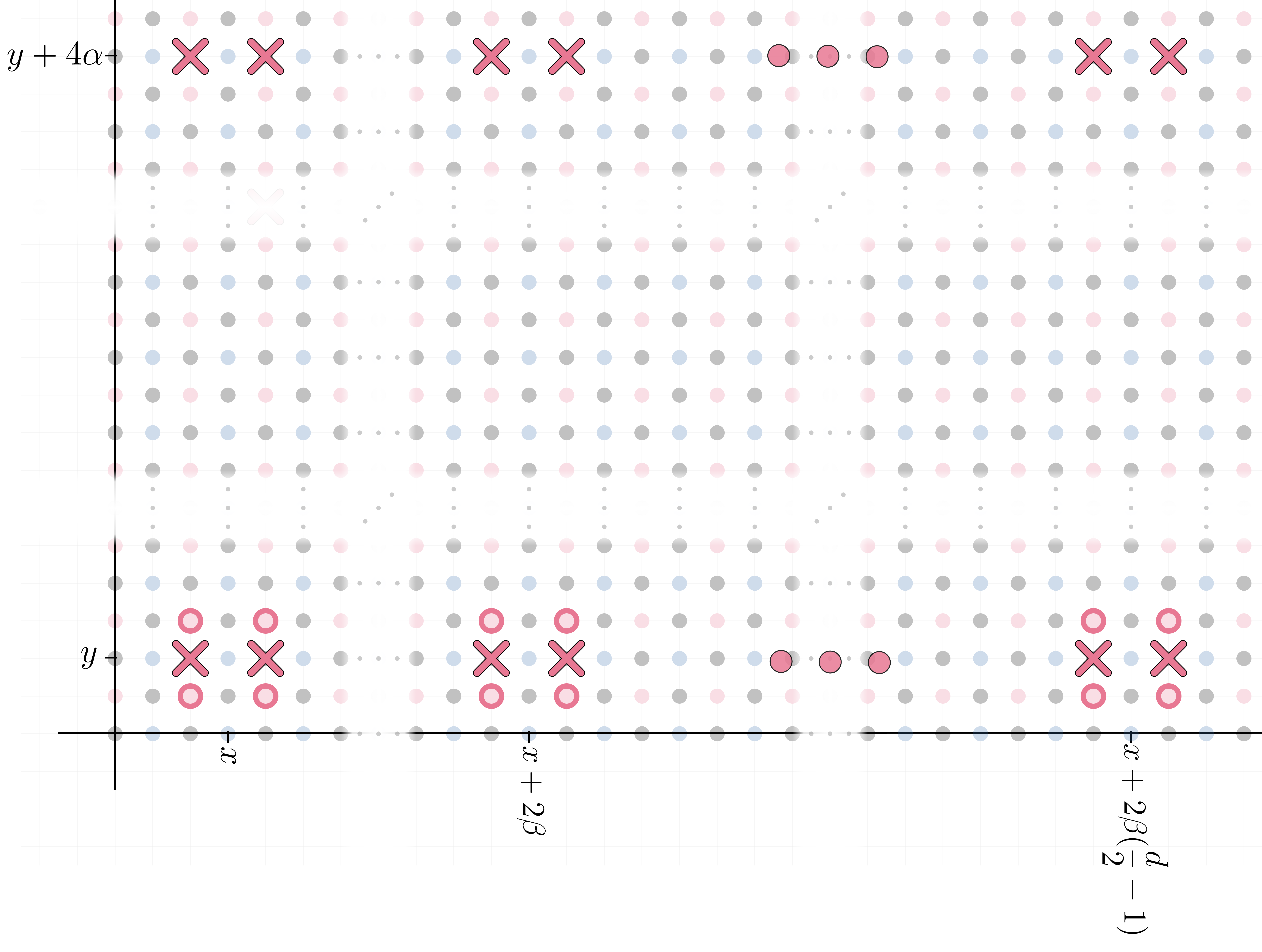}
         \caption{Two horizontal cycle Pauli-$X$ operators that are equivalent up to stabilisers. The top one is the translated version of the bottom one with the vector $4\alpha\vec{n}$. The ancilla qubits of the $X$-stabilisers whose product is equal to the product of these two cycle operators are highlighted.}
         \label{fig:hor_4alpha_north_shift}
     \end{subfigure}
     \caption{Dependencies between horizontal cycle Pauli-$X$ operators up to stabilisers.}
\end{figure*}

\begin{figure*}[!htbp]
     \centering
     \begin{subfigure}[t]{0.30\textwidth}
         \centering
         \includegraphics[width=1.0\linewidth]{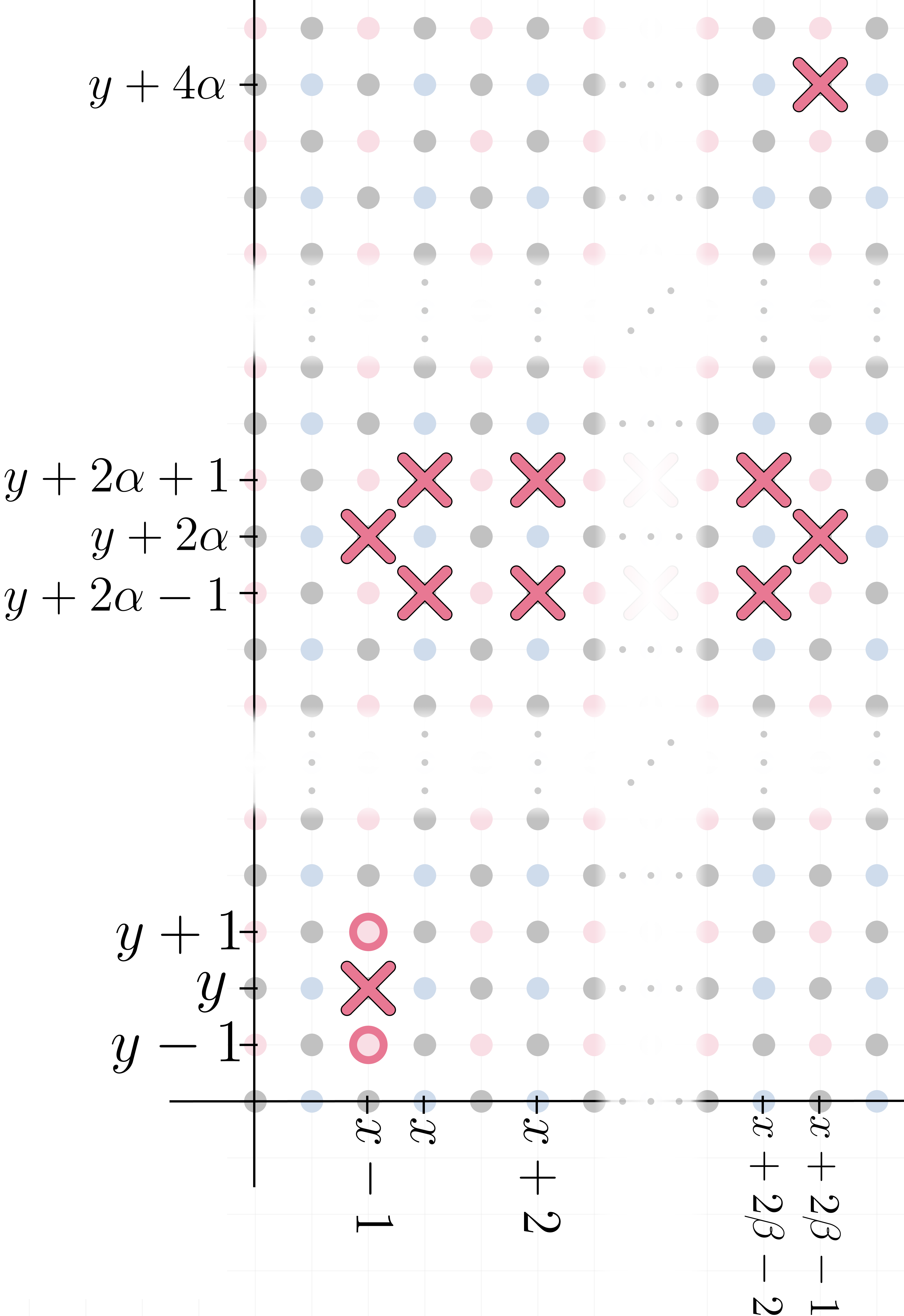}
         \caption{The product of two $X$-type $N^\alpha E^\beta N^\alpha$-stabilisers with associated ancilla qubits $(x-1,y-1), (x-1,y+1)\in\Z^2_X$.}
         \label{fig:two_vertical_NaEbNa_stabilisers}
     \end{subfigure}
     \hspace{0.25cm}
     \begin{subfigure}[t]{0.60\textwidth}
         \centering
         \includegraphics[width=1.0\linewidth]{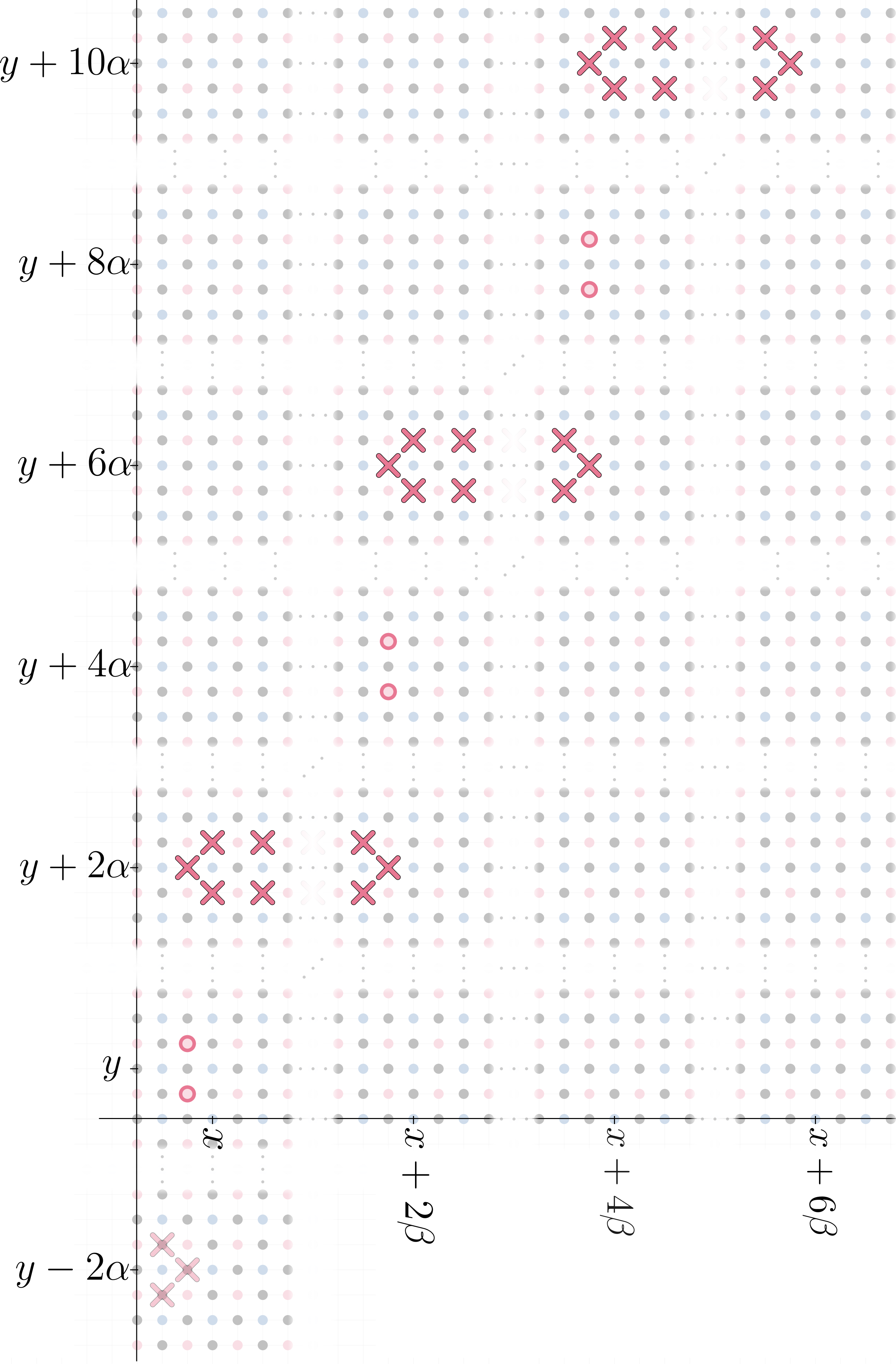}
         \caption{The operator from \Cref{eq:beta_diag_prod_diag_cons_shifts} with $d=12$ that is the product of $X$-type $N^\alpha E^\beta N^\alpha$-stabilisers whose ancilla qubits are highlighted. This operator is also a product of $\beta$ diagonal cycle Pauli-$X$ operators. Note that since $d=12$, the points $(x,y-2\alpha)$ and $(x+6\beta,y+10\alpha)$ are the same on the torus. This is illustrated by the three faintly coloured red crosses at the bottom-left corner, that are the same three red crosses depicted at the top-right corner.}
         \label{fig:beta_diag_prod_diag_cons_shifts}
     \end{subfigure}
     \caption{Products of $X$-type $N^\alpha E^\beta N^\alpha$-stabilisers. In (a) a product of two stabilisers is shown. With (a) we establish a dependency relation between diagonal cycle Pauli-$X$ operators as shown in (b). Pauli $X$ terms are shown as red crosses, and the ancilla qubits are highlighted with red circles.}
     \label{fig:X_NaEbNa_stabilisers_diag_dep}
\end{figure*}

We can establish similar dependencies between diagonal cycle Pauli-$X$ operators. First, directly from the definition in \Cref{eq:x_diag_operator} we obtain
\begin{equation}\label{eq:diag_2beta_east_4alpha_north_shift}
    C^X_{\mathrm{diag}}(x, y) = C^X_{\mathrm{diag}}(x+2\beta, y+4\alpha) \;\;\text{for all}\;\; (x,y)\in\Z^2_Z.
\end{equation}
Second, consider the product of two $X$-type $N^\alpha E^\beta N^\alpha$-stabilisers with associated ancilla qubits $(x-1,y-1), (x-1,y+1) \in \Z^2_X$, which is depicted in \Cref{fig:two_vertical_NaEbNa_stabilisers} and can be expressed in the following way:
\begin{equation}\label{eq:two_vertical_NaEbNa_stabilisers}
    X_{(x-1,y)}X_{(x+2\beta-1,y+4\alpha)}\prod_{p=0}^{\beta-1}B^X_{\mathrm{diag}}(x+2p,y+2\alpha).
\end{equation}
Notice that if we shift this operator with the vector $2\beta\vec{e}+4\alpha\vec{n}$, then in the product of the original and the shifted operator the $X_{(x+2\beta-1,y+4\alpha)}$ term is cancelled. Therefore, if we apply the same shift consecutively $\frac{d}{4}-1$ times and take the product of the obtained $\frac{d}{4}$ operators, we obtain a stabiliser that is of the form
\begin{equation}\label{eq:beta_diag_prod_diag_cons_shifts}
    \prod_{p=0}^{\beta-1}\prod_{m=0}^{\frac{d}{4}-1}B^X_{\mathrm{diag}}(x+2p+2\beta m,y+2\alpha+4\alpha m)
    = \prod_{p=0}^{\beta-1}C^X_{\mathrm{diag}}(x+2p,y+2\alpha),
\end{equation}
see \Cref{fig:beta_diag_prod_diag_cons_shifts}. Next, consider shifting the stabiliser from \Cref{eq:two_horizontal_NaEbNa_stabilisers} (see also \Cref{fig:two_horizontal_NaEbNa_stabilisers}) with the vector $2\beta\vec{e}+4\alpha\vec{n}$ consecutively $\frac{d}{4}-1$ times. The product of these stabilisers is illustrated in \Cref{fig:2alpha_diag_prod_vertical_cons_shifts} and can be expressed as 
\begin{equation}\label{eq:2alpha_diag_prod_vertical_cons_shifts}
    \prod_{q=-\alpha}^{\alpha-1}C^X_{\mathrm{diag}}(x,y+2q).
\end{equation}
\begin{figure*}[!htbp]
     \centering
     \includegraphics[width=0.6\linewidth]{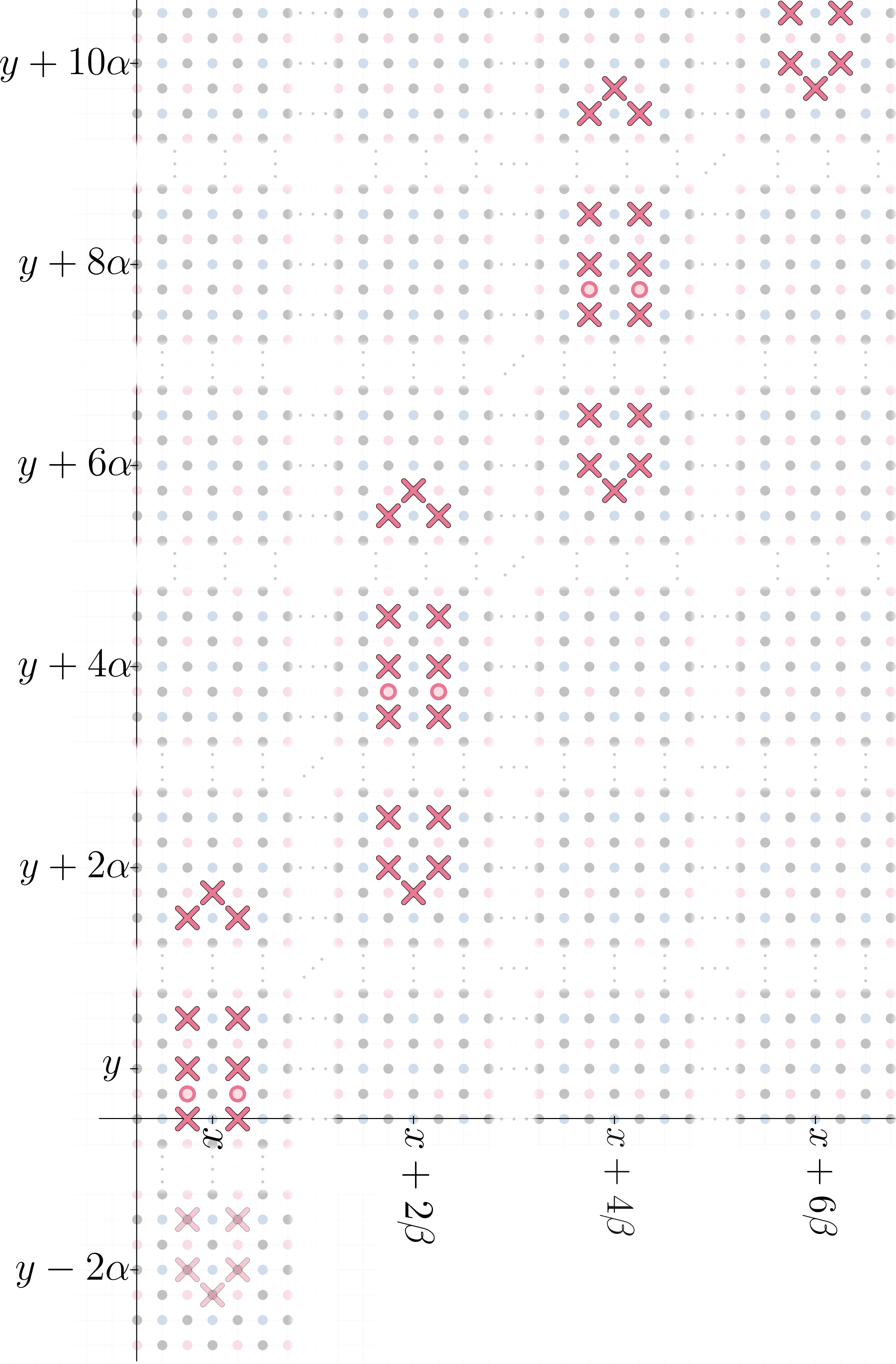}
     \caption{The operator from \Cref{eq:2alpha_diag_prod_vertical_cons_shifts} with $d=12$ that is the product of the $X$-stabilisers whose ancilla qubits are highlighted. This operator is also a product of $2\alpha$ diagonal cycle Pauli-$X$ operators. Note that since $d=12$, the faintly coloured five red crosses at the bottom-left corner are the same as the five red crosses depicted at the top-right corner.}
     \label{fig:2alpha_diag_prod_vertical_cons_shifts}
\end{figure*}
Furthermore, we point out that either of \Cref{eq:beta_diag_prod_diag_cons_shifts,eq:2alpha_diag_prod_vertical_cons_shifts} readily implies that the three diagonal cycle Pauli-$X$ operators $C^X_{\mathrm{diag}}(x, y)$, $C^X_{\mathrm{diag}}(x+2\beta, y)$ and $C^X_{\mathrm{diag}}(x, y+4\alpha)$ are all equivalent up to stabilisers of the code, see \Cref{fig:diag_2beta_east_shift}.
With this we showed that up to stabiliser equivalence we have at most $(2\alpha-1)(\beta-1)$ independent diagonal cycle Pauli-$X$ operators, namely, 
\begin{equation}\label{eq:candidate_diag_X_logicals}
    C^X_{\mathrm{diag}}(1+2p, 2q) \quad (0\leq p \leq \beta-2, 0\leq q \leq 2\alpha-2).
\end{equation}
In the next subsection we show that the cycle Pauli-$X$ operators from \Cref{eq:candidate_hor_X_logicals,eq:candidate_diag_X_logicals} are all logical $X$-operators.

\begin{figure*}
    \centering
    \includegraphics[width=0.6\linewidth]{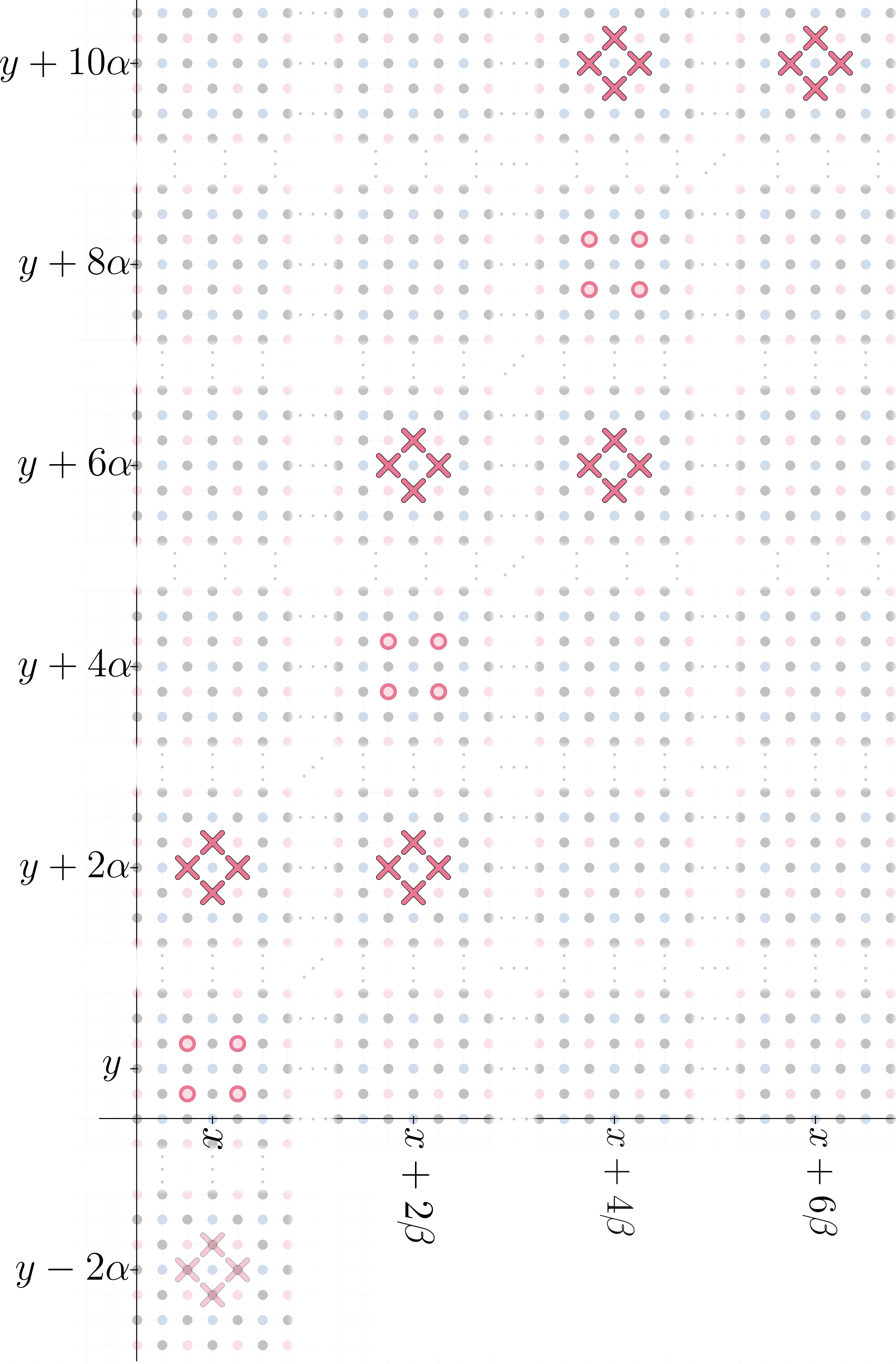}
    \caption{Two diagonal cycle Pauli-$X$ operators that are equivalent up to stabilisers in the $d=12$ case. One is a shifted version of the other with the vector $2\beta\vec{e}$, or alternatively, with the vector $-4\alpha\vec{n}$. The ancilla qubits of the $X$-stabilisers whose product is equal to the product of these two cycle operators are highlighted. Note that since $d=12$, the faintly coloured four red crosses at the bottom-left corner are the same as the four red crosses depicted at the top-right corner.}
    \label{fig:diag_2beta_east_shift}
\end{figure*}


\subsection{Construction of logical Pauli operators}\label{sec:app-logical_operator_construction}

In this subsection, we construct the logical $X$- and $Z$-operators of rotated $N^\alpha E^\beta N^\alpha$-codes from the cycle operators. First, for $k$ pairs of Pauli-$X$ and -$Z$ operators $\tilde{X}_1,\tilde{Z}_1;\allowbreak \tilde{X}_2,\tilde{Z}_2;\dots;\tilde{X}_k,\tilde{Z}_k$ we define the associated anti-commutation matrix $\Omega = [\omega_{i,j}]_{i,j=1}^k$ by
\begin{equation}\label{eq:general_anti-comm_mx}
    \omega_{i,j} = \left\{
    \begin{matrix}
        0, & \text{if } \tilde{X}_i \text{ and } \tilde{Z}_j\text{ commute,}\\
        1, & \text{otherwise.}
    \end{matrix}
    \right.
\end{equation}
where all elements are from the field $\Z_2$. Recall that these pairs of Pauli-$X$ and -$Z$ operators are logical operators of a CSS code if they all commute with the stabilisers and their anti-commutation matrix $\Omega$ is the $k\times k$ identity matrix.

We now construct the codes' logical operators. Recall that both horizontal and diagonal Pauli-$X$ cycle operators were parametrised by $Z$-ancilla qubits $(x,y)$, and therefore that $(x+1,y+1)$ is always an $X$-ancilla qubit. Using the notation $\sigma \in \{\mathrm{hor},\mathrm{diag}\}$, we define the Pauli-$Z$ cycle operator $C^Z_{\sigma}(x+1, y+1)$ by taking the Pauli-$X$ cycle operator $C^X_{\sigma}(x, y)$ and replacing each of its Pauli-$X$ terms $X_{(u,v)}$ with the Pauli-$Z$ operator $Z_{(u+1,v+1)}$. As such, Pauli-$Z$ cycle operators are shifted versions of the Pauli-$X$ cycle operators with the vector $\vec{e}+\vec{n}$. Due to the symmetric structure of the $X$- and $Z$-stabilisers, it is straightforward to see that Pauli-$Z$ cycle operators commute with all stabilisers of the code.

Consider the index-set 
\begin{equation}
    \mathcal{I}:=\{\mathrm{hor},\mathrm{diag}\}\times\{0,1,...,\beta-2\}\times\{0,1,...,2\alpha-2\},
\end{equation}
and associate with each element $(\sigma,p,q)\in\mathcal{I}$ the Pauli-$X$ cycle operator $C^X_{\sigma}(1+2p, 2q)$ and the Pauli-$Z$ cycle operator $C^Z_{\sigma}(2+2p, 1+2q)$. The anti-commutation matrix 
\begin{equation}
    \Omega = [\omega_{(\sigma,p,q),(\sigma',p',q')}]_{(\sigma,p,q),(\sigma',p',q')\in\mathcal{I}}
\end{equation}
associated with these operators is then the $(2(2\alpha-1)(\beta-1))\times(2(2\alpha-1)(\beta-1))$ matrix defined by
\begin{equation}
    \omega_{(\sigma,p,q),(\sigma',p',q')} = \left\{
    \begin{matrix}
        0, & \text{if } C^X_{\sigma}(1+2p, 2q) \text{ and } C^Z_{\sigma'}(2+2p', 1+2q') \text{ commute,}\\
        1, & \text{otherwise.}
    \end{matrix}
    \right.
\end{equation}
This is the same matrix defined in \Cref{eq:general_anti-comm_mx}, except with elements indexed by $\mathcal{I}$. We will additionally denote by $\Omega_{\sigma,\sigma'}$ the $(2\alpha-1)(\beta-1)\times(2\alpha-1)(\beta-1)$ block of $\Omega$ where $\sigma$ and $\sigma'$ are fixed. In what follows, we prove that $\Omega$ is invertible, and then show that as a result the logical $X$- and $Z$-operators can be expressed as products of cycle operators.

\begin{proposition}\label{prop:cycle_ops_anti-comm_invertible}
    The anti-commutation matrix $\Omega$ associated with the cycle operators is invertible.
\end{proposition}

\begin{proof}
    First, since horizontal Pauli-$X$ cycle operators are supported on $L$-data qubits and horizontal Pauli-$Z$ cycle operators are supported on $R$-data qubits, they always commute and therefore 
    \begin{equation}
        \Omega_{\mathrm{hor}, \mathrm{hor}}=0.
    \end{equation}
    Second, recall that diagonal cycle Pauli-$X$ operators are composed of $B^X_{\mathrm{diag}}$ operators that are each supported on the four nearest-neighbours of a $Z$-ancilla qubit. The diagonal cycle Pauli-$Z$ operators have a similar structure, except that the block operators are each supported on the four nearest-neighbours of an $X$-ancilla qubit. Hence the intersection of the supports of an $X$- and a $Z$-type diagonal block operator contains either zero or two qubits, and as such they always commute. Therefore,
    \begin{equation}
        \Omega_{\mathrm{diag}, \mathrm{diag}}=0.
    \end{equation}
    This implies that $\Omega$ is invertible if and only if the two off-diagonal blocks $\Omega_{\mathrm{hor}, \mathrm{diag}}$ and $\Omega_{\mathrm{diag}, \mathrm{hor}}$ are both invertible.
    
    Next, in order to prove that $\Omega_{\mathrm{hor},\mathrm{diag}}$ is invertible, we investigate when a horizontal Pauli-$X$ cycle operator $C^X_{\mathrm{hor}}(1+2p, 2q)$ commutes with a diagonal Pauli-$Z$ cycle operator $C^Z_{\mathrm{diag}}(2+2p', 1+2q')$ for $0\leq p, p' \leq \beta-2,\; 0\leq q, q' \leq 2\alpha-2$. These operators can be expressed as
    \begin{equation}\label{eq:x_hor_for_anticommute}
        C^X_{\mathrm{hor}}(1+2p, 2q) = X_{(2p, 2q)} X_{(2+2p, 2q)} \cdot \left(\prod_{m=1}^{\frac{d}{2}-1} B^X_{\mathrm{hor}}(1+2p+2\beta m,2q)\right)
    \end{equation}
    and
    \begin{equation}\label{eq:x_diag_for_anticommute}
        \begin{split}
            C^Z_{\mathrm{diag}}(2+2p', 1+2q') = Z_{(2+2p',2q')} & Z_{(2+2p',2+2q')} Z_{(1+2p',1+2q')}Z_{(3+2p',1+2q')} \\ & \cdot \Bigg(\prod_{m=1}^{\frac{d}{4}-1}B^Z_{\mathrm{diag}}(2+2p'+2\beta m,1+2q'+4\alpha m)\Bigg).
        \end{split}
    \end{equation}
    Clearly, the two cycle operators anti-commute if and only if \linebreak $X_{(2p, 2q)} X_{(2+2p, 2q)}$ and $Z_{(2+2p',2q')}Z_{(2+2p',2+2q')}$ do. Indeed, the operator in \Cref{eq:x_hor_for_anticommute} is supported on $L$-data qubits whose $y$-coordinates satisfy $0\leq 2q \leq 4\alpha-4$, whereas in \Cref{eq:x_diag_for_anticommute} this does not hold for any of the Pauli-$Z$ terms except for the first two. Furthermore, the $x$-coordinates of these two Pauli-$Z$ terms satisfy $2\leq 2+2p'\leq 2\beta-2$, which can only hold for the first two Pauli-$X$ terms of the operator in \Cref{eq:x_hor_for_anticommute}. Therefore, we arrive at
    \begin{equation}\label{eq:offdiagonal_anticommutation}
    \begin{aligned}
    \omega_{(\mathrm{hor},p,q),(\mathrm{diag},p',q')}
    &=
    \begin{cases}
    1, & \text{if } p-p',\, q-q' \in \{0,1\},\\
    0, & \text{otherwise},
    \end{cases}
    \\
    &=
    \begin{cases}
    1, & \text{if } p-p' \in \{0,1\},\\
    0, & \text{otherwise},
    \end{cases}
    \times
    \begin{cases}
    1, & \text{if } q-q' \in \{0,1\},\\
    0, & \text{otherwise}.
    \end{cases}
    \end{aligned}
    \end{equation}
    Recall the definition of the Kronecker delta symbol: $\delta_{i,j} = 1$ if $i= j$, otherwise $\delta_{i,j} = 0$. With this notation the $n\times n$ identity matrix is expressed as $I_{n} = [\delta_{i,j}]_{i,j=0}^{n-1}$. We also define the $n\times n$ shift matrix as $L_{n} = [\delta_{i,j+1}]_{i,j=0}^{n-1}$ which has zero elements everywhere except just below the diagonal where its elements are all $1$. Notice that by \Cref{eq:offdiagonal_anticommutation} we have
    \begin{equation}
        \Omega_{\mathrm{hor},\mathrm{diag}}= (I_{\beta-1}+L_{\beta-1})\otimes(I_{2\alpha-1}+L_{2\alpha-1}).
    \end{equation}
    Since $I_{n}+L_{n}$ has determinant $1$ and hence is invertible, this tensor product is also invertible. One can prove by an analogous argument that $\Omega_{\mathrm{diag},\mathrm{hor}}$ is also invertible, which concludes that the anti-commutation matrix $\Omega$ is indeed invertible.
\end{proof}

Next, we prove that the logical operators can be expressed as products of the cycle operators. For this it is more convenient to consider a more general scenario.

\begin{proposition}\label{prop:anti-comm_mx_invertible}
    Assume we have $k$ pairs of Pauli-$X$ and -$Z$ operators $\tilde{X}_1,\tilde{Z}_1;\allowbreak \tilde{X}_2,\tilde{Z}_2;\allowbreak\dots;\tilde{X}_k,\tilde{Z}_k$ that are all supported on the data qubits of a CSS code and all commute with its stabilisers. We further assume that their associated anti-commutation matrix $\Omega = [\omega_{i,j}]_{i,j=1}^k$ is invertible. Then it is possible to define $k$ logical operators $\hat{X}_1,\hat{Z}_1;\allowbreak \hat{X}_2,\hat{Z}_2;\dots;\allowbreak\hat{X}_k,\hat{Z}_k$ of the CSS code with the following form:
    \begin{itemize}
        \item $\hat{X}_j = \tilde{X}_j$ holds for all $j$,
        \item each $\hat{Z}_j$ is a product of some $\tilde{Z}_m$ Pauli-$Z$ operators.
    \end{itemize}
\end{proposition}

\begin{proof}
    We define $\hat{X}_j := \tilde{X}_j$ for all $j$. For each $j$ we consider an index-set $\Gamma_j \subset\{1,2,\dots,k\}$ and define $\hat{Z}_j := \prod_{m\in\Gamma_j} \tilde{Z}_m$. It is straightforward that for all $i,j$ we have
    \begin{equation}\label{eq:ij_anti-comm}
    \begin{aligned}
        \hat{X}_i \hat{Z}_j 
        &= \hat{X}_i \prod_{m\in\Gamma_j} \tilde{Z}_m \\
        &= (-1)^{\sum_{m\in\Gamma_j}\omega_{i,m}} \prod_{m\in\Gamma_j} \tilde{Z}_m \hat{X}_i \\
        &= (-1)^{\sum_{m=1}^k\omega_{i,m} \gamma_{m,j}} \hat{Z}_j \hat{X}_i
    \end{aligned}
    \end{equation}
    where $\gamma_{m,j} = 1$ if $m\in\Gamma_j$, otherwise $\gamma_{m,j} = 0$. Note the sum can be taken modulo $2$ in \Cref{eq:ij_anti-comm}. As such, if we define the matrix $\Gamma = [\gamma_{m,j}]_{m,j=1}^k$, then the exponent on the last line is exactly the $(i,j)$ element of the matrix-product $\Omega\Gamma$. Therefore, choosing $\Gamma=\Omega^{-1}$ gives $\hat{X}_i \hat{Z}_j = (-1)^{\delta_{ij}} \hat{Z}_j \hat{X}_i$, which means the operator pairs $(\hat{X}_j, \hat{Z}_j)$ together define $k$ logical qubits. This concludes the proof.
\end{proof}

Since the cycle operators satisfy the conditions of \Cref{prop:anti-comm_mx_invertible}, this proves that the rotated $N^\alpha E^\beta N^\alpha$-codes have at least $2(2\alpha-1)(\beta-1)$ logical qubits, and that the logical operators can be expressed as products of cycle operators. Since the cycle operators have all weight exactly $d$, we also conclude that the code distance is at most $d$. As was mentioned in the main text, we verified the code distance to be exactly $d$ using integer programming up to $d=12$.


\section{Simulation details and additional results}
\label{sec:app-sim_details_additional_results}
In this appendix, we provide further details on the simulation methodology and present additional numerical results for the $NE^3N$-, $N^2E^2N^2$- and $N^2E^4N^2$-directional codes that were not discussed in detail in \Cref{sec:simulations}. Additionally, we compare the TC's logical error rate using the Tesseract decoder's short and long beam settings. All circuits for the directional codes that were benchmarked, their parity check matrices, and the collected numerical data are available in \cite{our_circuits}.

\subsection{Simulation details}
\label{sec:app-simulation_details}
In this subsection, we provide further details on our simulation methodology. All the investigated codes were benchmarked using an $X$-memory circuit with $d$ QEC rounds, where $d$ is the code distance. More precisely, for each code we prepared the data qubits in the $|+\rangle$ state, then performed $d$ rounds of syndrome extraction, and finally measured out the data qubits in the $X$-basis, forming logical observables from the logical $X$-operators. Note that it is well-known that the $X$- and $Z$-memory performances are approximately the same for the RTC, TC, RPSC and BB codes. As for the directional codes, the same holds due to the following symmetry. Take the stabilisers and their scheduling, together with the logical operators of the code, translate them by the vector $(1,1)$, and change the Pauli-$X$ (-$Z$) terms to Pauli-$Z$ (-$X$) terms. This way we get back the same CSS code, and importantly, the scheduling of each stabiliser remains unchanged.

The benchmarked directional code circuits were compiled to the following gates: $Z$-basis reset (R), $Z$-basis measurement (M), the iSWAP entangling gate, the Hadamard gate (H), the phase gate (S) and the $\frac{\pi}{2}$-$X$-rotation ($\sqrt{X}$). These circuits can all be executed under degree-$4$ (\Cref{fig:square_grid}) and some of them even under degree-$3$ connectivity (\Cref{fig:hex_grid}), see \Cref{tab:directions_and_layouts}. As for the RTC, TC, RPSC and BB code circuits, we used the same gates, except instead of iSWAP we used the CZ gate. The RTC, TC and RPSC circuits were constructed in the standard way, i.e. under square-grid connectivity, although we note that circuits for these codes can also be constructed on the hexagonal-grid with CZ gates, and with iSWAP gates both on square- and hex-grid. These alternative circuits have similar QEC performance, see \cite{MBG}.

Noise was added to the compiled circuits according to the standard superconducting-inspired circuit-level Pauli noise model from \cite{si1000-1, si1000-2}, called ``SI-$1000$''. This noise model is parametrised by $p$, the physical error rate, which we varied as $p=10^{-3+\frac{j}{5}}$ for $0\leq j\leq 5$. We point out that noise was added for all gates, including the state preparations at the beginning and the measurements at the end of the circuits. We note that the SI-$1000$ noise model originally does not include the iSWAP gate, so we added the same noise after iSWAP gates as after CZ gates. Additionally, it is important to keep in mind that the SI-$1000$ noise model may be optimistic for the BB codes. These codes indeed require two additional and non-local connections at each qubit location, which makes the physical device noisier \cite{alec-dynamic-demonstration,qldpc-demonstration} but is not penalised by the SI-$1000$ noise model (see \cite{tesseractdecoder} for a proposed noise model that does so).

We used ``stim'' \cite{stim} to sample our noisy circuits and the Tesseract decoder \cite{tesseractdecoder} with the short beam setting to decode them. For each data point, we first sampled until either $10^6$ shots were performed or $10^4$ decoding failures were reached. After this we sampled:
\begin{itemize}
    \item an additional $10^7$ shots for the distance $8$ rectangular $N^2E^2N^2$-code at $p=10^{-3}$,
    \item an additional $10^8$ shots for the RTC, TC and RPSC at $p=10^{-3}$.
\end{itemize}
The sampling and decoding were handled by the ``sinter'' decoding sampler, a stim extension which efficiently distributes the tasks across multiple CPU cores. All of our simulations, including the initial explorations and parallelogram searches, used a total of $\approx 45$ CPU-core-years. 

All plots we present in the paper show the sampled logical error rate. More precisely, after sampling and decoding a quantum memory circuit with $d$ QEC rounds for a total number of $N$ shots, the ``sampled logical failure probability'' is $P_L(d) = \frac{f}{N}$ where $f$ is the number of samples where the decoder failed. We then calculate the Clopper--Pearson interval $(P_{min}^{\alpha},P_{max}^{\alpha})$ defined by the following equations:
\begin{equation}\label{eq:P_min}
    P^\alpha_{min} = B\left(\frac{\alpha}{2}; f, N - f + 1\right)
\end{equation}
and
\begin{equation}\label{eq:P_max}
    P^\alpha_{max} = B\left(1 - \frac{\alpha}{2}; f + 1, N - f\right),
\end{equation}
where $0<\alpha<1$ and $B(p;v,w)$ is the $p$th quantile from a beta distribution with shape parameters $v$ and $w$. Specifically, we substitute $\alpha=0.05$, which provides an interval where the (real) logical failure probability lies with $95\%$ confidence based on its sampled value $P_L(d)$, which we use as the error bars around $P_L(d)$. Next, we calculate the ``sampled logical error rate'' (per round) using the following standard formula
\begin{equation}\label{eq:LER}
    p_L = 1-\sqrt[d]{1-P_L(d)} \approx \frac{P_L(d)}{d},
\end{equation}
and we obtain error bars around each by applying the same formula to $P_{min}^{0.05}$ and $P_{max}^{0.05}$. In each of our plots a marker shows the value of $p_L$. The error bars are displayed as shaded areas for plots where $p$ varies on the $x$-axis, and otherwise as vertical lines. Note that in some cases we consider $m$ copies of a code, e.g. in \Cref{fig:N2E3N2_vs_RTC} we considered $m=6$ copies of RTCs in order to match the number of logical operators of the $N^2E^3N^2$-codes. In such a case, the sampled logical failure probability $P_L(d,m)$ of $m$ copies of the code was obtained using the formula $P_L(d,m) = 1-(1-P_L(d))^m$.

We finish this section by explaining how we obtained the values in \Cref{tab:slope_ratios} and \Cref{fig:N2E3N2_fixed_p} (see also \Cref{fig:NE3N_fixed_p,fig:N2E2N2_fixed_p,fig:N2E4N2_fixed_p}). Note that for the RTC, TC, and RPSC it is well-known that the logical error rates decrease exponentially with the distance. Equivalently, they decrease exponentially with $\sqrt{n_{ph}}$, where $n_{ph}$ is the total number of physical qubits used by the code. Based on the calculated distances for directional codes, we conjecture this to be the same for the rectangular and rotated families. Therefore, on the square-root--logarithmic scaled plots for a fixed value of $p$, such as \Cref{fig:N2E3N2_fixed_p}, we may fit a line, and extrapolate to larger distances. Assume the following line fits for a family of directional codes with $k$ logical qubits, and $\frac{k}{2}$ copies of the RTC, respectively:
\begin{equation}\label{eq:line-fit_DC}
    \log p_L^{dir} = s_{dir}\sqrt{n_{ph}^{dir}}+c_{dir}
\end{equation}
and
\begin{equation}\label{eq:line-fit_RTC}
    \log p_L^{rtc} = s_{rtc}\sqrt{n_{ph}^{rtc}}+c_{rtc}.
\end{equation}
Then, using these formulae, solving the equation $p_L^{dir} = p_L^{rtc}$ gives that 
\begin{equation}
    n_{ph}^{dir} \approx \frac{(s_{rtc})^2}{(s_{dir})^2} n_{ph}^{rtc}.
\end{equation}
Therefore, to achieve the same logical error rate as the RTC, the given family of directional codes requires $\approx \frac{(s_{rtc})^2}{(s_{dir})^2}$-times the number of physical qubits. These are the numbers that are shown in \Cref{tab:slope_ratios}. The sub- and super-scripts of each number describe a confidence interval of at least $98\%$, which we obtain in the following way. For each set of three $p_L$'s, on which we intend to fit a line, we sample around the corresponding $P_L(d)$ values using the probability distribution defined by \Cref{eq:P_min,eq:P_max}. We then transform these numbers using \Cref{eq:LER} and calculate the slope of the line fitted on them. We repeat this $100,000$-times, giving $100,000$ values for the slope of each code family. Sorting the values in ascending order and taking the $500$th and $99,500$th elements provides a $99\%$ confidence interval for each slope's (real) value. Denote the two endpoints of this interval by $s_{rtc}^{min}$ and $s_{rtc}^{max}$ for the RTC, and by $s_{dir}^{min}$ and $s_{dir}^{max}$ for the given family of directional codes. Note that the value of the slopes are negative, and as such, the squared ratios of the (real) slopes lie in the following interval
\begin{equation}
    \left(\frac{(s_{rtc}^{max})^2}{(s_{dir}^{min})^2},
    \frac{(s_{rtc}^{min})^2}{(s_{dir}^{max})^2}\right)
\end{equation}
with at least $98\%$ (which is $< (99\%)^2$) confidence.

Next, we present the numerical results for the directional codes that were only briefly considered in \Cref{sec:simulations}.

\subsection{Additional numerical results}
\label{sec:app-additional_results}

\begin{table}[ht]
\centering
\begin{tabular}{|Sc|Sc|Sc|Sc|Sc|Sc|Sc|} 
 \hline
 Code name & $[\![n,k,d]\!]$ & $d_{circ}$ & $\vec{v}_1$ & $\vec{v}_2$ & $\frac{k d^2}{n}$\\ [0.5ex]
 \hline
 \hline
 $NE^3N$ rectangular & $[\![36,4,4]\!]$ & $4$ & $(18,0)$ & $(0,4)$ & $1+\frac{7}{9}$ \\
 \hline
 $NE^3N$ filler & $[\![72,4,6]\!]$ & $5$ & $(18,0)$ & $(0,8)$ & $2$ \\
 \hline
 $NE^3N$ rectangular & $[\![144,4,8]\!]$ & $8$ & $(36,0)$ & $(0,8)$ & $1+\frac{7}{9}$ \\
 \hline
 \hline
 $N^2E^2N^2$ rotated & $[\![32,6,4]\!]$ & $4$ & $(-8,0)$ & $(-4,8)$ & $3$ \\
 \hline
 $N^2E^2N^2$ rectangular & $[\![48,6,4]\!]$ & $4$ & $(12,0)$ & $(0,8)$ & $2$ \\
 \hline 
 $N^2E^2N^2$ filler & $[\![96,6,6]\!]$ & $6$ & $(12,0)$ & $(0,16)$ & $2+\frac{1}{4}$ \\
 \hline 
 $N^2E^2N^2$ rotated & $[\![128,6,8]\!]$ & $8$ & $(-16,0)$ & $(-8,16)$ & $3$ \\
 \hline
 $N^2E^2N^2$ rectangular & $[\![192,6,8]\!]$ & $8$ & $(24,0)$ & $(0,16)$ & $2$ \\
 \hline
 \hline
 $N^2E^3N^2$ rotated & $[\![48,12,4]\!]$ & $4$ & $(12,0)$ & $(6,8)$ & $4$ \\
 \hline
 $N^2E^3N^2$ rectangular & $[\![72,12,4]\!]$ & $4$ & $(18,0)$ & $(0, 8)$ & $2+\frac{2}{3}$ \\
 \hline
 $N^2E^3N^2$ filler & $[\![144,12,6]\!]$ & $6$ & $(18,0)$ & $(0,16)$ & $3$ \\
 \hline
 $N^2E^3N^2$ rotated & $[\![192,12,8]\!]$ & $8$ & $(24,0)$ & $(12,16)$ & $4$ \\
 \hline
 $N^2E^3N^2$ rectangular & $[\![288,12,8]\!]$ & $8$ & $(36,0)$ & $(0,16)$ & $2+\frac{2}{3}$ \\
 \hline
 \hline
 $N^2E^4N^2$ rotated & $[\![64,18,4]\!]$ & $4$ & $(-8,8)$ & $(8,8)$ & $4+\frac{1}{2}$ \\
 \hline
 $N^2E^4N^2$ filler & $[\![192,18,6]\!]$ & $6$ & $(24,0)$ & $(0,16)$ & $3+\frac{3}{8}$ \\
 \hline
 $N^2E^4N^2$ rotated & $[\![256,18,8]\!]$ & $\leq 8$ & $(-16,16)$ & $(16,16)$ & $4+\frac{1}{2}$ \\
 \hline
\end{tabular}
\caption{A list of all benchmarked directional codes. Also listed are their code parameters $[\![n,k,d]\!]$, their circuit-level distance $d_{circ}$, their code efficiency $\frac{k d^2}{n}$, and the vectors $\vec{v}_1$ and $\vec{v}_2$ that specify the (possibly twisted) torus on which they are defined. All of these are compared against the RTC, TC and RPSC in \Cref{tab:slope_ratios}, and the $N^2E^3N^2$-codes were examined in detail in \Cref{sec:simulations}. In this section we present the additional plots for the rest of the codes. All distances were calculated using integer programming, except for the circuit-level distance of the distance $8$ rotated $N^2E^4N^2$-code where we only give an estimation. Among calculated distances, the $NE^3N$ filler is the only code instance in the table for which $d_{circ}$ differs from $d$. Note that for the rotated directional codes the vectors $\vec{v}_1$ and $\vec{v}_2$ given here differ from the ones specified in their definition in \Cref{sec:code-construct}. They always define an equivalent parallelogram and hence the same twisted torus and CSS code, see \Cref{prop:equiv_par}. The vectors shown here are the ones that we used for constructing the circuits and representing the numerical data with in \cite{our_circuits}.}
\label{tab:code_list}
\end{table}

Recall that in \Cref{sec:code-construct,sec:simulations} we mainly focused on $N^2E^3N^2$-codes, and presented detailed plots only for them. The other directional codes, namely, the $NE^3N$-codes encoding four logical qubits, the $N^2E^2N^2$-codes encoding six logical qubits, and the $N^2E^4N^2$-codes encoding eighteen logical qubits, were only tangentially discussed. In this section we present the rest of the detailed plots for these additional directional codes. In \Cref{tab:code_list} we listed all directional codes that we benchmarked, and recall that \Cref{tab:slope_ratios} compares all of them to the RTC, TC and RPSC at $p=10^{-3}$ physical error rate. We note that for all directional codes in \Cref{tab:code_list} we calculated the code distance and also the circuit-level distance using integer programming, except for the largest $N^2E^4N^2$-code. We did this using quantum memory circuits with one QEC round. We point out that in directional code circuits we reverse every other QEC round, therefore they fall under the so-called ``two-round morphing circuits'' category from \cite{Mac-morphing-2}. As such \cite[Proposition 1]{Mac-morphing-2} guarantees that the circuit-level distance obtained this way equals the circuit-level distance we would obtain with $d$ QEC rounds.

\begin{figure}[!htb]
     \centering
     \begin{subfigure}{0.37\textwidth}
         \centering
         \includegraphics[width=1.0\linewidth]{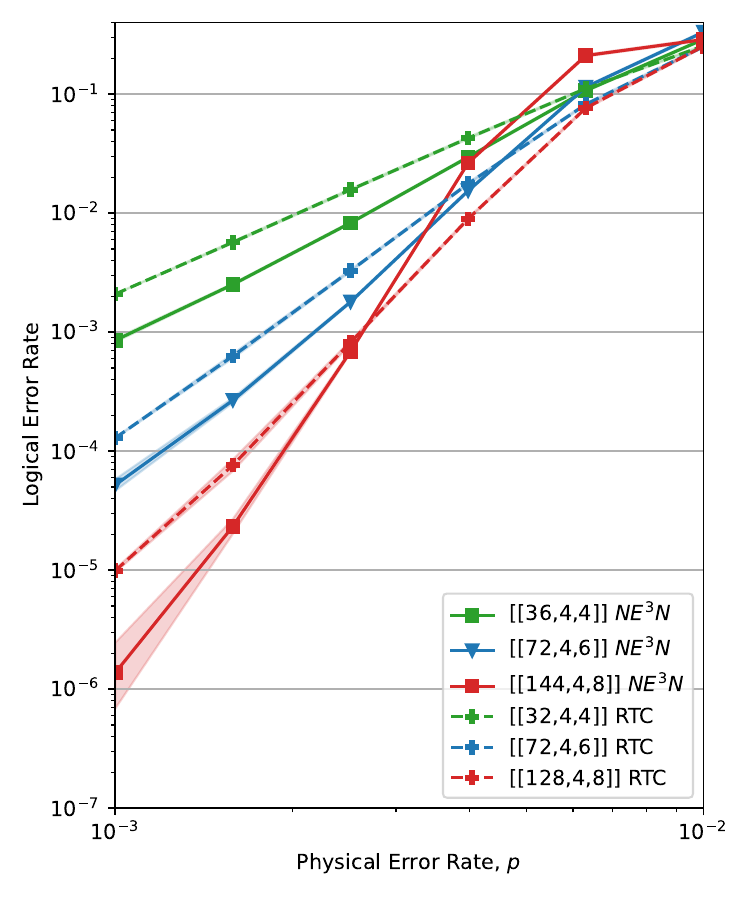}
         \caption{rectangular $NE^3N$ vs RTC}
         \label{fig:NE3N_rectangular_vs_RTC}
     \end{subfigure}
     \begin{subfigure}{0.59\textwidth}
         \centering
         \includegraphics[width=1.0\linewidth]{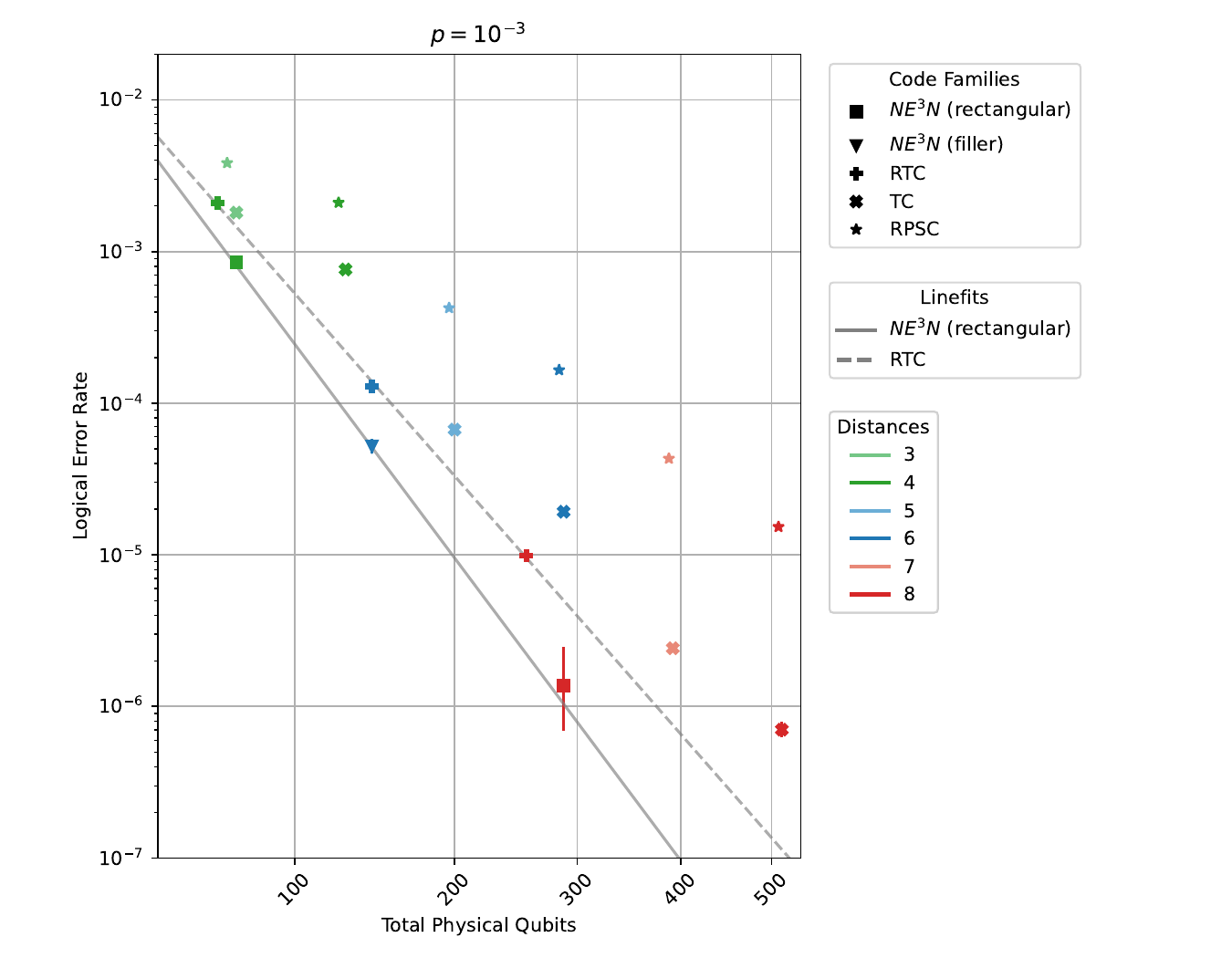}
         \caption{$NE^3N$ vs three types of surface codes at $p=10^{-3}$}
         \label{fig:NE3N_fixed_p}
     \end{subfigure}
     \caption{(a) Comparison of the rectangular ($d=4,8$) and the filler ($d=6$) $NE^3N$-codes against two copies of the RTC, giving four logical qubits for both. (b) Comparison of the three $NE^3N$-codes against the RTC, TC and RPSC where we fix $p=10^{-3}$.}
     \label{fig:NE3N_Results}
\end{figure}

In \Cref{fig:NE3N_rectangular_vs_RTC,fig:N2E2N2_rotated_vs_RTC,fig:N2E2N2_rectangular_vs_RTC,fig:N2E4N2_rotated_vs_RTC} we compare the relevant directional code families against the RTC, in a similar way as was done in \Cref{fig:N2E3N2_rotated_vs_RTC,fig:N2E3N2_rectangular_vs_RTC}. Furthermore, in \Cref{fig:NE3N_fixed_p,fig:N2E2N2_fixed_p,fig:N2E4N2_fixed_p} we make a similar comparison to that in \Cref{fig:N2E3N2_fixed_p}. We use the same convention for markers, line-styles, and colouring. The number of copies for the RTC is always taken so that it matches the number of logical qubits of the directional codes.

\begin{figure}[!htb]
     \centering
     \begin{subfigure}{0.45\textwidth}
         \centering
         \includegraphics[width=1.0\linewidth]{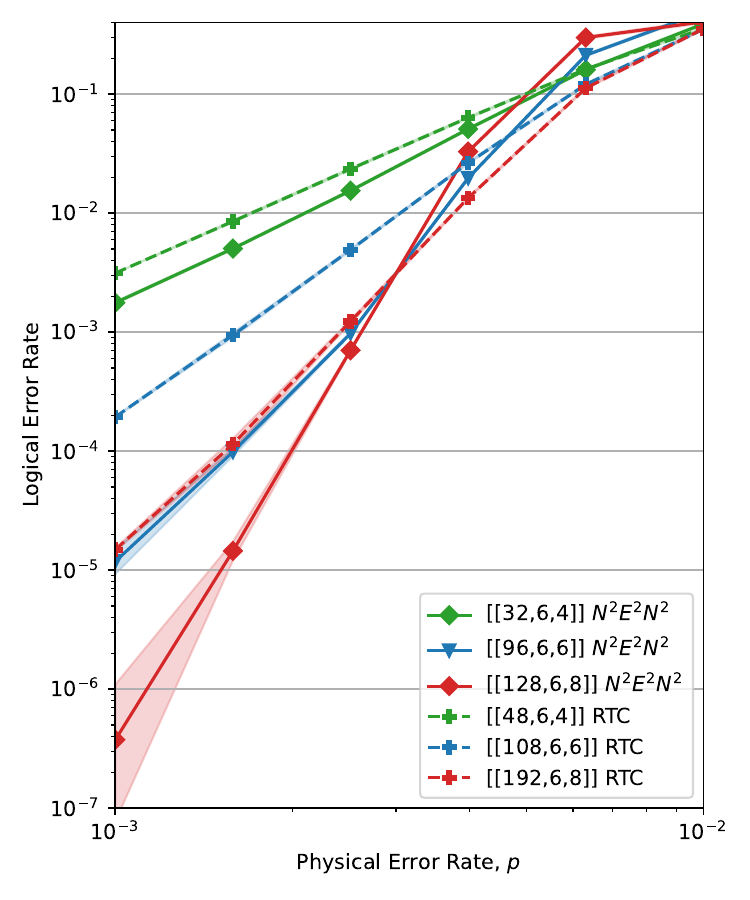}
         \caption{rotated $N^2E^2N^2$ vs RTC}
         \label{fig:N2E2N2_rotated_vs_RTC}
     \end{subfigure}
     \begin{subfigure}{0.45\textwidth}
         \centering
         \includegraphics[width=1.0\linewidth]{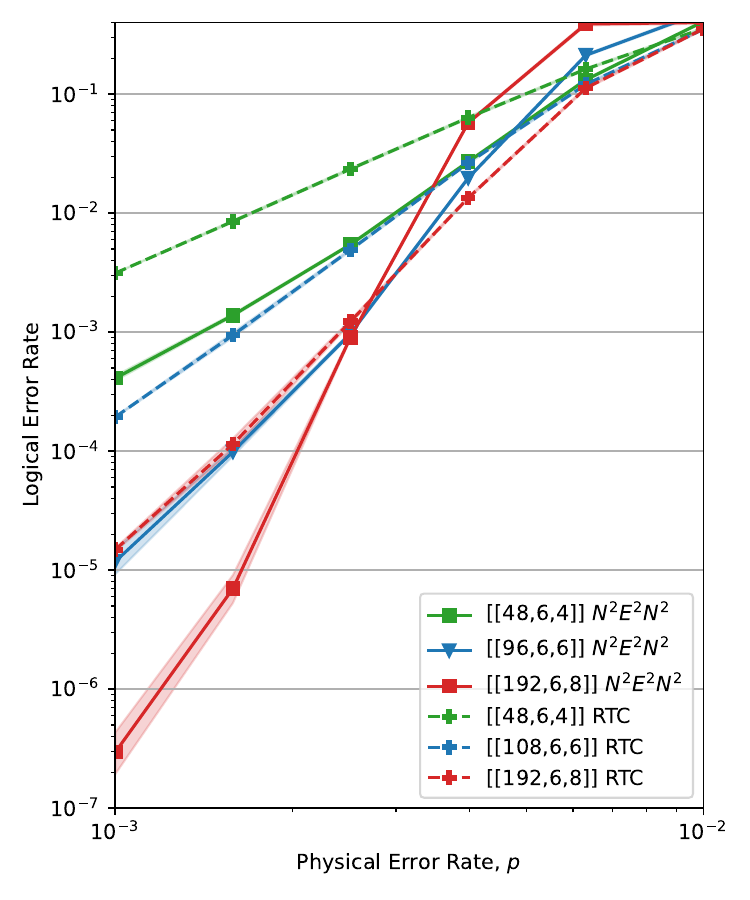}
         \caption{rectangular $N^2E^2N^2$ vs RTC}
         \label{fig:N2E2N2_rectangular_vs_RTC}
     \end{subfigure}
     \caption{Comparison of $N^2E^2N^2$-codes against three copies of the RTC, giving six logical qubits for both. (a) The rotated ($d=4,8$) and the filler ($d=6$) $N^2E^2N^2$-codes compared against the RTC. (b) The rectangular ($d=4,8$) and the filler ($d=6$) $N^2E^2N^2$-codes compared against the RTC.}
     \label{fig:N2E2N2_vs_RTC}
\end{figure}

\begin{figure}[!htb]
    \centering
    \includegraphics[width=0.65\linewidth]{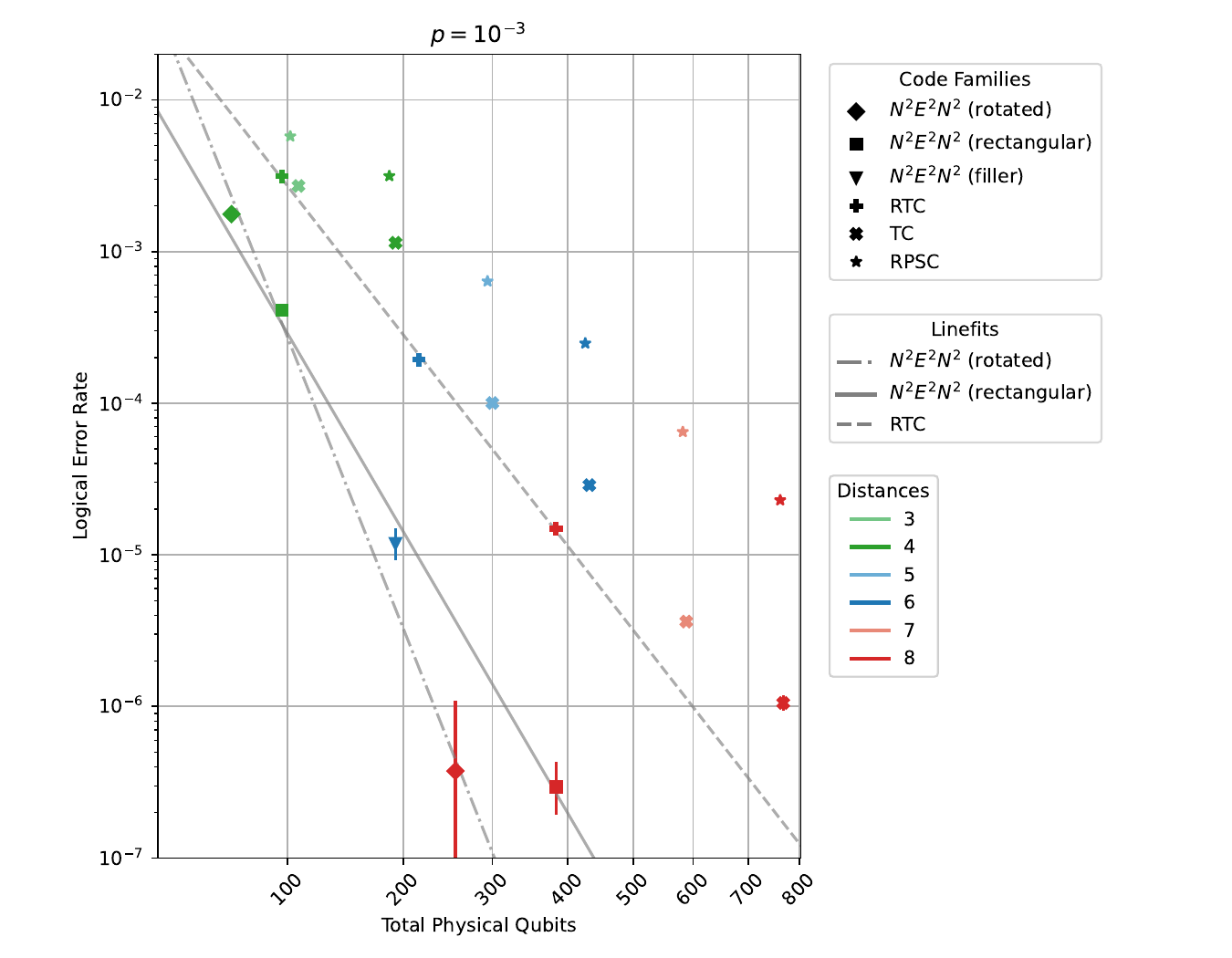}
    \caption{Comparison of all five $N^2E^2N^2$-codes against the RTC, TC and RPSC where we fix $p=10^{-3}$.}
    \label{fig:N2E2N2_fixed_p}
\end{figure}

\begin{figure}[!htb]
     \centering
     \begin{subfigure}{0.37\textwidth}
         \centering
         \includegraphics[width=1.0\linewidth]{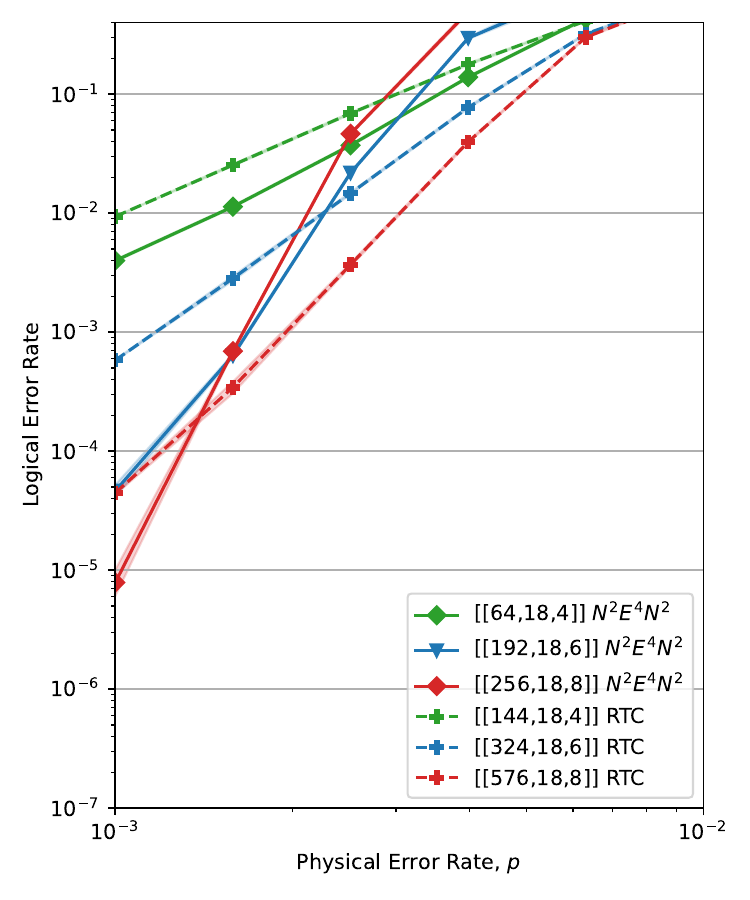}
         \caption{rotated $N^2E^4N^2$ vs RTC}
         \label{fig:N2E4N2_rotated_vs_RTC}
     \end{subfigure}
     \begin{subfigure}{0.59\textwidth}
         \centering
         \includegraphics[width=1.0\linewidth]{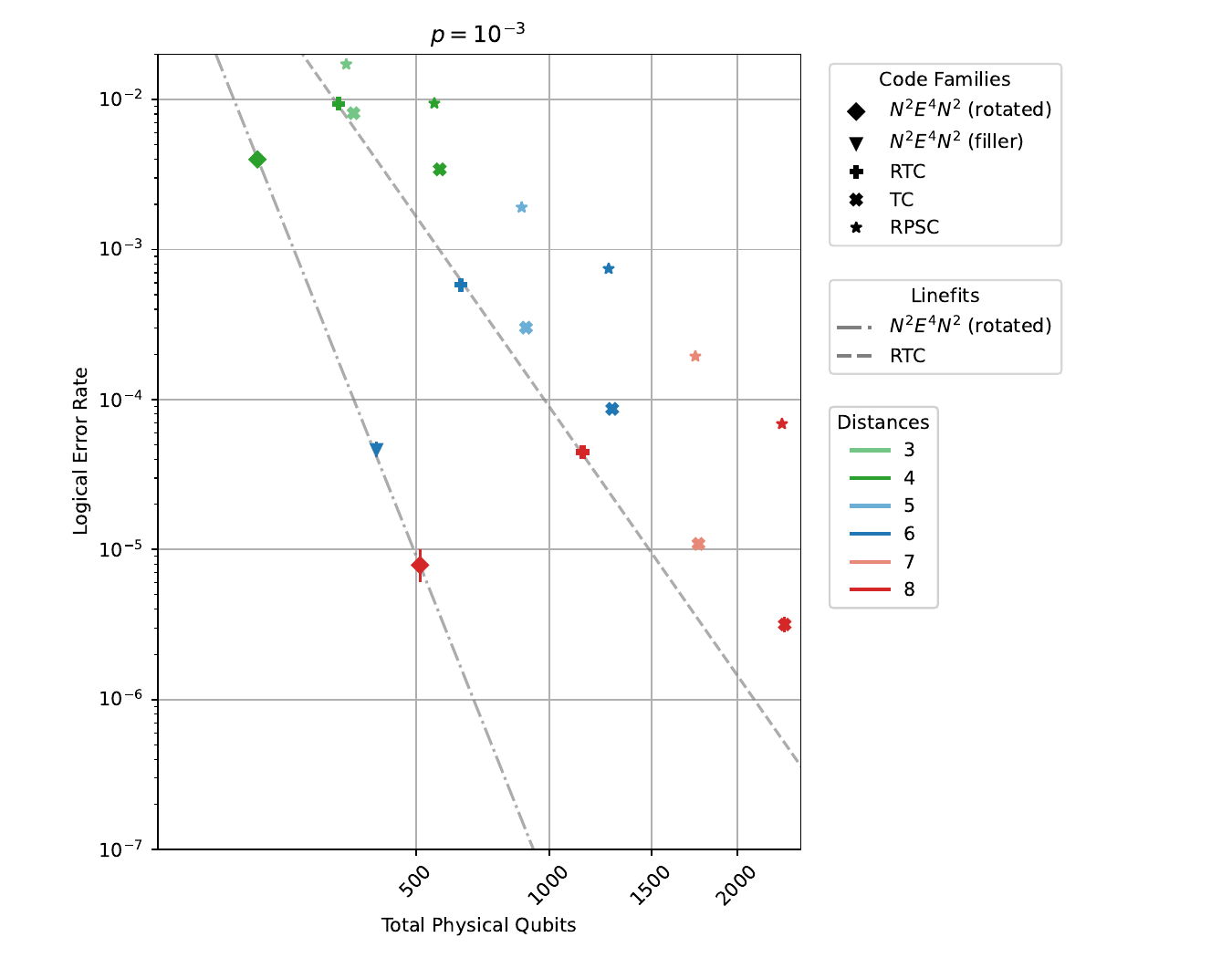}
         \caption{$N^2E^4N^2$ vs three types of surface codes at $p=10^{-3}$}
         \label{fig:N2E4N2_fixed_p}
     \end{subfigure}
     \caption{(a) Comparison of the rotated ($d=4,8$) and the filler ($d=6$) $N^2E^4N^2$-codes against nine copies of the RTC, giving eighteen logical qubits for both. (b) Comparison of the three $N^2E^4N^2$-codes against the RTC, TC and RPSC where we fix $p=10^{-3}$.}
     \label{fig:N2E4N2_Results}
\end{figure}

\subsection{The impact of the beam setting in the Tesseract decoder for the Toric Code}
\label{sec:app-short_beam_big_circuits}
\begin{figure}[!htb]
    \centering
    \includegraphics[width=0.75\linewidth]{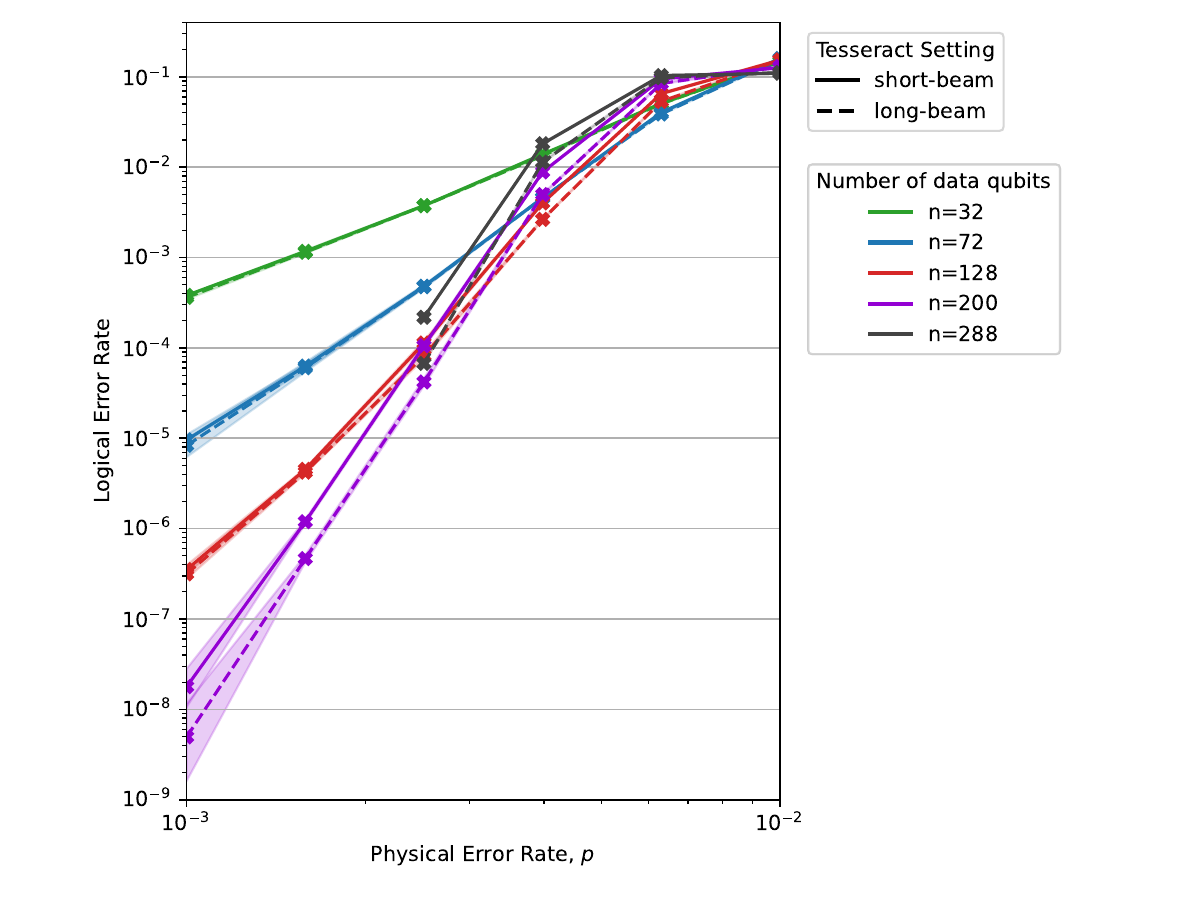}
    \caption{Comparison of the TC's QEC performance with Tesseract's short and long beam settings.}
    \label{fig:TC_short_vs_long_beam}
\end{figure}

As was mentioned in \Cref{sec:simulations}, the gap between the larger subsequent distance $N^2E^3N^2$ directional codes is smaller in \Cref{fig:N2E3N2_vs_RTC} than between that of the lower distance codes. A similar behaviour can be observed for the other directional codes. One possible reason for this is the use of the ``short beam'' setting for the Tesseract decoder \cite{tesseractdecoder}. To demonstrate this, we compare in \Cref{fig:TC_short_vs_long_beam} the performance of the ``short beam'' and ``long beam'' settings of the Tesseract decoder using the exact same TC circuits. More precisely, for $d=4,6,8,10,12$, or equivalently, $n=32,72,128,200,288$, we benchmarked a quantum memory circuit with $d$ QEC rounds. It is clearly visible that at lower $n$ the performance is almost identical. However, as $n$ increases to $200$ and beyond, a visible performance difference emerges where the long beam setting outperforms the short beam setting. Note that for the $n=288$ case, we only simulated for $p\geq 10^{-2.6}$ due to restrictions on compute usage. Based on this, we suspect that in general as circuits get larger, the performance difference between the two beam settings also becomes more pronounced. This effect may additionally impact performance comparisons across code families. In particular, since the size of the circuit for a directional code of a given distance tends to be much larger than that of a single copy of the RTC, TC and RPSC of the same distance, the comparison values from \Cref{tab:slope_ratios} may be improved by using a different beam setting of Tesseract, or another decoder.

\end{document}